\documentclass[12pt]{article}

\usepackage{times}

\usepackage{algorithm}

\usepackage[noend]{algorithmic}

\usepackage{amssymb,amsmath,amsfonts,amsthm}

\usepackage{epsfig,subfigure}

\usepackage{latexsym,enumerate}

\usepackage{fullpage}

\usepackage{verbatim}

\newcommand{\E}{\ensuremath{\mathcal{E}}}

\newcommand{\V}{\ensuremath{\mathcal{V}}}

\newcommand{\Species}{\ensuremath{S}}

\newcommand{\Leaves}{\ensuremath{\mathcal{L}}}
\newcommand{\FullRef}{\ensuremath{\mathcal{F}}}
\newcommand{\DiffRes}{\ensuremath{\mathcal{D}}}
\newcommand{\SameRes}{\ensuremath{\mathcal{S}}}
\newcommand{\UnRes}{\ensuremath{\mathcal{U}}}
\newcommand{\Res}{\ensuremath{\mathcal{R}}}
\newcommand{\TripletMap}{\ensuremath{\mathcal{M}}}

\newcommand{\Prof}{\ensuremath{\mathcal{P}}}

\newcommand{\Phyls}{\ensuremath{P}}
\newcommand{\RPhyls}{\ensuremath{RP}}

\DeclareMathOperator{\pa}{\textit{pa}}%
\DeclareMathOperator{\ch}{\textit{Ch}}
\DeclareMathOperator{\rt}{\textit{rt}}%
\DeclareMathOperator{\adj}{\textit{adj}}

\def\argmin{\mathop{\rm arg\,min}}

\newtheorem{theorem}{Theorem}[section]

\newtheorem{lemma}{Lemma}[section]
\newtheorem{proposition}{Proposition}[section]

\begin{document}

\setcounter{footnote}{1}

\title{Comparing and Aggregating Partially Resolved Trees\thanks{An extended abstract of this paper was presented at the 8th Latin American Symposium on Theoretical Informatics, B\'uzios, Brazil.}}

\author{Mukul S. Bansal\thanks{Department of Computer Science, Iowa State
University, Ames, IA 50011, USA.  Email:
\texttt{\{bansal, jdong, fernande\}@iastate.edu}.  The authors were
supported in part by National Science Foundation AToL grants
DEB-0334832 and DEB-0829674.} \and Jianrong Dong\footnotemark[\value{footnote}]
\and David Fern\'andez-Baca\footnotemark[\value{footnote}] }

\date{}

\maketitle


\begin{abstract}
We define, analyze, and give efficient algorithms for two kinds of
distance measures for rooted and unrooted phylogenies.  For rooted
trees, our measures are based on the topologies the input trees
induce on \emph{triplets}; that is, on three-element subsets of
the set of species.  For unrooted trees, the measures are based on
\emph{quartets} (four-element subsets). Triplet and quartet-based
distances provide a robust and fine-grained measure of the
similarities between trees.  The distinguishing feature of our
distance measures relative to traditional quartet and triplet
distances is their ability to deal cleanly with the presence of
unresolved nodes, also called polytomies. For rooted trees, these
are nodes with more than two children; for unrooted trees, they
are nodes of degree greater than three.

Our first class of measures are parametric distances, where there
is a parameter that weighs the difference between an unresolved
triplet/quartet topology and a resolved one.  Our second class of
measures are based on Hausdorff distance.  Each tree is viewed as
a set of all possible ways in which the tree could be refined to
eliminate unresolved nodes.  The distance between the original
(unresolved) trees is then taken to be the Hausdorff distance
between the associated sets of fully resolved trees, where the
distance between trees in the sets is the triplet or quartet
distance, as appropriate.

\paragraph{Keywords.}  Aggregation, Hausdorff distance, phylogenetic trees, quartet distance, triplet distance.

\end{abstract}

%
%

\section{Introduction}

Evolutionary trees, also known as phylogenetic trees or
phylogenies, represent the evolutionary history of sets of
species.  Such trees have uniquely labeled leaves, corresponding
to the species, and unlabeled internal nodes, representing
hypothetical ancestors. The trees can be either rooted, if the
evolutionary origin is known, or unrooted, otherwise.

This paper addresses two related questions: (1) How does one
measure how close two evolutionary trees are to each other? (2)
How does one combine or \emph{aggregate} the phylogenetic
information from conflicting trees into a single \emph{consensus
tree}?  Among the motivations for the first question is the growth
of phylogenetic databases, such as TreeBase \cite{TreeBASE}, with
the attendant need for sophisticated querying mechanisms and for
means to assess the quality of answers to queries.  The second
question arises from the fact that phylogenetic analyses --- e.g.,
by parsimony \cite{Felsenstein03} --- typically produce multiple
evolutionary trees (often in the thousands) for the same set of
species. Another motivation is the ongoing effort to assemble the
tree of life by piecing together phylogenies for subsets of
species \cite{DriskellAneBurleighMcMahonOMearaSanderson04}.

We address the above questions by defining appropriate
\emph{distance measures} between trees.  While several such
measures have been proposed before (see below), ours provide a
feature that previous ones do not: The ability to deal cleanly
with the presence of \emph{unresolved} nodes, also called
\emph{polytomies}.   For rooted trees these are nodes with more
than two children; for unrooted trees, they are nodes of degree
greater than three.  Polytomies cannot simply be ignored, since
they arise naturally in phylogenetic analysis. Furthermore, they
must be treated with care: A node may be unresolved because it
truly must be so or because there is not enough evidence to break
it up into resolved nodes
--- that is, the polytomies are either ``hard'' or ``soft''
\cite{Maddison89}.

\paragraph{Our contributions.}
We define and analyze two kinds of distance measures for
phylogenies.  For rooted trees, our measures are based on the
topologies the input trees induce on \emph{triplets}; that is, on
three-element subsets of the set of species.  For unrooted trees,
the measures are based on \emph{quartets} (four-element subsets).
Our approach is motivated by the observation that triplet and
quartet topologies are the basic building blocks of rooted and
unrooted trees, in the sense that they are the smallest
topological units that completely identify a phylogenetic tree
\cite{SempleSteel03}. Triplet and quartet-based distances thus
provide a robust and fine-grained measure of the differences and
similarities between trees\footnote{Biologically-inspired
arguments in favor of triplet-based measures can be found in
\cite{CottonSlaterWilkinson06}.}.  In contrast with traditional
quartet and triplet distances, our two classes of distance
measures deal cleanly with the presence of unresolved nodes.  Each
of them does so in a different way.

The first kind of measures we propose are \emph{parametric
distances}:  Given a triplet (quartet) $X$, we compare the
topologies that each of the two input trees induces on $X$.  If
they are identical, the contribution of $X$ to the distance is
zero.  If both topologies are fully resolved but different, then
the contribution is one. Otherwise, the topology is resolved in
one of the trees, but not the other. In this case, $X$ contributes
$p$ to the distance, where $p$ is a real number between $0$ and
$1$. Parameter $p$ allows one to make a smooth transition between
hard and soft views of polytomy. At one extreme, if $p = 1$, an
unresolved topology is viewed as different from a fully resolved
one.  At the other, when $p = 0$, unresolved topologies are viewed
as identical to resolved ones.  Intermediate values of $p$ allow
one to adjust for the degree of certainty one has about a
polytomy.

The second kind of measures proposed here are based on viewing
each tree as a set of all possible fully resolved trees that can
be obtained from it by refining its unresolved nodes. The distance
between two trees is defined as the Hausdorff distance between the
corresponding sets\footnote{Informally, two sets $A$ and $B$ are
at Hausdorff distance $\tau$ of each other if each element of $A$
is within distance $\tau$ of $B$ and vice-versa. For a formal
definition, see Section~\ref{sec:distances}.}, where the distance
between trees in the sets is the triplet or quartet distance, as
appropriate.

After defining our distance measures, we proceed to study their
mathematical and algorithmic properties. We obtain exact and
asymptotic bounds on expected values of parametric triplet
distance and parametric quartet distance.
We also study for which values of $p$, parametric triplet and
quartet distances are metrics, \emph{near-metrics} (in the sense
of \cite{FaginKumarMahdianSivakumarVee06}), or non-metrics.

Aside from the mathematical elegance that metrics and near-metrics
bring to tree comparison, there are also algorithmic benefits. We
formulate phylogeny aggregation as a \emph{median} problem, in
which the objective is to find a consensus tree whose total
distance to the given trees is minimized. We do not know whether
finding the median tree relative to parametric (triplet or
quartet) distance is NP-hard, but conjecture that it is.  This is
suggested by the NP-completeness of the \emph{maximum triplet
compatibility problem}\footnote{The input to this problem consists
of a set of trees, each of which has three leaves; the leaf sets
of these trees may not be identical. The question is to find the
largest subset of these triplet trees such that all of the trees
are consistent with a single tree $T$ whose leaf set is the union
of the leaves of the input triplet trees.} \cite{Bryant97}.
However, by the results mentioned above and well-known facts about
the median problem \cite{Vazirani01}, there are simple
constant-factor approximation algorithms for the aggregation of
rooted and unrooted trees relative to parametric distance: Simply
return the input tree with minimum distance to the remaining input
trees. We show that there are values of $p$ for which parametric
distance is a metric, but the median tree may not be fully
resolved even if all the input trees are. However, beyond a
threshold, the median tree is guaranteed to be fully resolved if
the input trees are fully resolved.

A natural problem is whether Hausdorff triplet (quartet) distance
between two trees can be computed in polynomial time. We suspect
that computing Hausdorff triplet (quartet) distance is NP-hard.
However, even if this were so, we show that one can partially
circumvent the issue by proving that, under a certain density
assumption, Hausdorff distance is within a constant factor of
parametric distance --- that is, the measures are
\emph{equivalent} in the sense of
\cite{FaginKumarMahdianSivakumarVee06}.

Finally, we present a $O(n^2)$-time algorithm to compute
parametric triplet distance and a $O(n^2)$ 2-approximate algorithm
for parametric quartet distance. To our knowledge, there was no
previous algorithm for computing the parametric triplet distance
between two rooted trees, other than by enumerating all
$\Theta(n^3)$ triplets. Two algorithms exist that can be directly
applied to compute the parametric quartet distance (see also
\cite{BryantTsangKearneyLi00}).  One runs in time $O(n^2
\min\{d_1,d_2\})$, where, for $i \in \{1, 2\}$, $d_i$ is the
maximum degree of a node in $T_i$
\cite{ChristiansenMailundPedersenRandersStigStissing06}; the other
takes $O(d^9 n \log n)$ time, where $d$ is the maximum degree of a
node in $T_1$ and $T_2$ \cite{Stissing:et:al:07}.\footnote{Note
that the presence of unresolved nodes seems to complicate distance
computation. Indeed, the quartet distance between a pair of
\emph{fully resolved} unrooted trees can be obtained in $O(n \log
n)$ time \cite{BrodalFagerbergPedersen03}.} Our faster $O(n^2)$
algorithm offers a 2-approximate solution when an exact value of
the parametric quartet distance is not required.  Additionally,
our algorithm gives the exact answer when $p = \frac{1}{2}$.

\paragraph{Related work.}
Several other measures for comparing trees have been proposed; we
mention a few. A popular class of distances are those based on
symmetric distance between sets of \emph{clusters} (that is, on
sets of species that descend from the same internal node in a
rooted tree) or of \emph{splits} (partitions of the set of species
induced by the removal of an edge in an unrooted tree); the latter
is the well-known Robinson-Foulds (RF) distance
\cite{RobinsonFoulds81}. It is not hard to show that two rooted
(unrooted) trees can share many triplet (quartet) topologies but
not share a single cluster (split). Cluster- and split-based
measures are also coarser than triplet and quartet distances.

One can also measure the distance between two trees by counting
the number of \emph{branch-swapping} operations --- e.g.,
nearest-neighbor interchange or subtree pruning and regrafting
operations \cite{Felsenstein03} --- needed to convert one of the
trees into the other \cite{AllenSteel01}. However, the associated
measures can be hard to compute, and they fail to distinguish
between operations that affect many species and those that affect
only a few.  An alternative to distance measures are
\emph{similarity} methods such as maximum agreement subtree (MAST)
approach \cite{FindenGordon85}.  While there are efficient
algorithms for computing the MAST \cite{FarachThorup94}, the
measure is coarser than triplet-based distances.

There is an extensive literature on consensus methods for
phylogenetic trees. A non-exhaustive list of methods based on
splits or clusters includes strict consensus trees
\cite{McMorrisMeronkNeumann83}, majority-rule trees
\cite{BarthelemyMcMorris86}, and the Adams
consensus~\cite{Adams86}.  In \emph{local consensus} methods, the
goal is to find a consensus tree that satisfies a given set of
constraints on the topology of each triplet
\cite{KannanWarnowYooseph98}. For a thorough survey of these
methods, their properties and interrelationships, see
\cite{Bryant03}.

The fact that consensus methods tend to produce unresolved trees,
with an attendant loss of information, has been observed before.
An alternative approach is to provide multiple consensus trees,
instead of a single one. The idea, developed more fully in
\cite{StockhamWangWarnow02}, is to cluster the input trees using
some distance measure into groups, each of which is represented by
a single consensus tree, in such a way as to minimize some measure
of information loss. Our distance measures can be used within this
framework, where their fine-grained nature could conceivably offer
advantages over other techniques.

In addition to consensus methods, there are techniques that take
as input sets of quartet trees or triplet trees and try to find
large compatible subsets or subsets whose removal results in a
compatible set \cite{BerryJiangKearneyLiWareham99,SnirRao06}.
These problems are related to the \emph{supertree problem}, in
which a set of input trees that may not all share the same species
is given and the problem is to find a single tree that exhibits as
much as possible of the evolutionary relationships among the input
trees \cite{BinindaEmonds04}.  Thus, the consensus problem for
trees is a special case of the supertree problem.

The consensus problem on trees exhibits parallels with the
\emph{rank aggregation problem}, a problem with a rich history and
which has recently found applications to Internet search
\cite{AilonCharikarNewman05,BartholdiToveyTrick89,Critchlow80,DiaconisGraham77,Kemeny59,DworkKumarNaorSivakumar01,FaginKumarMahdianSivakumarVee06}.
Here, we are given a collection of rankings (that is,
permutations) of $n$ objects, and the goal is to find a ranking of
minimum total distance to the input rankings. A distance between
rankings of particular interest is \emph{Kendall's tau}, defined
as the number of pairwise disagreements between the two rankings.
Like triplet and quartet distances, Kendall's tau is based on
elementary ordering relationships. Rank aggregation under
Kendall's tau was shown to be NP-complete even for four lists by
Dwork et al.~\cite{DworkKumarNaorSivakumar01}.

A permutation is the analog of a fully resolved tree, since every
pairwise relationship between elements is given.  The analog to a
partially-resolved tree is a \emph{partial ranking}, in which the
elements are grouped into an ordered list of \emph{buckets}, such
that elements in different buckets have known ordering
relationships, but elements within a bucket are not ranked
\cite{FaginKumarMahdianSivakumarVee06}.  Our definitions of
parametric distance and Hausdorff distance are inspired by Fagin
et al.'s \emph{Kendall tau with parameter $p$} and their Hausdorff
version of Kendall's tau,
respectively~\cite{FaginKumarMahdianSivakumarVee06}. We note,
however, that aggregating partial rankings seems computationally
easier than the consensus problem on trees.  For example, while
the Hausdorff version of Kendall's tau has a simple and
easily-computable expression
\cite{Critchlow80,FaginKumarMahdianSivakumarVee06}, it is unclear whether the
Hausdorff triplet or quartet distances are polynomially-computable
for trees.

\paragraph{Organization of the paper.}
Section~\ref{sec:preliminaries} reviews basic notions in
phylogenetics and distances. Our distance measures and the
consensus problem are formally defined in
Section~\ref{sec:distances}.  The expected values of the distance measures are studied in Section~\ref{sec:expected}.  The basic properties of parametric
distance are proved in Section~\ref{sec:properties}.
Section~\ref{sec:relationships} studies the connection between
Hausdorff and parametric distances.  Section~\ref{sec:algorithms-triplets} gives efficient algorithms for computing parametric triplet distance.  A 2-approximation algorithm for parametric quartet distance is given in Section~\ref{sec:PQD}.

\section{Preliminaries}

\label{sec:preliminaries}

\paragraph{Phylogenies.}
By and large, we follow standard terminology (i.e., similar to
\cite{Bryant97} and \cite{SempleSteel03}).  We write $[N]$ to
denote the set $\{1,2, \dots, N\}$, where $N$ is a positive
integer.

Let $T$ be a rooted or unrooted tree.  We write $\V(T)$, $\E(T)$,
and $\Leaves(T)$ to denote, respectively, the node set, edge set,
and leaf set of $T$. A \emph{taxon} (plural \emph{taxa}) is some
basic unit of classification; e.g., a species. Let $\Species$ be a
set of taxa. A \emph{phylogenetic tree} or \emph{phylogeny} for
$\Species$ is a tree $T$ such that $\Leaves(T) = \Species$.
Furthermore, if $T$ is rooted, we require that every internal node
have at least two children; if $T$ is unrooted, every internal
node is required to have degree at least three.  We write
$\RPhyls(n)$ to denote the set of all rooted phylogenetic trees
over $\Species = [n]$ and $\Phyls(n)$ to denote the set of all
unrooted phylogenetic trees over $\Species = [n]$.

An internal node in a \emph{rooted} phylogeny is \emph{resolved}
if it has exactly two children; otherwise it is \emph{unresolved}.
Similarly, an internal node in an \emph{unrooted} phylogeny is
\emph{resolved} if it has degree three, and  \emph{unresolved}
otherwise.  Unresolved nodes in rooted and unrooted trees are also
referred to as \emph{polytomies} or \emph{multifurcations}.  A
phylogeny (rooted or unrooted) is \emph{fully resolved} if all its
internal nodes are resolved.  A \emph{fan} is a completely unresolved phylogeny; i.e., it contains a single internal node, to which all leaves are connected (if the phylogeny is rooted, this internal node is the root).

A \emph{contraction} of a phylogeny $T$ is obtained by deleting an
internal edge and identifying its endpoints.  A phylogeny $T_2$ is
a \emph{refinement} of phylogeny $T_1$, denoted $T_1 \preceq T_2$,
if and only if $T_1$ can be obtained from $T_2$ through $0$ or
more contractions.  Tree $T_2$ is a \emph{full refinement} of
$T_1$ if $T_1 \preceq T_2$ and $T_2$ is fully resolved.  We write
$\FullRef(T)$ to denote the set of all full refinements of $T$.

Let $X$ be a subset of $\Leaves(T)$ and let $T[X]$ denote the
minimal subtree of $T$ having $X$ as its leaf set.  The
\emph{restriction} of $T$ to $X$, denoted $T|X$, is the phylogeny
for $X$ defined as follows.  If $T$ is unrooted, then $T|X$ is the
tree obtained from $T[X]$ by suppressing all degree-two nodes.  If
$T$ is rooted, $T|X$ is obtained from $T[X]$ by suppressing all
degree-two nodes except for the root.

A \emph{triplet} is a three-element subset of $\Species$.  A
\emph{triplet tree} is a rooted phylogeny whose leaf set is a
triplet. The triplet tree with leaf set $\{a, b, c\}$ is denoted
by $a|bc$ if the path from $b$ to $c$ does not intersect the path
from $a$ to the root. A \emph{quartet} is a four-element subset of
$\Species$ and a \emph{quartet tree} is an unrooted phylogeny
whose leaf set is a quartet. The quartet tree with leaf set $\{a,
b, c, d\}$ is denoted by $ab|cd$ if the path from $a$ to $b$ does
not intersect the path from $c$ to $d$.  A triplet (quartet) $X$
is said to be \emph{resolved} in a phylogenetic tree $T$ over
$\Species$ if $T|X$ is fully resolved; otherwise, $X$ is
\emph{unresolved}.

Finally, we introduce notation for certain useful subtrees of a tree $T$.   Suppose $T$ is rooted and $v$ is a node
in $T$. Then, $T(v)$ denotes the subtree of $T$ rooted at $v$.
Suppose $T$ is unrooted and $\{u, v\}$ is an edge in $T$. Removal of edge $\{u,v\}$ splits the tree $T$ into two subtrees.
We denote the subtree that contains node $u$ by $T(u, v)$, and the subtree that contains $v$ by $T(v, u)$.

\paragraph{Distance measures, metrics, and near-metrics.}
A \emph{distance measure} on a set $D$ is a binary function $d$ on
$D$ satisfying the following three conditions: (i) $d(x,y) \geq 0$
for all $x, y \in D$; (ii) $d(x, y) = d(y, x)$ for all $x, y \in
D$; and (iii) $d(x, y) = 0$ if and only if $x = y$.  Function $d$
is a \emph{metric} if, in addition to being a distance measure, it
satisfies the triangle inequality; i.e., $d(x, z) \leq d(x, y) +
d(y, z)$ for all $x, y, z \in D$. Distance measure $d$ is a
\emph{near-metric} if there is a constant $c$, independent of the
size of $D$, such that $d$ satisfies the \emph{relaxed polygonal
inequality}: $d(x, z) \leq c(d(x, x_1) + d(x_1, x_2) + \dots +
d(x_{n-1}, z))$ for all $n > 1$ and $x, z, x_1, \dots , x_{n-1}
\in D$ \cite{FaginKumarMahdianSivakumarVee06}.  Two distance
measures $d$ and $d'$ with domain $D$ are \emph{equivalent} if
there are constants $c_1, c_2 > 0$ such that $c_1 d'(x,y) \leq
d(x,y) \leq c_2 d'(x,y)$ for every pair $x, y \in D$
\cite{FaginKumarMahdianSivakumarVee06}.

\section{Distance measures for phylogenies}

\label{sec:distances}

Here we define the distance measures for rooted and unrooted trees to be studied in the rest of the paper.  We use essentially the same notation for the rooted tree measures as for the unrooted tree measures.   We do so  because the concepts for each case are close analogs of those for the other, the key difference being the use  of triplets in one setting (rooted trees) and of quartets in the other (unrooted trees).  It will be easy to distinguish between the two settings by simply specifying the context in which the measures are being applied.   Our notation has the benefits of reducing repetitiveness and of allowing us to avoid excessive use of subscripts and superscripts.

Let $T_1$ and $T_2$ be any two rooted (respectively, unrooted)
phylogenies over taxon set $[n]$.  Define the
following five sets of triplets (quartets) over $[n]$.
\begin{description}
\item[$\SameRes(T_1,T_2)$:]  The set of all
triplets (quartets) $X$ such that $T_1 | X$ and $T_2|X$ are fully
resolved, and $T_1 | X = T_2|X$.
\item[$\DiffRes(T_1,T_2)$:]  The set of all
triplets (quartets) $X$ such that $T_1 | X$ and $T_2|X$ are fully
resolved, and $T_1 | X \neq T_2|X$.
\item[$\Res_1(T_1,T_2)$:]  The set of all
triplets (quartets) $X$ such that $T_1 | X$ is fully resolved, but
$T_2|X$ is not.
\item[$\Res_2(T_1,T_2)$:]  The set of all
triplets (quartets) $X$ such that $T_2 | X$ is fully resolved, but
$T_1|X$ is not.
\item[$\UnRes(T_1,T_2)$:]  The set of all
triplets (quartets) $X$ such that $T_1 | X$ and $T_2|X$ are
unresolved.
\end{description}

Let $p$ be a real number in the interval $[0,1]$.  The
\emph{parametric triplet (quartet) distance between $T_1$ and
$T_2$} is defined as\footnote{Note that the sets
$\SameRes(T_1,T_2)$ and $\UnRes(T_1,T_2)$ are not used in the
definition of $d^{(p)}$, but are needed for other purposes.}
\begin{equation}\label{eqn:param_tree_dist}
d^{(p)}(T_1,T_2) = |\DiffRes(T_1,T_2)| + p \left (
|\Res_1(T_1,T_2)| + |\Res_2(T_1,T_2)| \right ).
\end{equation}
When the domain of $d^{(p)}$ is restricted to fully resolved
trees, and thus $\Res_1(T_1,T_2) = \Res_2(T_1,T_2) =
\UnRes(T_1,T_2) = \emptyset$, we refer to it simply as the
\emph{triplet (quartet) distance}.

Parameter $p$ allows one to make a smooth transition from soft to
hard views of polytomy:  When $p = 0$, resolved triplets
(quartets) are treated as equal to unresolved ones, while when
$p=1$, they are treated as being completely different. Choosing
intermediate values of $p$ allows one to adjust for the amount of
evidence required to resolve a polytomy\footnote{We note that
parametric triplet/quartet distance is a \emph{profile-based
metric}, in the sense of \cite{FaginKumarMahdianSivakumarVee06}.
However, the use of the word ``profile'' in
\cite{FaginKumarMahdianSivakumarVee06} is quite different from our
use of the term.}.

An alternative distance measure (inspired by
References~\cite{FaginKumarMahdianSivakumarVee06,Critchlow80}), is the
\emph{Hausdorff distance}, defined as follows. Let $d$ be a metric
over fully resolved trees. Metric $d$ is extended to partially
resolved trees as follows.
\begin{equation}\label{eqn:hausdorff}
d_\text{Haus}(T_1,T_2) = \max \left\{ \max_{t_1 \in \FullRef(T_1)}
\min_{t_2 \in \FullRef(T_2)} d(t_1,t_2), \max_{t_2 \in
\FullRef(T_2)} \min_{t_1 \in \FullRef(T_1)} d(t_1,t_2) \right\}
\end{equation}
When $d$ is the triplet (quartet) distance, $d_\text{Haus}$ is
called the \emph{Hausdorff triplet (quartet) distance}.

Definition \eqref{eqn:hausdorff} requires some explanation. The
quantity $\min_{t_2 \in \FullRef(T_2)} d(t_1,t_2)$ is the distance
between $t_1$ and the set of full refinements of $T_2$. Hence,
\[\max_{t_1 \in \FullRef(T_1)} \min_{t_2 \in \FullRef(T_2)}
d(t_1,t_2)\] is the maximum distance between a full refinement of
$T_1$ and the set of full refinements of $T_2$. Similarly,
\[\max_{t_2 \in \FullRef(T_2)} \min_{t_1 \in \FullRef(T_1)}
d(t_1,t_2)\] is the maximum distance between a full refinement of
$T_2$ and the set of full refinements of $T_1$. Therefore, $T_1$
and $T_2$ are at Hausdorff distance $r$ of each other if every
full refinement of $T_1$ is within distance $r$ of a full
refinement of $T_2$ and vice-versa.

\paragraph{Aggregating phylogenies.}
Let $k$ be a positive integer and $\Species$ be a set of taxa.  A
\emph{profile of length $k$} (or simply a \emph{profile}, when $k$
is understood from the context) is a mapping $\Prof$ that assigns
each $i \in [k]$ a phylogenetic tree $\Prof(i)$ over $\Species$.
We refer to these trees as \emph{input trees}.  A \emph{consensus
rule} is a function that maps a profile $\Prof$ to some
phylogenetic tree $T$ over $\Species$ called a \emph{consensus
tree}.

Let $d$ be a distance measure whose domain is the set of
phylogenies over $\Species$. We extend $d$ to define a distance
measure from profiles to phylogenies as $d(T,\Prof) = \sum_{i=1}^k
d(T, \Prof(i)).$  A consensus rule is a \emph{median rule} for $d$
if for every profile $\Prof$ it returns a phylogeny $T^*$ of
minimum distance to $\Prof$; such a $T^*$ is called a
\emph{median}.  The problem of finding a median for a profile with
respect to a distance measure $d$ is referred to as the
\emph{median problem} (relative $d$), or as the \emph{aggregation
problem}.

\section{Expected parametric triplet and quartet distances}\label{sec:expected}

We now consider the expected value of parametric triplet and
quartet distances. Let $u(n)$ and $r(n)$ denote the probabilities
that a given quartet is, respectively, unresolved or resolved in
an unrooted phylogeny chosen uniformly at random from $\Phyls(n)$;
thus, $u(n) = 1 - r(n)$.  The following are the two main results of this section.

\begin{theorem}\label{thm:unrooted_distrib}
Let $T_1$ and $T_2$ be two unrooted phylogenies chosen uniformly
at random with replacement from $\Phyls(n)$.  Then,
\begin{equation}
E(d^{(p)}
(T_1,T_2) ) = {n \choose 4} \cdot \left ( \frac{2}{3} \cdot r(n)^2
+ 2 \cdot p \cdot r(n) \cdot u(n) \right ) .
\end{equation}
\end{theorem}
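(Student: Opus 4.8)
The plan is to compute the expectation by exploiting linearity over the $\binom{n}{4}$ quartets. For each quartet $X$, its contribution to $d^{(p)}(T_1,T_2)$ is $1$ if $X$ falls in $\DiffRes(T_1,T_2)$, is $p$ if $X$ falls in $\Res_1 \cup \Res_2$, and is $0$ otherwise. Hence I would write $E(d^{(p)}(T_1,T_2)) = \sum_{X} \bigl( \Pr[X \in \DiffRes] + p \cdot \Pr[X \in \Res_1 \cup \Res_2] \bigr)$, and observe that by symmetry (both $T_1, T_2$ are drawn from the same uniform distribution, and all quartets are interchangeable under relabeling of the taxa) these probabilities do not depend on which particular quartet $X$ we pick. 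So the sum collapses to $\binom{n}{4}$ times the per-quartet probabilities, and the whole problem reduces to analyzing a single fixed quartet.

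Next I would compute the two per-quartet probabilities. The key observation is that, since $T_1$ and $T_2$ are drawn independently, the events governing the topology of $T_1|X$ and $T_2|X$ are independent. For a single tree drawn uniformly from $\Phyls(n)$, the quartet $X$ is resolved with probability $r(n)$ and unresolved with probability $u(n) = 1 - r(n)$, by definition of these quantities. Conditioned on both $T_1|X$ and $T_2|X$ being resolved (which happens with probability $r(n)^2$ by independence), each is equally likely to be any one of the three possible resolved quartet topologies $ab|cd$, $ac|bd$, $ad|bc$, again by the symmetry of the uniform distribution under permuting the four labels. Thus the conditional probability that the two resolved topologies \emph{differ} is $2/3$, giving $\Pr[X \in \DiffRes] = \tfrac{2}{3} r(n)^2$. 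For the mixed case, $X \in \Res_1 \cup \Res_2$ means exactly one of the two restrictions is resolved and the other unresolved; by independence this probability is $2 \cdot r(n) \cdot u(n)$. Substituting these into the summed expression yields exactly the claimed formula.

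The step I expect to require the most care is justifying that, for a \emph{single} uniformly random tree, a fixed quartet is resolved with probability exactly $r(n)$ and, conditioned on being resolved, is uniformly distributed over the three resolved topologies. The probability $r(n)$ is given to us by definition, so the real content is the conditional uniformity over the three topologies. This follows from a symmetry argument: any permutation of the four taxa in $X$, extended to a permutation of $[n]$, induces a bijection on $\Phyls(n)$ that preserves the uniform measure and permutes the three resolved topologies of $X$ transitively; hence each resolved topology is equally likely conditioned on resolution. Once this symmetry is in hand, the independence of $T_1$ and $T_2$ and linearity of expectation make the rest routine, so I would concentrate the proof's attention on making the symmetry and conditional-uniformity argument precise.
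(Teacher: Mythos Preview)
Your proposal is correct and follows essentially the same approach that the paper indicates (the paper omits the proof, citing Day, but says it parallels the proof of Lemma~\ref{dist:lemma1}, which is exactly the linearity-of-expectation plus per-quartet symmetry argument you give). If anything, your treatment is more careful than the paper's analogous Lemma~\ref{dist:lemma1}, which simply asserts the $2/3$ conditional probability without spelling out the symmetry argument you took pains to justify.
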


\begin{theorem}\label{thm:rooted_distrib}
Let $T_1$ and $T_2$ be two rooted phylogenies chosen uniformly at
random with replacement from $\RPhyls(n)$.  Then,
\begin{equation}
E(d^{(p)}
(T_1,T_2) ) = {n \choose 3} \cdot \left ( \frac{2}{3} \cdot
r(n+1)^2 + 2 \cdot p \cdot r(n+1) \cdot u(n+1) \right ) .
\end{equation}
\end{theorem}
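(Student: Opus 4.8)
The plan is to compute the expectation by linearity over the $\binom{n}{3}$ triplets and to reduce the rooted computation to the unrooted one via the standard correspondence between $\RPhyls(n)$ and $\Phyls(n+1)$. By the definition of $d^{(p)}$ in \eqref{eqn:param_tree_dist} and linearity of expectation, $E(d^{(p)}(T_1,T_2))$ equals $\binom{n}{3}$ times the expected contribution of a single fixed triplet $X=\{a,b,c\}$. Since triplets in $\SameRes$ and $\UnRes$ contribute nothing, this contribution is $P(X \in \DiffRes(T_1,T_2)) + p\cdot P(X \in \Res_1(T_1,T_2)\cup\Res_2(T_1,T_2))$. Writing $\rho_n$ for the probability that $X$ is resolved in a single tree drawn uniformly from $\RPhyls(n)$, independence of $T_1$ and $T_2$ gives $P(X\in\Res_1\cup\Res_2)=2\rho_n(1-\rho_n)$ and $P(X\in\DiffRes)=\rho_n^2\cdot P(T_1|X\neq T_2|X \mid \text{both resolved})$.

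Second, I would pin down the two probabilities above. For the conditional-disagreement factor, observe that the symmetric group on $\{a,b,c\}$ (extended to $[n]$ by fixing the remaining taxa) acts on $\RPhyls(n)$ preserving the uniform measure, and that it permutes the three resolved topologies $a|bc$, $b|ac$, $c|ab$ transitively, since it permutes transitively the choice of which taxon is the ``outgroup.'' Hence, conditioned on resolution, each of the three topologies is equally likely, so two independent resolved restrictions disagree with probability $2/3$; thus $P(X\in\DiffRes)=\tfrac{2}{3}\rho_n^2$.

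Third, I would identify $\rho_n=r(n+1)$. Planting a new leaf $\rho$ at the root and then suppressing the root gives a bijection between $\RPhyls(n)$ and $\Phyls(n+1)$, whose inverse deletes $\rho$ and roots at $\rho$'s former neighbor; every internal node retains degree at least three under this map (the root gains the pendant edge to $\rho$, while every other internal node already had a parent edge and at least two children), so it is well defined in both directions, and, being a bijection, it carries the uniform measure on $\RPhyls(n)$ to the uniform measure on $\Phyls(n+1)$. Under this correspondence, $T|\{a,b,c\}$ is resolved exactly when the restriction of the associated unrooted tree to the quartet $\{a,b,c,\rho\}$ is resolved: the rooted topology $a|bc$ corresponds to the quartet $a\rho|bc$, and the rooted fan corresponds to the unrooted star. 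Consequently $\rho_n$ equals the probability that the fixed quartet $\{a,b,c,\rho\}$ is resolved in a uniform member of $\Phyls(n+1)$, which by leaf-relabeling symmetry is $r(n+1)$, and $1-\rho_n=u(n+1)$. Substituting into the per-triplet contribution and multiplying by $\binom{n}{3}$ yields the claimed formula.

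The one step needing genuine care is the third: verifying that the planting map is a measure-preserving bijection and, in particular, that resolution of the rooted triplet matches resolution of the augmented quartet. This is precisely what produces the shift from $n$ to $n+1$ and distinguishes the rooted result from Theorem~\ref{thm:unrooted_distrib}; the remaining steps mirror the proof of Theorem~\ref{thm:unrooted_distrib} and involve only elementary probability.
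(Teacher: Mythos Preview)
Your proposal is correct and follows essentially the same route as the paper: linearity of expectation over triplets (the paper's Lemma~\ref{dist:lemma1}), the $2/3$ disagreement factor from symmetry among the three resolved topologies, and the identification $\rho_n=r(n+1)$ via the root-planting bijection $\RPhyls(n)\to\Phyls(n+1)$ (the paper's Lemmas~\ref{dist:lemma3}--\ref{lemma:ptob_triplet_quartet_bijection}). Your justification of the $2/3$ factor via the $S_3$-action is slightly more explicit than the paper's, but otherwise the arguments coincide.
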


It is known  \cite{Steel89,SteelPenny93} that
\begin{equation}
u(n) 
\sim \sqrt{\frac{\pi (2 \ln 2 -1)}{4n}}
             \label{dist:eq2}.
\end{equation}
Together with
Theorems~\ref{thm:unrooted_distrib} and~\ref{thm:rooted_distrib},
this implies that $E(d^{(p)} (T_1,T_2) )$ is asymptotically
$\frac{2}{3} \cdot {n \choose 4}$ for unrooted trees and
$\frac{2}{3} \cdot {n \choose 3}$ for rooted trees.

The proof of Theorem~\ref{thm:unrooted_distrib} follows directly from the work of Day
\cite{Day86}; hence, it is omitted (however, we should note that the proof is similar to that of Lemma~\ref{dist:lemma1} below).  In the remainder of this section, we give a proof of Theorem~\ref{thm:rooted_distrib}.

We need some notation.  Let $u'(n)$ and $r'(n)$ denote the probabilities that a given
triplet is, respectively, unresolved or resolved in an rooted
phylogeny chosen at random from $\RPhyls(n)$.

\begin{lemma} \label{dist:lemma1}
Let $T_1$ and $T_2$ be two rooted phylogenies chosen uniformly at
random with replacement from $\RPhyls(n)$.  Then,
\begin{equation} \label{dist:equation1}
E(d^{(p)} (T_1,T_2) ) = {n \choose 3} \cdot \left ( \frac{2}{3}
\cdot r'(n)^2 + 2 \cdot p \cdot r'(n) \cdot u'(n) \right ) .
\end{equation}
\end{lemma}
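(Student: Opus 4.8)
The plan is to compute the expectation by linearity over all triplets. By definition~\eqref{eqn:param_tree_dist},
\[
d^{(p)}(T_1,T_2) = |\DiffRes(T_1,T_2)| + p\left(|\Res_1(T_1,T_2)| + |\Res_2(T_1,T_2)|\right),
\]
so I would write each cardinality as a sum of indicator variables over the $\binom{n}{3}$ triplets $X \subseteq [n]$. By symmetry, every triplet contributes identically in expectation, so it suffices to fix a single triplet $X = \{a,b,c\}$ and compute the probability that $X$ falls into each of the relevant sets $\DiffRes$, $\Res_1$, $\Res_2$. The final expectation is then $\binom{n}{3}$ times the per-triplet contribution $P(X \in \DiffRes) + p\left(P(X \in \Res_1) + P(X \in \Res_2)\right)$.

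The key probabilistic input is that $T_1$ and $T_2$ are chosen \emph{independently} (uniformly at random with replacement), so the events concerning $T_1|X$ and $T_2|X$ factor. First I would observe that the probability a fixed triplet $X$ is resolved in a single random tree from $\RPhyls(n)$ is $r'(n)$, and unresolved with probability $u'(n) = 1 - r'(n)$. Next, I would analyze the conditional distribution of the resolved topology: given that $X$ is resolved in a random tree, each of the three possible resolved triplet topologies ($a|bc$, $b|ac$, $c|ab$) is equally likely, by the symmetry of the uniform distribution under relabeling of $\{a,b,c\}$. This uniformity-of-topology claim is the crux of the argument and is where I expect the main work to lie.

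Granting that, the three per-triplet probabilities follow by a short case analysis on the independent pair of restrictions $(T_1|X, T_2|X)$. For $\DiffRes$: both restrictions must be resolved (probability $r'(n)$ each, hence $r'(n)^2$ jointly), and their topologies must differ. Conditioned on both being resolved, the probability the two uniformly-and-independently-chosen topologies disagree is $1 - \frac{1}{3} = \frac{2}{3}$, giving $P(X \in \DiffRes) = \frac{2}{3}\,r'(n)^2$. For $\Res_1$: $T_1|X$ resolved and $T_2|X$ unresolved, with probability $r'(n)\,u'(n)$ by independence; symmetrically $P(X \in \Res_2) = r'(n)\,u'(n)$. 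Summing, $P(X\in\Res_1)+P(X\in\Res_2) = 2\,r'(n)\,u'(n)$. Multiplying the per-triplet contribution by $\binom{n}{3}$ yields exactly~\eqref{dist:equation1}.

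The main obstacle is justifying that, conditioned on being resolved, each of the three triplet topologies on $\{a,b,c\}$ is equally likely under the uniform distribution on $\RPhyls(n)$. The clean way to see this is a symmetry argument: the uniform measure on $\RPhyls(n)$ is invariant under any permutation of the leaf labels $[n]$, and in particular under the permutations of $\{a,b,c\}$ that fix the remaining $n-3$ leaves; these permutations act transitively on the three resolved topologies of the restriction $T|\{a,b,c\}$ while preserving the event that the triplet is resolved, so the three topologies are equiprobable. This symmetry also underlies the per-triplet equality that powers the linearity-of-expectation reduction, so once it is established the rest is routine bookkeeping.
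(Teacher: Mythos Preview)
Your proposal is correct and follows essentially the same approach as the paper: both proofs use linearity of expectation over the $\binom{n}{3}$ triplets, independence of $T_1$ and $T_2$ to factor the joint probabilities, and the fact that each of the three resolved topologies is equally likely to obtain the $\tfrac{2}{3}$ disagreement probability. Your explicit symmetry argument (invariance of the uniform measure on $\RPhyls(n)$ under leaf relabelings) for the equiprobability of the three resolved topologies is more carefully spelled out than in the paper, which simply asserts it, but otherwise the arguments coincide.
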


\begin{proof}
By the definition of $d^{(p)}$ and the linearity of expectation,
it suffices to establish the equalities below.
\begin{equation}\label{eqn:DiffRes}
E(\DiffRes(T_1, T_2)) = {n \choose 3} \cdot \frac{2}{3} \cdot
r'(n)^2
\end{equation}
\begin{equation}\label{eqn:Res}
E(\Res_1(T_1, T_2)) = E(\Res_2(T_1, T_2)) = {n \choose 3} \cdot
r'(n) \cdot u'(n))
\end{equation}

To establish Equation~\eqref{eqn:DiffRes}, consider a triplet $X$.
The probability that $X$ is resolved in $T_1$ (or $T_2$) is
$r'(n)$. Thus, the probability that $X$ is resolved in both $T_1$
and $T_2$ is $r'(n)^2$. There are exactly three different ways in
which any given triplet can be resolved. Hence, if $\alpha$ is
resolved in both $T_1$ and $T_2$, the probability that it is
resolved differently in both trees is $\frac{2}{3}$. Thus, the
probability of a pre-given triplet being resolved in both $T_1$
and $T_2$, but with different types in each, is $\frac{2}{3}
r'(n)^2$. By the linearity of expectation and since
the total number of triplets from $\Leaves(T_1)$ (and $\Leaves(T_2)$) is ${n \choose
3}$, $E(\DiffRes(T_1, T_2)) = {n \choose 3} \cdot \frac{2}{3}
r'(n)^2$.

To establish Equation~\eqref{eqn:Res},  we only need to study
$E(\Res_1(T_1, T_2))$; the expression for $E(\Res_2(T_1, T_2))$
follows by symmetry. Consider a triplet $X$. The probability that
$X$ is unresolved in $T_1$ is $u'(n)$ and the probability that $X$
is resolved in $T_2$ is $r'(n)$. The expression for $E(\Res_1(T_1,
T_2))$ now follows by linearity of expectation.
\end{proof}

Let us define the function $\textsc{Add-Leaf}:  \RPhyls(n) \rightarrow \Phyls(n+1)$ as follows.
Given a rooted tree $T \in \RPhyls(n)$, $\textsc{Add-Leaf}(T)$ is the unrooted tree constructed from $T$ by (1) adding a leaf node labeled
$n+1$ to $T$ by adjoining it to the root node of $T$ and (2) unrooting
the resulting tree.
The next two lemmas are well known (for proofs, see~\cite{Steel89, Felsenstein03} and~\cite[p.~20]{SempleSteel03}, respectively).

\begin{lemma} \label{dist:lem1}
For all $n \geq 1$, $|\RPhyls(n)| = |\Phyls(n+1)|$.
\end{lemma}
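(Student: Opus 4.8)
For all $n \geq 1$, $|\RPhyls(n)| = |\Phyls(n+1)|$.

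The plan is to show that the map $\textsc{Add-Leaf}: \RPhyls(n) \to \Phyls(n+1)$ defined just above the lemma is a bijection; this immediately yields the equality of cardinalities. First I would verify that $\textsc{Add-Leaf}$ is well-defined, i.e., that it genuinely lands in $\Phyls(n+1)$. Given a rooted phylogeny $T \in \RPhyls(n)$, attaching a new leaf labeled $n+1$ to the root $\rt(T)$ and then unrooting produces a tree whose leaf set is exactly $[n] \cup \{n+1\} = [n+1]$. I would check that every internal node of the resulting unrooted tree has degree at least three: each internal node of $T$ other than the root already had at least two children, and unrooting turns its parent edge into an incident edge, giving degree $\geq 3$; the old root, after gaining the edge to leaf $n+1$ and losing its status as root, becomes an ordinary internal node of degree (number of children) $+ 1 \geq 3$ as well. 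Hence $\textsc{Add-Leaf}(T)$ is a valid unrooted phylogeny on $[n+1]$.

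Next I would exhibit the inverse map explicitly, which is the cleanest way to establish bijectivity. Given an unrooted phylogeny $U \in \Phyls(n+1)$, the leaf labeled $n+1$ is attached by a single edge to some internal node $v$ of $U$; I would delete leaf $n+1$ together with that edge, root the remaining tree at $v$, and check that this reverses $\textsc{Add-Leaf}$ and produces a valid element of $\RPhyls(n)$. The only subtlety is that after removing leaf $n+1$, the node $v$ might drop to degree two (if $v$ had degree exactly three in $U$), in which case, viewed as a root, $v$ would have only one child; but a rooted phylogeny requires each internal node — including the root — to have at least two children. So I would need to argue that in this degenerate case the correct inverse is still well-defined: when $v$ has degree three in $U$, removing the pendant edge to $n+1$ leaves $v$ with two remaining edges, which is exactly two children once rooted at $v$, so $v$ is a legitimate (resolved) root. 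When $v$ has degree greater than three, the root $v$ simply has three or more children. Either way the rooted tree is valid, so the inverse lands in $\RPhyls(n)$.

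The main obstacle, and the step deserving the most care, is this degree-bookkeeping at the attachment node $v$ (equivalently, the old root): I must confirm that $\textsc{Add-Leaf}$ and its candidate inverse truly compose to the identity in both directions and that the degree constraints of phylogenies are preserved in every case, including when the root of $T$ is unresolved (a polytomy) — precisely the setting this paper cares about. Once the two maps are shown to be mutually inverse, bijectivity gives $|\RPhyls(n)| = |\Phyls(n+1)|$. I would remark that this is a standard correspondence (as the paper notes, citing~\cite{Steel89, Felsenstein03, SempleSteel03}), so the write-up can be brief, but the polytomy-at-the-root case is the one point I would not want to gloss over.
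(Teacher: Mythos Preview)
Your proposal is correct. The paper does not actually supply its own proof of this lemma: it declares both this statement and the subsequent Lemma~\ref{dist:lemma3} (that \textsc{Add-Leaf} is a bijection) to be well known, citing \cite{Steel89, Felsenstein03} and \cite[p.~20]{SempleSteel03} respectively. Your argument establishes the bijection directly --- which is exactly the content of Lemma~\ref{dist:lemma3} --- and the cardinality equality then follows at once; so you are filling in precisely what the paper outsources, and your approach is entirely in line with its structure.
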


\begin{lemma} \label{dist:lemma3}
Function \textsc{Add-Leaf} is a bijection from the set
$\RPhyls(n)$ to the set $\Phyls(n+1)$.
\end{lemma}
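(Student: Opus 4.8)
The plan is to exhibit an explicit inverse for \textsc{Add-Leaf} and to check that both directions preserve the defining degree conditions of phylogenies, so that the two compositions are the identity. Conceptually, the leaf $n+1$ acts as a marker recording the location of the former root, and it is exactly this that makes the correspondence clean; so the main content is really just careful bookkeeping of resolution status rather than any combinatorial difficulty.

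First I would confirm that \textsc{Add-Leaf} is well-defined, i.e., that it genuinely lands in $\Phyls(n+1)$. Let $T \in \RPhyls(n)$ have root $r$. Every non-root internal node of $T$ already has degree at least three (at least two children plus one parent) and is untouched by the construction. The root $r$ has at least two children in $T$, so adjoining the new leaf $n+1$ gives it at least three incident edges; after unrooting, $r$ becomes an ordinary internal node of degree at least three. Hence no degree-two node is ever created and $\textsc{Add-Leaf}(T) \in \Phyls(n+1)$. I would emphasize here that it is precisely the attachment \emph{at the root} — rather than an arbitrary unrooting — that prevents $r$ from ever needing to be suppressed.

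Next I would define the candidate inverse $\phi : \Phyls(n+1) \to \RPhyls(n)$. Given $T' \in \Phyls(n+1)$, locate the leaf labelled $n+1$ and its unique neighbour $v$; for $n \geq 2$ the tree $T'$ has at least three leaves, so $v$ is internal of degree at least three. Delete leaf $n+1$ and its incident edge, and designate $v$ as the root of the remaining tree. I would then verify this is a valid element of $\RPhyls(n)$: unaffected internal nodes keep degree at least three and hence at least two children once the tree is rooted, while $v$ loses one edge and retains degree at least two, i.e., at least two children as the root. Since a rooted phylogeny permits the root to remain unsuppressed, the case where $v$ drops to degree exactly two causes no problem. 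The one place needing separate care is the base case $n=1$, which I would dispose of by direct inspection (the single-node tree maps to the two-leaf edge, and back).

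Finally I would verify $\phi \circ \textsc{Add-Leaf} = \mathrm{id}$ and $\textsc{Add-Leaf} \circ \phi = \mathrm{id}$. For the first, adjoining $n+1$ to $r$ and unrooting makes $r$ the neighbour of $n+1$, so $\phi$ deletes $n+1$ and re-roots at $r$, returning $T$ unchanged. For the second, $\phi$ records the neighbour $v$ of $n+1$ as the new root, and \textsc{Add-Leaf} re-attaches $n+1$ to $v$ and unroots, recovering $T'$. This establishes the bijection directly; alternatively, once well-definedness and injectivity are in hand, surjectivity is automatic from the cardinality equality of Lemma~\ref{dist:lem1}, since both sets are finite. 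The true obstacle, such as it is, is the last observation of the third paragraph: confirming that the root's exemption from the degree-three requirement is exactly compensated by the extra edge to $n+1$, so that the resolution status of the root position is faithfully translated in both directions.
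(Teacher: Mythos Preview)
Your argument is correct: constructing the explicit inverse (delete the leaf $n+1$ and root at its former neighbour), checking the degree conditions in both directions, and verifying that the two compositions are the identity is exactly the standard way to establish this bijection, and you have handled the edge cases (the root's special degree requirement, and $n=1$) appropriately.

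The paper itself does not give a proof of this lemma; it simply cites \cite[p.~20]{SempleSteel03} as a reference for the result. So there is no ``paper's approach'' to compare against beyond noting that your argument is the textbook one. Your closing remark that surjectivity could alternatively be obtained from Lemma~\ref{dist:lem1} once injectivity is known is a nice observation, though of course the direct construction of the inverse is cleaner and does not rely on the separate enumeration result.
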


For any triplet $X$ over $[n]$, we
define two functions $g_X \colon \RPhyls(n) \rightarrow
\{0, 1\}$ and $f_X \colon \Phyls(n+1) \rightarrow \{0, 1\}$ as follows:
\begin{align}
\label{eq:g} g_X(T) & =
\begin{cases}
1  & \text{ if triplet } X \text{ is resolved in tree } T\\
0 & \text{ otherwise }
\end{cases} \\
\label{eq:f} f_X(T) & =
\begin{cases}
1  & \text{ if quartet } X \cup \{n+1\} \text{ is resolved in tree } T\\
0 & \text{ otherwise }
\end{cases}
\end{align}

We have the following result.
\begin{lemma} \label{dist:obs1}
Let $X$ be any triplet over $[n]$. Consider a tree $T \in \RPhyls(n)$, and let $T' = \textsc{Add-Leaf} (T)$.
Then, $f_X(T') = g_X(T)$.
\end{lemma}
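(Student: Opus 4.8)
The plan is to show that the restriction operation and \textsc{Add-Leaf} essentially commute, thereby reducing the claim to a trivial inspection of trees with at most four leaves. Write $X = \{a,b,c\}$ and let $\rho$ be the root of $T$. The guiding observation is that, since \textsc{Add-Leaf} attaches the new leaf $n+1$ at $\rho$ before unrooting, the leaf $n+1$ in $T'$ ``remembers'' the location of the old root. Consequently I expect the following structural identity to hold, where on the right-hand side \textsc{Add-Leaf} is applied to the rooted three-leaf tree $T|X$ (attaching $n+1$ at \emph{its} root and then unrooting):
\[
T'\,|\,(X \cup \{n+1\}) \;=\; \textsc{Add-Leaf}\bigl(T\,|\,X\bigr).
\]
Both sides are unrooted trees on the leaf set $X \cup \{n+1\}$, so the identity makes sense.

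First I would establish this identity. Let $m$ denote the least common ancestor of $a$, $b$, $c$ in $T$; this is precisely the root of $T|X$. In $T'$, the minimal subtree spanning $\{a,b,c,n+1\}$ consists of the minimal subtree of $T$ spanning $\{a,b,c\}$ (whose topmost node is $m$), together with the unique path from $m$ up through $\rho$ and the pendant edge out to $n+1$. Every internal node strictly between $m$ and $n+1$ on this path --- in particular $\rho$ itself --- has exactly two neighbors in this spanning subtree, since the only retained direction out of such a node leads toward $m$ on one side and toward $n+1$ on the other. Hence suppressing degree-two nodes collapses the whole segment and reattaches $n+1$ directly at $m$, which is exactly the tree produced by $\textsc{Add-Leaf}(T|X)$. (The node $m$ has degree at least three in the spanning subtree and is therefore not suppressed.)

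Next I would dispatch the finite case analysis. Because $T|X$ is a rooted tree on three leaves, it is either a fan or fully resolved. If $T|X$ is fully resolved, say of the form $a|bc$, then attaching $n+1$ at its root and unrooting places $a$ and $n+1$ on one side of the surviving internal edge and $b,c$ on the other, giving the resolved quartet $a(n{+}1)|bc$. If instead $T|X$ is a fan, then attaching $n+1$ at its single internal node and unrooting yields the four-leaf star, which is unresolved. In either case, $X \cup \{n+1\}$ is resolved in $T'$ if and only if $X$ is resolved in $T$, which is precisely the assertion $f_X(T') = g_X(T)$.

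The main obstacle is the suppression argument underpinning the structural identity: one must verify carefully that restricting $T'$ to $X \cup \{n+1\}$ really does collapse the \emph{entire} path from the old root $\rho$ down to $m$, so that $n+1$ ends up attached at $m$ rather than at $\rho$ or some intermediate node. Once it is checked that every node on this path other than $m$ is degree-two in the spanning subtree, the identity follows and the remaining three- and four-leaf case analysis is immediate.
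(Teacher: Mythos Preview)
Your proposal is correct and fleshes out what the paper leaves as a one-line observation. The paper's proof simply states that ``triplet $X$ is resolved in $T$ if and only if quartet $X \cup \{n+1\}$ is resolved in $T'$'' without further justification; you prove exactly this via the structural identity $T'\,|\,(X \cup \{n+1\}) = \textsc{Add-Leaf}(T|X)$ and a short case analysis on the two possible shapes of a three-leaf rooted tree. The underlying idea is the same, but your argument supplies the suppression-of-degree-two-nodes verification that the paper omits. One minor point: you are tacitly extending \textsc{Add-Leaf} (defined in the paper only as a map $\RPhyls(n)\to\Phyls(n+1)$) to act on the three-leaf tree $T|X$; this is harmless since the construction makes sense for any rooted phylogeny, but it is worth saying explicitly.
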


\begin{proof}
Follows from the observation that triplet $X$ is resolved in $T$ if and only if
quartet $X \cup \{n+1\}$ is resolved in $T'$.
\end{proof}

\begin{lemma}\label{lemma:ptob_triplet_quartet_bijection}
For all $n \geq 1$, $r'(n) = r(n+1)$ and $u'(n) = u(n+1)$.
\end{lemma}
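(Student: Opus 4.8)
The plan is to reduce both probabilities to averages of the indicator functions $g_X$ and $f_X$ over their respective uniform distributions, and then to transport the sum from $\Phyls(n+1)$ back to $\RPhyls(n)$ using the bijection \textsc{Add-Leaf}. Since the distribution is uniform, for any fixed triplet $X$ over $[n]$ I would write
\[
r'(n) = \frac{1}{|\RPhyls(n)|} \sum_{T \in \RPhyls(n)} g_X(T),
\]
and, for any fixed quartet $Y$ over $[n+1]$,
\[
r(n+1) = \frac{1}{|\Phyls(n+1)|} \sum_{T' \in \Phyls(n+1)} [\,Y \text{ resolved in } T'\,].
\]

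The next step is to pick the right quartet on the unrooted side. The quantity $r(n+1)$ is defined for ``a given quartet,'' and because the uniform distribution on $\Phyls(n+1)$ is invariant under relabeling of the taxa, the probability that a quartet is resolved does not depend on which quartet is chosen. I would therefore evaluate $r(n+1)$ on the specific quartet $X \cup \{n+1\}$, so that the indicator in the sum becomes exactly $f_X(T')$. This is the one point requiring care: \textsc{Add-Leaf} always attaches the new leaf $n+1$ at the root, so every quartet produced on the unrooted side contains the distinguished element $n+1$, and I must rely on the symmetry of the uniform distribution to know that these special quartets have the same resolution probability as a generic quartet.

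With that choice in place, I would reindex the sum over $\Phyls(n+1)$ as a sum over $\RPhyls(n)$ via the bijection \textsc{Add-Leaf} (Lemma~\ref{dist:lemma3}):
\[
\sum_{T' \in \Phyls(n+1)} f_X(T') \;=\; \sum_{T \in \RPhyls(n)} f_X(\textsc{Add-Leaf}(T)) \;=\; \sum_{T \in \RPhyls(n)} g_X(T),
\]
where the final equality is precisely Lemma~\ref{dist:obs1}. Combining this with the equality of the normalizing constants $|\RPhyls(n)| = |\Phyls(n+1)|$ from Lemma~\ref{dist:lem1} yields $r(n+1) = r'(n)$.

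Finally, since every triplet is either resolved or unresolved, $u'(n) = 1 - r'(n)$, and likewise $u(n+1) = 1 - r(n+1)$; substituting the identity just proved gives $u'(n) = u(n+1)$, completing the argument. I expect essentially no computational difficulty here: all the substantive content is packaged in Lemmas~\ref{dist:obs1}, \ref{dist:lemma3}, and~\ref{dist:lem1}, and the only genuinely delicate point is the symmetry justification for evaluating $r(n+1)$ on a quartet containing the added leaf.
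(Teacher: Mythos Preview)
Your argument is correct and matches the paper's proof essentially step for step: fix a triplet $X$, evaluate $r(n+1)$ on the quartet $X\cup\{n+1\}$, reindex via the bijection \textsc{Add-Leaf}, invoke Lemma~\ref{dist:obs1} and Lemma~\ref{dist:lem1}, and finish by taking complements. If anything, you are slightly more careful than the paper in explicitly noting the symmetry justification for choosing the particular quartet $X\cup\{n+1\}$.
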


\begin{proof}
Let $X$ be any triplet over $[n]$.
By definition, $r(n+1)$
is the probability of any given quartet being resolved in a random
unrooted tree in $\Phyls(n)$.  In particular, $r(n+1)$ is the probability that quartet
$X \cup \{n+1\}$ is resolved in a random
unrooted tree.  Now,
\begin{align*}
r(n+1) & =  \sum_{T \in \Phyls(n+1)} \frac{f_X(T)}{|\Phyls(n+1)|}  \\
& =  \sum_{T \in \Phyls(n+1)} \frac{f_X(T)}{|\RPhyls(n)|} \\
& =  \sum_{T' \in \RPhyls(n)} \frac{g_X(T')}{|\RPhyls(n)|} \\
& =  r'(n) ,
\end{align*}
where the first and last equalities follow from the definitions of $r(n+1)$ and $r(n)$, respectively, the second equality follows from Lemma~\ref{dist:lem1}, and the third follows from Lemma~\ref{dist:lemma3} and Lemma~\ref{dist:obs1}.

Since $u'(n) = 1 - r'(n)$ and $u(n+1) = 1 - r(n+1)$, it follows that $u'(n) = u(n+1)$.
\end{proof}

\begin{proof}[Proof of Theorem~\ref{thm:rooted_distrib}]
Simply substitute the expressions for $r'(n)$ and $u'(n)$ given in Lemma~\ref{lemma:ptob_triplet_quartet_bijection} into the expression for $E(d^{(p)} (T_1,T_2) )$ given in Lemma~\ref{dist:lemma1}.
\end{proof}

\section{Properties of parametric distance} \label{sec:properties}

In what follows, unless mentioned explicitly, whenever we refer to
parametric distance, we mean both its triplet and quartet
varieties. We begin with a useful observation.

\begin{proposition}\label{prop:dp_equiv}
For every $p, q$ such that $p, q \in (0,1]$, $d^{(p)}$ and
$d^{(q)}$ are equivalent.
\end{proposition}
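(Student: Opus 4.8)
The plan is to show that $d^{(p)}$ and $d^{(q)}$ satisfy the two-sided bound required by the definition of equivalence, namely that there exist constants $c_1, c_2 > 0$ with $c_1 d^{(q)}(T_1,T_2) \leq d^{(p)}(T_1,T_2) \leq c_2 d^{(q)}(T_1,T_2)$ for all pairs $T_1, T_2$. The key observation is that the definition \eqref{eqn:param_tree_dist} writes $d^{(p)}$ as a sum of two nonnegative terms, $|\DiffRes(T_1,T_2)|$ and $|\Res_1(T_1,T_2)| + |\Res_2(T_1,T_2)|$, in which only the second term carries the parameter $p$ as a multiplicative weight. Crucially, the two sets of counts appearing here, namely $A := |\DiffRes(T_1,T_2)|$ and $B := |\Res_1(T_1,T_2)| + |\Res_2(T_1,T_2)|$, do \emph{not} depend on $p$ or $q$; only the weights multiplying them do. So I would set $A$ and $B$ to be these parameter-independent quantities and write $d^{(p)} = A + pB$ and $d^{(q)} = A + qB$.

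Next I would establish the bounds by a direct comparison of $A + pB$ and $A + qB$, using only that $p, q \in (0,1]$ and $A, B \geq 0$. For the upper bound, since $p \leq 1$ and $q > 0$ we have $d^{(p)} = A + pB \leq A + B \leq \frac{1}{q}(A + qB) \cdot \max\{1, 1/q\}$; more cleanly, one checks that $A + pB \leq \max\{1, p/q\}\,(A + qB)$, because this inequality reduces, after cancelling the common $A$ term, to $pB \leq \max\{1,p/q\}\,qB$, which holds termwise. By symmetry (swapping the roles of $p$ and $q$), the same argument gives $A + qB \leq \max\{1, q/p\}\,(A + pB)$, which rearranges to the desired lower bound on $d^{(p)}$. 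Thus taking $c_2 = \max\{1, p/q\}$ and $c_1 = 1/\max\{1, q/p\} = \min\{1, p/q\}$ furnishes constants that depend only on $p$ and $q$ and not on $n$ or the trees, which is exactly what equivalence in the sense of \cite{FaginKumarMahdianSivakumarVee06} demands.

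I do not expect a serious obstacle here; the statement is essentially a packaging of the elementary fact that two nonnegative linear combinations of the same two nonnegative quantities, with strictly positive leading coefficients on the $B$-term and equal coefficients on the $A$-term, are within a constant factor of each other. The one point requiring a little care is the requirement $p, q > 0$ (rather than merely $\geq 0$): if $q = 0$ the ratio $p/q$ blows up and the lower bound on $d^{(p)}$ in terms of $d^{(q)}$ fails, which is why the hypothesis excludes $0$ from the interval. I would note explicitly that this is where strict positivity is used, and that the constants are symmetric under interchange of $p$ and $q$, consistent with equivalence being a symmetric relation.
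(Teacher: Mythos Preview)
Your proof is correct and takes essentially the same approach as the paper: both write $d^{(p)}(T_1,T_2) = A + pB$ with $A = |\DiffRes(T_1,T_2)|$ and $B = |\Res_1(T_1,T_2)| + |\Res_2(T_1,T_2)|$ parameter-independent, and then compare $A+pB$ with $A+qB$ termwise to extract constants. The paper (assuming without loss of generality $p \ge q$) uses $c_1 = q/p$ and $c_2 = p/q$, whereas your $c_1 = \min\{1,p/q\}$ and $c_2 = \max\{1,p/q\}$ are slightly sharper, but the underlying argument is identical.
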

\begin{proof}
Let  $T_1$ and $T_2$ be two rooted (unrooted) trees.  Let $M$ be the number of triplets (quartets) resolved differently in $T_1$ and let $N$ be the number of triplets (quartets) resolved only in one of $T_1$ and $T_2$. Then,  $d^{(p)}(T_1,T_2) = M+pN$, and $d^{(q)}(T_1,T_2) = M+qN$. Without loss of generality, let $p \ge q$. Now, if $c_1 = q/p$, then we have $c_1 d^{(q)}(T_1,T_2) = qM/p + q^2N/p \le M+pN = d^{(p)}(T_1,T_2)$. Similarly, if $c_2 = p/q$, then we have $c_2 d^{(q)}(T_1,T_2) = pM/q+pN \ge M+pN=d^{(p)}(T_1,T_2)$. Thus,  $c_1 d^{(q)}(T_1,T_2) \le d^{(p)}(T_1,T_2) \le c_2 d^{(q)}(T_1,T_2)$, and, consequently, $d^{(p)}$ and $d^{(q)}$ are equivalent.
\end{proof}

The next result precisely characterizes the ranges of $p$ for
which $d^{(p)}$ is a metric or near-metric:

\begin{theorem} \label{theorem:basic_dist_properties_1}\hspace{0cm}

\begin{enumerate}[(i)]
\item
For $p = 0$, $d^{(p)}$ is not a distance measure.
\item
For $p \in (0, 1/2)$, $d^{(p)}$ is a distance measure, but not a metric.
\item
For $p \in [1/2,1]$, $d^{(p)}$ is a metric.
\item
For $p \in (0, 1/2)$, $d^{(p)}$ is a near-metric.
\end{enumerate}
\end{theorem}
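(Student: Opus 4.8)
The plan is to prove each of the four parts in order, treating parts (i)--(iii) as the foundation and part (iv) as the main goal. For part (i), when $p = 0$ the contribution of triplets (quartets) resolved in only one tree vanishes, so $d^{(0)}(T_1, T_2) = |\DiffRes(T_1, T_2)|$. I would exhibit two distinct trees $T_1 \neq T_2$ with $d^{(0)}(T_1, T_2) = 0$: for instance, take $T_2$ to be a full refinement of a fan $T_1$, so that every triplet (quartet) resolved in $T_2$ is unresolved in $T_1$, forcing $\DiffRes(T_1, T_2) = \emptyset$. This violates condition (iii) of a distance measure, so $d^{(0)}$ is not a distance measure.

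For parts (ii) and (iii), I would first verify that for $p \in (0,1]$, $d^{(p)}$ satisfies all three distance-measure axioms. Non-negativity and symmetry are immediate from \eqref{eqn:param_tree_dist}; for axiom (iii), $d^{(p)}(T_1,T_2) = 0$ forces both $|\DiffRes| = 0$ and (since $p > 0$) $|\Res_1| = |\Res_2| = 0$, meaning $T_1$ and $T_2$ induce identical topologies on every triplet (quartet), whence $T_1 = T_2$ because triplets (quartets) determine a tree. For the metric claim in (iii), the plan is to prove the triangle inequality by a per-triplet (per-quartet) argument: for each $X$, I would show that its combined contribution to $d^{(p)}(T_1, T_3)$ is at most its contribution to $d^{(p)}(T_1, T_2) + d^{(p)}(T_2, T_3)$, summing over all $X$ by linearity. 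The only case where this can fail is when $X$ is resolved identically in $T_1$ and $T_3$ (contributing $0$) but $T_2$ resolves $X$ differently or leaves it unresolved; conversely the dangerous case is when $T_1$ and $T_3$ resolve $X$ differently (contributing $1$) while $T_2$ mediates. Enumerating the at-most-nine combinations of resolved-differently/resolved-same/unresolved states of $X$ across the three trees, the worst case is $X$ resolved-but-different in $T_1, T_3$ with $T_2$ unresolved, giving right-hand side $2p$ against left-hand side $1$; this holds iff $p \geq 1/2$. To show (ii), that $d^{(p)}$ fails the triangle inequality for $p \in (0,1/2)$, I would instantiate exactly this witness: trees where a single triplet (quartet) is resolved oppositely in $T_1$ and $T_3$ and unresolved in $T_2$, giving $1 = d^{(p)}(T_1,T_3) > 2p = d^{(p)}(T_1,T_2) + d^{(p)}(T_2,T_3)$.

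For part (iv), the key observation is that $d^{(1/2)}$ is a metric by part (iii), and Proposition~\ref{prop:dp_equiv} shows $d^{(p)}$ and $d^{(1/2)}$ are equivalent for all $p \in (0,1]$. I would first establish that any distance measure equivalent to a metric is automatically a near-metric. Concretely, suppose $c_1 d^{(1/2)} \leq d^{(p)} \leq c_2 d^{(1/2)}$. Then for any chain $x, x_1, \dots, x_{n-1}, z$, applying the triangle inequality repeatedly to the metric $d^{(1/2)}$ gives $d^{(1/2)}(x,z) \leq \sum d^{(1/2)}(x_{i}, x_{i+1})$, and sandwiching both ends by the equivalence bounds yields the relaxed polygonal inequality $d^{(p)}(x,z) \leq (c_2/c_1)\sum d^{(p)}(x_i, x_{i+1})$ with the constant $c = c_2/c_1$ independent of $n$ and of the domain size. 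This proves $d^{(p)}$ is a near-metric for $p \in (0, 1/2)$, completing part (iv).

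The main obstacle I anticipate is the exhaustive case analysis underlying the triangle inequality in part (iii): I must confirm that across all configurations of the three trees' restrictions to a fixed $X$, the inequality $(\text{contribution to } d^{(p)}(T_1,T_3)) \leq (\text{contribution to } d^{(p)}(T_1,T_2)) + (\text{contribution to } d^{(p)}(T_2,T_3))$ holds precisely when $p \geq 1/2$, and that the boundary case $p = 1/2$ is tight rather than violated. Care is needed because ``resolved differently'' depends on which of the three possible resolutions each tree uses, so the intermediate tree $T_2$ could agree with $T_1$ but differ from $T_3$, or differ from both; verifying that no combination forces a constant worse than $2$ on the right-hand side against $1$ on the left is where the argument must be watertight. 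I expect this to be routine but tedious, and it is the crux on which parts (ii) and (iv) both depend.
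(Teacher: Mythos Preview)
Your proposal is correct and follows essentially the same approach as the paper: the same per-triplet (per-quartet) decomposition for the triangle inequality in (iii), the same worst-case witness $1$ vs.\ $2p$ for (ii), and the same reduction of (iv) to equivalence with the metric $d^{(1/2)}$ via Proposition~\ref{prop:dp_equiv}. The only minor differences are that the paper uses three-taxon trees $ab|c$, $abc$, $ac|b$ as its explicit witnesses for (i) and (ii), and for (iv) it cites a result of Fagin et al.\ (that a distance measure is a near-metric iff it is equivalent to a metric) rather than spelling out the sandwich argument you give---your direct argument is in fact more self-contained.
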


\begin{proof}
Our proof is analogous to the proof of the corresponding result for partial rankings given by Fagin et al.~\cite{FaginKumarMahdianSivakumarVee06}. For the sake of completeness, we prove this result formally.  For concreteness, we state our arguments in terms of rooted trees and triplets.  The extension to unrooted trees and quartets is direct.

For the proof of (i) and (ii), we use the same three triplet trees, $t_1 = ab|c$, $t_2 = abc$ (i.e., a completely unresolved tree), and $t_3 = ac|b$.
To prove (i), we note that $d^{(0)}(t_1,t_2) = 0$, even though $t_1 \neq t_2$.  Thus $d^{(0)}$ is not a distance measure.  Observe also that $d^{(0)}$ violates the triangle inequality, since $d^{(0)}(t_1,t_2)+d^{(0)}(t_2,t_3)= 2p = 0 < 1 = d^{(0)}(t_1,t_3)$.

To prove (ii), observe that $d^{(p)}(t_1,t_2) = d^{(p)}(t_2,t_3)= p$, and $d^{(p)}(t_1,t_3)=1$.  Thus, $d^{(p)}(t_1,t_3)=1 >2p = d^{(p)}(t_1,t_2)+d^{(p)}(t_2,t_3)$, violating the triangle inequality.  Thus, $d^{(p)}$ is not a metric in this case.  On the other hand, it is straightforward to verify that for any $p \in (0,1/2)$ --- as well, indeed, as for any $p \in [1/2,1]$ --- and any trees $T_1$ and $T_2$, we have $d^{(p)}(T_1,T_2) \ge 0$,  $d^{(p)}(T_1,T_2) = d^{(p)}(T_2,T_1)$, and $d^{(p)}(T_1,T_2)=0$ if and only if $T_1 = T_2$.  Thus, $d^{(p)}$ is a distance measure in this case.

We now prove (iii).  As mentioned in the proof of part (ii), $d^{(p)}$ is a distance measure for $p \in [1/2,1]$.  To complete the proof, we show that the triangle inequality holds; i.e., $d^{(p)}(T_1,T_3) \le d^{(p)}(T_1,T_2) + d^{(p)}(T_2,T_3)$ for any three trees $T_1, T_2, T_3$. Note that for any $i,j \in \{1,2,3\}$, we can express $d^{(p)}(T_i,T_j)$ as
\[d^{(p)}(T_i,T_j) = \sum_{\{a,b,c\} \subseteq [n]} d^{(p)}(T_i|\{a,b,c\},T_j|\{a,b,c\}).\]
That is, the distance between $T_i$ and $T_j$ can be expressed as the sum of parametric distances between all possible triplet trees induced by $T_i$ and $T_j$.  For any $\{a,b,c\} \subseteq [n]$, and each $i \in \{1,2,3\}$, let $t_i = T_i|\{a,b,c\}$.  To complete the proof of (iii), it suffices to show that $d^{(p)}(t_1,t_3) \le  d^{(p)}(t_1,t_2) + d^{(p)}(t_2,t_3)$.   If $t_1 = t_3$, then $d^{(p)}(t_1,t_3) = 0 \le d^{(p)}(t_1,t_2) + d^{(p)}(t_2,t_3)$, since distances are nonegative.  If $t_1 \neq t_3$, then $d^{(p)}(t_1,t_3) \le 1$, while $d^{(p)}(t_1,t_2) + d^{(p)}(t_2,t_3) \ge 2p$.   Thus, $d^{(p)}(t_1,t_3) \le  d^{(p)}(t_1,t_2) + d^{(p)}(t_2,t_3)$ if $p \in [1/2,1]$.

Finally, we prove (iv).  By Proposition~\ref{prop:dp_equiv}, for every $p \in (0,1/2)$, $d^{(p)}$ is equivalent to $d^{(1/2)}$, which, by part (iii), is a metric.  The claim now follows from a result by Fagin et al.~\cite{FaginKumarSivakumar03a} that implies that a distance measure is a near metric if and only if it is equivalent to a metric.
\end{proof}

Part (iii) of Theorem~\ref{theorem:basic_dist_properties_1} leads
directly to approximation algorithms: Let $\Prof$ be a profile, let $T^*$ be the median tree for $\Prof$, and let $T  = \Prof(\ell)$, where $\ell = \argmin_i d(\Prof(i),\Prof)$.  Then, by a standard approximation bound argument~(e.g., like those found in~\cite{Vazirani01}), we have that $d(T, \Prof) \le 2 d(T^*, \Prof)$.
Part (iv) indicates that the measure degrades nicely, since, along with the $2$-approximation algorithm for $p \in [1/2,1]$ implied by (iii), it leads to constant factor approximation
algorithms for $p \in (0,1/2)$
(an analogous observation for aggregation of partial rankings is made in~\cite{FaginKumarMahdianSivakumarVee06}).

The next result establishes a threshold for $p$ beyond which a
collection of fully resolved trees give enough evidence to produce
a fully resolved tree, despite the disagreements among them.

\begin{theorem} \label{theorem:basic_dist_properties_2}
Let $\Prof$ be a profile of length $k$, such that for all $i \in
[k]$, tree $\Prof(i)$ is fully resolved.  Then, if $p \geq 2/3$,
there exists median tree $T$ for $\Prof$ relative to $d^{(p)}$
such that $T$ is fully resolved.
\end{theorem}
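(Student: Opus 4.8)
The plan is to argue by induction on the number of unresolved nodes of a median tree. Since $\RPhyls(n)$ is finite, a median $T$ for $\Prof$ exists; if $T$ is already fully resolved we are done. Otherwise I would fix an unresolved node $v$ of $T$ and show that $v$ can be fully resolved \emph{without increasing} $d^{(p)}(\cdot,\Prof)$. Because $T$ is a median, such a refinement cannot decrease the total distance either, so the refined tree is again a median, now with strictly fewer unresolved nodes (the new internal nodes introduced in place of $v$ are all binary); iterating then yields a fully resolved median.

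The heart of the matter is a local cost analysis. Let $v$ have children $c_1,\dots,c_m$ with $m\ge 3$, let $L_i$ be the leaf set of the subtree rooted at $c_i$, and refine $v$ by replacing the star at $v$ with a rooted binary tree $B$ on the ``blocks'' $c_1,\dots,c_m$, leaving the rest of $T$ and each subtree intact. I would first verify the structural fact that the only triplets whose induced topology changes are those $X=\{a,b,c\}$ with $a,b,c$ in three \emph{distinct} blocks $L_i,L_j,L_k$: every other triplet has its topology fixed either inside a single subtree or by a split already present above $v$, and refining $v$ leaves these untouched (this uses that refinement never alters an already-resolved triplet, since $T\preceq T'$ implies $T|X\preceq T'|X$). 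Each such triplet is unresolved in $T$, contributing $p\cdot k$ to the total distance (every $\Prof(i)$ being fully resolved), and becomes resolved with topology $\theta_X=B|\{c_i,c_j,c_k\}$, contributing $k-n_{\theta_X}$, where $n_{\theta_X}$ is the number of input trees inducing $\theta_X$ on $X$. Writing $N_v$ for the number of such triplets, the change in total distance is exactly $(1-p)\,k\,N_v-\sum_X n_{\theta_X}$.

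To make this quantity nonpositive I would choose $B$ by the probabilistic method rather than greedily, since the topologies $\theta_X$ cannot be selected independently --- they must be jointly realized by a single $B$. Drawing $B$ uniformly at random among rooted binary trees on $c_1,\dots,c_m$ (a uniformly random caterpillar would also do), the distribution is invariant under relabeling the blocks, and since the three topologies of any fixed triple form a single orbit under permuting its three labels, each $\theta_X$ is uniform over its three possibilities. Hence $E[n_{\theta_X}]=\tfrac13 k$ for every $X$, so $E\big[\sum_X n_{\theta_X}\big]=\tfrac13 k N_v$, and some particular $B$ attains $\sum_X n_{\theta_X}\ge \tfrac13 k N_v$. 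For $p\ge 2/3$ we have $1-p\le 1/3$, so the change in distance is at most $(1-p)kN_v-\tfrac13 kN_v\le 0$, as required.

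The step I expect to be the main obstacle is the structural bookkeeping in the second paragraph: one must argue carefully that resolving $v$ affects exactly the three-distinct-block triplets and nothing else, and that a single binary tree $B$ on the blocks fixes all their topologies consistently. Once that is in place, the averaging argument is the crux insight and is short, while the inductive wrapper is routine. The extension to unrooted trees and quartets would follow the same template, with quartets in place of triplets and the local refinement performed at a node of degree greater than three.
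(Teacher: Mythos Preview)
Your argument is correct. The structural bookkeeping is exactly as you describe: refining $v$ affects precisely the triplets with leaves in three distinct blocks, and the label-permutation invariance of the uniform distribution on rooted binary trees forces each induced triplet topology to be uniform over its three possibilities, so the averaging step goes through. The cost calculation and the inductive wrapper are fine, and the quartet version works the same way.

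Your route differs from the paper's in an interesting way. The paper never resolves $v$ in one shot; instead it defines a single \textsc{Pull-Out} operation that detaches one child $u_q$ of $v$ (creating a new binary node above the remaining children) and proves, by a deterministic double-counting over the children, that some $q$ satisfies $f_q \ge a_q/2$, where $f_q$ and $a_q$ are the numbers of profile ``votes'' for and against the newly resolved triplets. The change in total distance is $-p f_q + (1-p)a_q$, which is nonpositive once $p\ge 2/3$. Iterating \textsc{Pull-Out} until no unresolved node remains yields the theorem. Both proofs rest on the same $1/3$ averaging principle---each triplet has three resolutions, so on average one third of the votes agree with any choice---but the paper averages over the $d$ children to pull out, while you average over all binary refinements of the star at $v$. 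Your version is shorter and handles the node in one stroke; the paper's version is more explicitly algorithmic (it specifies a concrete greedy step) and avoids the probabilistic-method detour, at the price of a longer termination loop.
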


It is interesting to compare
Theorem~\ref{theorem:basic_dist_properties_2} with analogous results
for partial rankings. Consider the variation of Kendall's tau for
partial rankings in which a pair of items that is ordered in one
ranking but in the same bucket in the other contributes $p$ to the
distance, where $p \in [0,1]$. This distance measure is a metric
when $p \geq 1/2$~\cite{FaginKumarMahdianSivakumarVee06}.
Furthermore, if $p \geq 1/2$ the median ranking relative to this
distance (that is, the one that minimizes the total distance to
the input rankings) is a full ranking if the input consists of
full rankings~\cite{BartholdiToveyTrick89}.  In contrast,
Proposition~\ref{theorem:basic_dist_properties_1} and
Theorem~\ref{theorem:basic_dist_properties_2} show that, in the range
$p \in [1/2,2/3)$, parametric triplet or quartet distance are
metrics, but the median tree is not guaranteed to be fully
resolved even if the input trees are.  The intuitive reason is
that for rankings, there are only two possible outcomes for a
comparison between two elements, but there are three ways in which
a triplet or quartet may be resolved.  This opens up a potentially useful  range of values for $p$ wherein parametric
triplet/quartet distance is a metric, but where one can adjust for
the degree of evidence (or confidence) needed to resolve a node.

Our proof of Theorem~\ref{theorem:basic_dist_properties_2} relies on two lemmas, which make use of the two procedures below.

\begin{description}
\item[\textsc{Pull-Out}$(T,u)$:]
The arguments are a rooted phylogenetic tree $T$ and a non-root node $u$ in
$T$, whose parent, denoted by $v$, has 3 or more children.  The procedure returns a new tree $T'$ obtained from $T$ as follows.  Split $v$ into two nodes $v'$ and $v''$ such
that the parent of $v'$ equals the parent of $v$, the children of
$v'$ are $u$ and $v''$, and the children of $v''$ are all the
children of $v$ except for $u$.

\item[\textsc{Pull-2-Out}$(T,u_1,u_2)$:]
The arguments are an unrooted phylogenetic tree $T$ and two nodes $u_1, u_2$ sharing the
same neighbor $v$ whose degree is at least 4 in $T$.  The procedure returns a new tree $T'$ obtained from $T$ as follows.  Split $v$ into two nodes $v'$ and $v''$ such that the neighbors of $v'$ are $v''$, $u_1$, and $u_2$, the neighbors of $v''$ are $v'$ and the neighbors of $v$ except for $u_1$ and $u_2$.
\end{description}

In what follows, we write $T_i$ to denote $\Prof(i)$, the $i$-th tree in profile $\Prof$, for $i
\in [k]$.  We need to introduce separate but analogous concepts for rooted and unrooted trees.

Suppose $T$ is a rooted phylogenetic tree and let $v$ be any node in $T$ with at least 3 children, denoted
$u_1, u_2, \dots , u_d$.  For $q \in [d]$, let $T^{(q)} = \textsc{Pull-Out}(T,u_q)$ and let $L_q$
denote the set of triplets $X$ such that $T|X$ is not fully resolved but $T^{(q)}|X$ is fully resolved.  Define the following two quantities.
\begin{eqnarray}
f_q & = & \sum_{X \in L_q} |\{i \in [k]: T_i|X \text{ agrees with } T^{(q)}|X\}| \\
a_q & = & \sum_{X \in L_q} |\{i \in [k]: T_i|X \text{ disagrees
with } T^{(q)}|X\}|.
\end{eqnarray}
Informally, $f_q$ and $a_q$ are the number of \emph{votes} cast by the trees in profile $\Prof$ for and against the way the triplets in $L_q$ are resolved in $T^{(q)}$.  Indeed, note that, by assumption, every tree in profile $\Prof$ is fully resolved. Thus, for each triplet $X = \{x,y,z\}$ and every $i \in [k]$, $T_i|X$ must agree with exactly one of $x|yz$, $y|xz$, or $z|xy$.  Thus, there are $k$ votes associated with each triplet
$X$, some for, some against.

Now suppose $T$ is an unrooted phylogenetic tree.   Let $v$ be any node in phylogeny $T$ and let
$u_1, u_2, \dots , u_d$ be the neighbors of $v$.  For $q,r \in [d]$,
let $T^{(qr)} = \textsc{Pull-2-Out}(T,u_q,u_r)$ and let $L_{qr}$ denote the set of quartets $X$ such that $T|X$ is not fully
resolved but $T^{(qr)}|X$ is fully resolved.   Define the
following two quantities.
\begin{eqnarray}
f_{qr} & = & \sum_{X \in L_{qr}} |\{i \in [k]: T_i|X \text{ agrees with } T^{(qr)}|X\}| \\
a_{qr} & = & \sum_{X \in L_{qr}} |\{i \in [k]: T_i|X \text{ disagrees
with } T^{(qr)}|X\}|.
\end{eqnarray}

We have the following result.

\begin{lemma}\label{lemma:Pull_Out_Aux}
For the rooted case, there exists an index $q \in [d]$ such that $f_q \geq a_q/2$.
For the unrooted case, there exists two indices $q,r\in [d]$ such that $f_{qr} \geq a_{qr}/2$.
\end{lemma}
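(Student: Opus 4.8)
The plan is to prove both statements by a double-counting (averaging) argument over the index set. First I would pin down exactly which triplets (quartets) lie in each $L_q$ (each $L_{qr}$), and then show that summing the ``for'' votes and the ``against'' votes over all indices yields $\sum_q f_q = \tfrac12 \sum_q a_q$ (respectively $\sum_{q,r} f_{qr} = \tfrac12 \sum_{q,r} a_{qr}$). The conclusion then follows because a finite collection of real numbers cannot all be strictly below their own average: if $\sum_q (f_q - a_q/2) = 0$, some term must be nonnegative.

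For the rooted case, I would first observe that $\textsc{Pull-Out}(T,u_q)$ alters the tree only at $v$, so the only triplets whose resolution status changes are those unresolved ``at $v$'', i.e.\ triplets $X=\{x,y,z\}$ whose three leaves lie in three distinct child subtrees $T(u_i)$ of $v$. Such an $X$ becomes resolved in $T^{(q)}$ precisely when exactly one of its leaves lies in $T(u_q)$, in which case that leaf is split off and $T^{(q)}|X$ is (the leaf in $T(u_q)$)$\,|\,$(other two). Consequently, a triplet whose leaves sit in child subtrees indexed by $q_x,q_y,q_z$ belongs to exactly three of the sets $L_q$ --- namely $L_{q_x}, L_{q_y}, L_{q_z}$ --- and across these three memberships $T^{(q)}|X$ runs through all three resolutions $x|yz$, $y|xz$, $z|xy$. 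Using the hypothesis that every $T_i$ is fully resolved, each $T_i|X$ equals exactly one of the three resolutions and casts exactly one vote, so summing over the three memberships of a fixed $X$ the ``for'' votes total $k$ and the ``against'' votes total $2k$. Summing over all $N$ triplets unresolved at $v$ gives $\sum_{q} f_q = kN$ and $\sum_{q} a_q = 2kN$, whence $\sum_q (f_q - a_q/2) = 0$ and some $q$ satisfies $f_q \ge a_q/2$.

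For the unrooted case I would argue analogously, using $T(u_i,v)$ in place of $T(u_i)$: the quartets unresolved at $v$ are exactly those whose four leaves lie in four distinct components of $T$ with $v$ removed, and $\textsc{Pull-2-Out}(T,u_q,u_r)$ resolves such an $X$ iff two of its leaves lie in the components of $u_q$ and $u_r$, yielding the topology (those two)$\,|\,$(other two). The one wrinkle, requiring the most care, is the multiplicity: a fixed unresolved quartet with leaves in components $i_1,\dots,i_4$ appears in all $\binom{4}{2}=6$ sets $L_{qr}$ with $\{q,r\}\subseteq\{i_1,\dots,i_4\}$, and the two complementary pairs $\{q,r\}$, $\{s,t\}$ induce the \emph{same} resolution; thus each of the three quartet topologies is hit exactly twice. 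Hence per quartet the ``for'' votes sum to $2k$ and the ``against'' votes to $6k-2k=4k$, giving $\sum_{q,r} f_{qr} = 2kM$ and $\sum_{q,r} a_{qr} = 4kM$ over the $M$ quartets unresolved at $v$, so again $\sum_{q,r}(f_{qr}-a_{qr}/2)=0$ and some pair $\{q,r\}$ satisfies the claim.

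The main obstacle is the combinatorial bookkeeping in these two characterizations: correctly identifying the membership pattern of a triplet/quartet in the sets $L_q$/$L_{qr}$ and, in the unrooted case, tracking the factor-of-two multiplicity arising because complementary pairs produce identical resolutions. Once these multiplicities are established (three resolutions each appearing once for triplets; three resolutions each appearing twice for quartets), the averaging step is immediate.
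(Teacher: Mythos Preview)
Your proposal is correct and follows essentially the same approach as the paper: both identify the triplets (quartets) unresolved at $v$ as those with leaves in distinct child subtrees (neighbor components), track their membership across the $L_q$ (respectively $L_{qr}$) to obtain the global identities $\sum_q f_q = k|L|$, $\sum_q a_q = 2k|L|$ (respectively $\sum_{q,r} f_{qr} = 2k|L|$, $\sum_{q,r} a_{qr} = 4k|L|$), and conclude by averaging. Your treatment of the unrooted multiplicity---noting that complementary pairs $\{q,r\}$ and $\{s,t\}$ induce the same resolution, so each topology is hit twice---is exactly the point the paper makes, and your phrasing of the final step as ``$\sum_q(f_q-a_q/2)=0$ forces some nonnegative term'' is equivalent to the paper's proof by contradiction.
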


\begin{proof}
For the rooted case, let $L = \bigcup_{q=1}^d L_q$. Thus, $L$ consists of those
triplets that are unresolved in $T$, but resolved in
$T^{(q)}$, for some $q \in [d]$. Equivalently, $L$ consists
of those triplets whose elements are leaves from three different
subtrees of $v$.

Let $X = \{x,y,z\}$ be a triplet in $L$.  Assume that $x \in
\Leaves(T(u_q))$, $y \in \Leaves(T(u_r))$, and $z \in
\Leaves(T(u_s))$, where $q,r,s$ must be distinct indices in $[d]$.
Then, $X$ is in $L_q$, $L_r$, and $L_s$.

Consider any $i \in [k]$.  By assumption, $T_i|X$ is a fully
resolved triplet tree.  Assume
without loss of generality that
$T_i|X = x|yz$. Then, $T^{(q)}|X$ agrees with $T_i|X$, so
$T_i|X$ contributes $+1$ to $f_q$.  On the other hand, both
$T^{(r)}|X$ and $T^{(s)}|X$ disagree with $T_i|X$, so
$T_i|X$ contributes $+1$ to $a_r$ and $+1$ to $a_s$. Furthermore,
for any $t \not\in \{q,r,s\}$, $T_i|X$ contributes nothing
to $f_t$ or $a_t$, since the triplet tree $T^{(t)}|X$ is not
fully resolved. Therefore, we have the following equalities.
\begin{align}~\label{eqn:aq}
\sum_{q = 1}^d a_q & = 2k \cdot |L| \\
\label{eqn:fq} \sum_{q = 1}^d f_q & = k \cdot |L|
\end{align}

Now suppose that for all $q \in [d]$, $f_q < a_q/2.$ This yields the
following contradiction:
\begin{equation*}
k \cdot |L| = \sum_{q=1}^d f_q < \frac{1}{2} \sum_{q=1}^d a_q
 = k \cdot |L|.
\end{equation*}
Here, the first equality follows from Equation~\eqref{eqn:aq} and
the last equality follows from Equation~\eqref{eqn:fq}.  Thus,
there must be some $q \in [d]$ such that $f_q \geq a_q/2.$

Similarly, for the unrooted case, let $L = \bigcup_{q,r \in [d], q\ne r} L_{qr}$.
Thus, $L$ consists of those quartets that are unresolved in $T$, but resolved in $T^{(qr)}$, for some $q,r \in [d]$, $q \ne r$. Equivalently, $L$ consists of those quartets whose elements are leaves from four different neighboring subtrees of $v$.

Let $X = \{w,x,y,z\}$ be a quartet in $L$.  Assume that $w \in
\Leaves(T(u_q,v))$, $x \in \Leaves(T(u_r,v))$, $y \in \Leaves(T(u_s,v))$, and $z \in \Leaves(T(u_t,v))$, where $q,r,s,t$ must be distinct indices in $[d]$.
Then, $X$ is in $L_q$, $L_r$, $L_s$, and $L_t$.

Consider any $i \in [k]$.  By assumption, $T_i|X$ is a fully
resolved quartet tree. Assume, without loss of generality, that $T_i|X = wx|yz$. Then, $T^{(qr)}|X$ and $T^{(st)}|X$ agree with $T_i|X$, so
$T_i|X$ contributes $+1$ to $f_{qr}$ and $f_{st}$, respectively. This double contribution is due to the symmetry of quartets. On the other hand,
$T^{(qs)}|X$, $T^{(qt)}|X$, $T^{(rs)}|X$, and $T^{(rt)}|X$ disagree with $T_i|X$, so
$T_i|X$ contributes $+1$ to $a_{qs}$, $a_{qt}$, $a_{rs}$, and $a_{rt}$, respectively. Furthermore,
if at least one of $t_1, t_2 \not\in \{q,r,s,t\}$, then $T_i|X$ contributes nothing
to $f_{t_1 t_2}$ or $a_{t_1 t_2}$, since the quartet tree $T^{(t_1 t_2)}|X$ is not fully resolved. Therefore, similar to the rooted case, we have the following equalities.
\begin{align}~\label{eqn:aq2}
\sum_{\substack{q,r \in [d] \\ q \ne r}} a_{qr} & = 4k \cdot |L| \\
\label{eqn:fq2} \sum_{\substack{q,r \in [d] \\ q \ne r}} f_{qr} & = 2k \cdot |L|
\end{align}

Now suppose that for all $q,r \in [d]$, $q \ne r$, $f_{qr} < a_{qr}/2.$ This yields the following contradiction:
\begin{equation*}
2k \cdot |L| = \sum_{\substack{q,r \in [d] \\ q\ne r}} f_{qr} < \frac{1}{2} \sum_{\substack{q,r \in [d] \\ q \ne r}} a_{qr}
 = 2k \cdot |L|.
\end{equation*}
Here, the first equality follows from Equation~\eqref{eqn:aq2} and
the last equality follows from Equation~\eqref{eqn:fq2}.  Thus,
there must be some $q,r \in [d]$, $q \ne r$, such that $f_{q,r} \geq a_{qr}/2.$
\end{proof}


\begin{lemma}\label{lem:Pull_Out_and_Pull_2_Out}
Let $\Prof$ be a profile for $[k]$ over $\Species$ consisting entirely of fully resolved rooted trees or fully resolved unrooted trees.  Let $T$ be a phylogeny for $\Species$; $T$ is rooted or unrooted according to whether $\Prof$ consists of rooted or unrooted trees.   Suppose $T$ contains an unresolved node $v$, and suppose $p \geq 2/3$.  Then,  the following holds.
\begin{enumerate}[(i)]
\item
If $T$ is rooted, $v$ has a child $u$ such that $d^{(p)}(\widehat{T},\Prof) \leq d^{(p)}(T,\Prof)$, where
$\widehat{T} = \textsc{Pull-Out}(T,u)$.
\item
If $T$ is unrooted, $v$ has two neighbors $u_q$ and $u_r$ such
that $d^{(p)}(\widehat{T},\Prof) \leq d^{(p)}(T,\Prof)$, where
$\widehat{T} = \textsc{Pull-2-Out}(T,u_q,u_r)$.
\end{enumerate}
\end{lemma}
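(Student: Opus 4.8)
The plan is to localize the effect of the pull-out operation on the distance to the profile and then balance that effect against the voting bound of Lemma~\ref{lemma:Pull_Out_Aux}. I will treat the rooted case (i) in detail; the unrooted case (ii) is entirely analogous, with triplets, $\textsc{Pull-Out}$, $L_q$, $f_q$, $a_q$ replaced by quartets, $\textsc{Pull-2-Out}$, $L_{qr}$, $f_{qr}$, $a_{qr}$.

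First I would establish the governing structural fact: for every triplet $X \notin L_q$ we have $T^{(q)}|X = T|X$, whereas for $X \in L_q$ the restriction $T|X$ is unresolved while $T^{(q)}|X$ is fully resolved. This follows by tracing how $\textsc{Pull-Out}(T,u_q)$ alters the topology. Since the operation only splits $v$ and otherwise leaves the tree intact, a triplet's restriction can change only when its three leaves lie in three distinct subtrees hanging off $v$; among those, exactly the triplets having one leaf in $T(u_q)$ become resolved (as $x|yz$ with $x$ the leaf from $T(u_q)$), and these are precisely the members of $L_q$. Every other triplet induces the same restriction in $T$ and in $T^{(q)}$.

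Second, using the additive decomposition $d^{(p)}(A,\Prof) = \sum_{i} \sum_X d^{(p)}(A|X, T_i|X)$ already exploited in the proof of Theorem~\ref{theorem:basic_dist_properties_1}(iii), I would compute the change $\Delta = d^{(p)}(\widehat{T},\Prof) - d^{(p)}(T,\Prof)$, where $\widehat{T} = T^{(q)}$. Only triplets in $L_q$ contribute. For $X \in L_q$ and a (fully resolved) input tree $T_i$, the contribution of $X$ to $d^{(p)}(T, T_i)$ is $p$ (since $X$ is resolved in $T_i$ but not in $T$), while its contribution to $d^{(p)}(\widehat{T}, T_i)$ is $0$ if $T_i|X$ agrees with $T^{(q)}|X$ and $1$ if it disagrees. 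Summing the per-pair change ($-p$ for each agreement, $1-p$ for each disagreement) over all $X \in L_q$ and all $i \in [k]$, and recalling the definitions of $f_q$ and $a_q$, gives the clean expression
\[\Delta = a_q(1-p) - p\,f_q,\]
and the identical computation in the unrooted case yields $\Delta = a_{qr}(1-p) - p\,f_{qr}$.

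Third, I would invoke Lemma~\ref{lemma:Pull_Out_Aux} to choose an index $q$ with $f_q \geq a_q/2$, i.e.\ $a_q \leq 2f_q$. Since $p \leq 1$ gives $1-p \geq 0$, we obtain $\Delta \leq 2f_q(1-p) - p\,f_q = f_q(2 - 3p)$, which is $\leq 0$ whenever $p \geq 2/3$ (because $f_q \geq 0$). Hence $d^{(p)}(\widehat{T},\Prof) \leq d^{(p)}(T,\Prof)$, and taking $u = u_q$ identifies the required child, proving (i); the unrooted bound $a_{qr} \leq 2 f_{qr}$ closes (ii) in the same way. The hard part will be the first step: verifying the structural invariance ``$T^{(q)}|X = T|X$ for $X \notin L_q$,'' especially in the unrooted case, where one must check, over all ways the four leaves of a quartet distribute among the subtrees around $v$ (and whether $u_q, u_r$ are among those subtrees), that only the four-subtree quartets with a leaf in each of $T(u_q,v)$ and $T(u_r,v)$ change status. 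Once this localization is pinned down, the remaining arithmetic is routine, and the threshold $p \geq 2/3$ emerges exactly from the requirement $2 - 3p \leq 0$.
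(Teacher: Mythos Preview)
Your proposal is correct and follows essentially the same approach as the paper: localize the change in distance to the triplets (quartets) in $L_q$ (resp.\ $L_{qr}$), obtain the identity $d^{(p)}(T^{(q)},\Prof) - d^{(p)}(T,\Prof) = (1-p)a_q - p f_q$, and then invoke Lemma~\ref{lemma:Pull_Out_Aux} together with $p \geq 2/3$. You are somewhat more explicit than the paper about the structural invariance $T^{(q)}|X = T|X$ for $X \notin L_q$ and about the final arithmetic $\Delta \leq f_q(2-3p)$, but the argument is the same.
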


\begin{proof}
We will show that in the rooted case, for all $q \in [d]$,
\begin{equation}\label{eqn:dist_reduction}
d^{(p)}(T^{(q)},\Prof) = d^{(p)}(T,\Prof) - p \cdot f_q +
(1-p) \cdot a_q.
\end{equation}
And, similarly, in the unrooted case,  for all $q,r \in [d]$,
\begin{equation}\label{eqn:dist_reduction_2}
d^{(p)}(T^{(qr)},\Prof) = d^{(p)}(T,\Prof) - p \cdot f_{qr} +
(1-p) \cdot a_{qr}.
\end{equation}

To verify this, consider any triplet or quartet $X \in L_q$.  For every $j$
such that $T^{(q)}|X$ or $T^{(qr)}|X$ is identical to $T_j|X$, the net
change in the distance from $\Prof$ is $-p$, since, for this $X$,
$T_j$ contributes $p$ to the distance to $T$, but contributes $0$
to the distance to $T^{(q)}$ or $T^{(qr)}$.  For every $j$ such that
$T^{(q)}|X$ or $T^{(qr)}|X$ is different from $T_j|X$, the net change in the
distance from $\Prof$ is $1-p$, since, for this $X$, $T_j$
contributes $p$ to the distance to $T$, but contributes $+1$ to
the distance to $T^{(q)}$ or $T^{(qr)}$.

Now, for the rooted case, choose an $q^* \in [d]$ such that $f_{q^*} \geq a_{q^*}/2$; for the unrooted case, choose two indices $q^*, r^*\in [d]$, $q^* \ne r^*$, such that $f_{q^* r^*} \geq a_{q^* r^*}/2$.  The existence of such a $q^*$ (or $q^*$ and $r^*$) is guaranteed by
Lemma~\ref{lemma:Pull_Out_Aux}.
Then, Equation~\eqref{eqn:dist_reduction} and $p \geq 2/3$ imply that
$d^{(p)}(T^{(q^*)},\Prof) \leq d^{(p)}(T,\Prof)$. Similarly,
Equation~\eqref{eqn:dist_reduction_2} and $p \geq 2/3$ imply that
$d^{(p)}(T^{(q^* r^*)},\Prof) \leq d^{(p)}(T,\Prof)$.
\end{proof}

\begin{proof}[Proof of
Theorem~\ref{theorem:basic_dist_properties_2}]If $\Prof$ consists
of only fully-resolved trees, then any phylogeny $T$ can be
transformed into a fully-resolved tree $T'$ such that
$d^{(p)}(T',\Prof) \leq d^{(p)}(T,\Prof)$ by doing the following.
First, let $T' = T$. Next, while $T'$ contains an unresolved node,
perform the following three steps:
\begin{enumerate}
\item
Pick any unresolved node
$v$ in $T'$.

\item
If $T$ is rooted, find a child $u$ of $v$ such that $d^{(p)}(\widehat{T},\Prof) \leq d^{(p)}(T,\Prof)$, where
$\widehat{T} = \textsc{Pull-Out}(T,u)$.
If $T$ is unrooted, find two neighbors $u_q$, $u_r$ of $v$ such that
$d^{(p)}(\widehat{T},\Prof) \leq d^{(p)}(T,\Prof)$, where
$\widehat{T} = \textsc{Pull-2-Out}(T, u_q, u_r)$.

\item
Replace $T'$ by $\widehat{T}$.
\end{enumerate}

Note that the existence of a node $u$ such as the one required in Step 2 is guaranteed by Lemma~\ref{lem:Pull_Out_and_Pull_2_Out}. Thus, for $p \geq 2/3$, there always exists a fully-resolved median tree
relative to $d^{(p)}$.
\end{proof}

The proof of Theorem~\ref{theorem:basic_dist_properties_2} implies that if $p >
2/3$ and the input trees are fully resolved, the median tree
relative to $d^{(p)}$ \emph{must} be fully resolved.  On the other
hand, it is easy to show that when $p \in [1/2,2/3)$, there are
profiles of fully resolved trees whose median tree is only
partially resolved.

\section{Relationships among the metrics}\label{sec:relationships}

We do not know whether the Hausdorff triplet or Hausdorff quartet distances
are computable in polynomial time.  Indeed, we suspect that,
unlike their counterparts for partial rankings, this may not be
possible.  On the positive side, we show here that, in a broad
range of cases, it is possible to obtain an approximation to the
Hausdorff distance by exploiting its connection with parametric
distance. As in the previous section, our results apply to both
triplet and quartet distances. Our first result, which is proved later in this section, is as follows.

\begin{lemma}\label{lem:Hausdorff_lower_bound}
For every two phylogenies $T_1$ and $T_2$ over the same set of taxa,
\[d_{\mathrm{Haus}}(T_1,T_2) \geq |\DiffRes(T_1,T_2)| +
\frac{2}{3}\cdot \max \{|\Res_1(T_1,T_2)|, |\Res_2(T_1,T_2)|\}.\]
\end{lemma}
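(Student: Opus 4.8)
The plan is to bound the two terms of the Hausdorff maximum in \eqref{eqn:hausdorff} separately and then combine them. Fix fully resolved trees $t_1 \in \FullRef(T_1)$ and $t_2 \in \FullRef(T_2)$; recall that for such trees $d(t_1,t_2)$ is just the number of triplets (quartets) resolved differently. I would decompose this count according to the five classes $\SameRes,\DiffRes,\Res_1,\Res_2,\UnRes$ of $T_1$ and $T_2$. The bookkeeping I rely on is twofold: every $X \in \DiffRes(T_1,T_2)$ satisfies $t_1|X = T_1|X \neq T_2|X = t_2|X$ no matter how the trees are refined, so these always contribute $|\DiffRes(T_1,T_2)|$; and for $X \in \Res_1(T_1,T_2)$ the restriction $t_1|X = T_1|X$ is already fixed, so the contribution of $\Res_1$ depends on $t_2$ alone. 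Discarding the nonnegative contributions of $\Res_2,\UnRes,\SameRes$ gives, for \emph{every} $t_1 \in \FullRef(T_1)$,
\[
d(t_1,t_2) \geq |\DiffRes(T_1,T_2)| + |\{X \in \Res_1(T_1,T_2) : t_2|X \neq T_1|X\}|.
\]
Minimizing over $t_1$ and then maximizing over $t_2$ yields
\[
\max_{t_2 \in \FullRef(T_2)} \min_{t_1 \in \FullRef(T_1)} d(t_1,t_2) \geq |\DiffRes(T_1,T_2)| + \max_{t_2 \in \FullRef(T_2)} |\{X \in \Res_1 : t_2|X \neq T_1|X\}|,
\]
and, by interchanging the roles of $T_1$ and $T_2$, the other Hausdorff term is at least $|\DiffRes(T_1,T_2)| + \max_{t_1} |\{X \in \Res_2 : t_1|X \neq T_2|X\}|$.

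The crucial remaining step is to show that a refinement of $T_2$ can force disagreement on at least a $\tfrac{2}{3}$ fraction of $\Res_1$, i.e. $\max_{t_2} |\{X \in \Res_1 : t_2|X \neq T_1|X\}| \geq \tfrac{2}{3}|\Res_1(T_1,T_2)|$. I would establish this probabilistically. Let $t_2$ be a uniformly random element of $\FullRef(T_2)$, obtained by refining each unresolved node of $T_2$ into a uniformly random binary structure, independently across nodes. Each $X \in \Res_1$ is unresolved in $T_2$, so its leaves fall into three (respectively four) distinct subtrees hanging off a single unresolved node $v$, and $t_2|X$ is determined solely by the random refinement at $v$. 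Since the uniform distribution on refinements of $v$ is invariant under permuting those subtrees, and the permutation action is transitive on the three resolved topologies of $X$, each topology occurs with probability $\tfrac{1}{3}$; as $T_1|X$ is one fixed topology, $\Pr[t_2|X \neq T_1|X] = \tfrac{2}{3}$. By linearity of expectation the expected number of $X \in \Res_1$ with $t_2|X \neq T_1|X$ is $\tfrac{2}{3}|\Res_1(T_1,T_2)|$, so some refinement attains at least this value. The same argument applied to $\Res_2$ and refinements of $T_1$ gives the symmetric bound, and taking the maximum of the two Hausdorff terms produces exactly $|\DiffRes(T_1,T_2)| + \tfrac{2}{3}\max\{|\Res_1(T_1,T_2)|, |\Res_2(T_1,T_2)|\}$.

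The step I expect to require the most care is the uniformity claim — that a uniformly random refinement of an unresolved node resolves each previously unresolved triplet/quartet equally often in each of its three ways. This rests on the relabeling symmetry of the uniform distribution together with transitivity of the block-permutation action on the three topologies (a $3$-cycle on the three relevant blocks for triplets, and the surjection $S_4 \twoheadrightarrow S_3$ onto the three pairings for quartets). One must also confirm the supporting bookkeeping: that distinct unresolved nodes are refined independently and that each $X \in \Res_1$ is governed by a unique node, so that linearity of expectation applies cleanly even though triplets sharing a node are correlated. If one prefers to avoid the probabilistic method, the same $\tfrac{2}{3}$ factor can be obtained by exhibiting three explicit refinements whose induced resolutions on every relevant block-triple are pairwise distinct, since then any fixed target disagrees with at least two of the three.
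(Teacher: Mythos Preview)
Your argument is correct and takes a genuinely different route from the paper's. The paper proves the same inequality constructively: at each unresolved node $v$ of $T_1$ (when bounding the $\Res_2$ side) it invokes a local averaging lemma (Lemma~\ref{lem:bad_resolution}) showing that some single \textsc{Pull-Out} (or \textsc{Pull-2-Out}) step resolves a batch of triplets/quartets with at least twice as many disagreements as agreements against $T_2$; iterating these steps yields an explicit refinement $T_1'$ with $|\DiffRes(T_1',T_2)| \geq |\DiffRes(T_1,T_2)| + \tfrac{2}{3}|\Res_2(T_1,T_2)|$. Your approach replaces this greedy construction by a one-shot global average: sample a full refinement by resolving every polytomy uniformly and independently, observe that the relabeling symmetry forces each $X$ to land in each of its three resolved topologies with probability $\tfrac{1}{3}$, and conclude by linearity of expectation. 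This is shorter and avoids the auxiliary lemma entirely; the paper's version, in exchange, is algorithmic (it actually builds the witnessing refinement in polynomial time) and reuses the \textsc{Pull-Out} machinery already developed for Theorem~\ref{theorem:basic_dist_properties_2}. Your care points are the right ones: the product decomposition $\FullRef(T_2) \cong \prod_v \{\text{binary resolutions of }v\}$ justifies both the independence across nodes and the per-node $S_d$-symmetry you invoke, and since each $X \in \Res_1$ is governed by a unique polytomy the expectation computation needs no independence between triplets, only linearity.
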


An upper bound on $d_{\mathrm{Haus}}$ is obtained by assuming that
$T_1$ and $T_2$ are refined so that the triplets (quartets) in
$\Res_1(T_1,T_2)$, $\Res_2(T_1,T_2)$, and $\UnRes(T_1,T_2)$ are
resolved differently in each refinement.  This gives us the following result, which we state without proof.

\begin{lemma}\label{lem:Hausdorff_upper_bound}
For every two phylogenies $T_1$ and $T_2$ over  the same set of taxa,
\[d_{\mathrm{Haus}}(T_1,T_2) \leq |\DiffRes(T_1,T_2)| +
|\Res_1(T_1,T_2)| + |\Res_2(T_1,T_2)| + |\UnRes(T_1,T_2)|.\]
\end{lemma}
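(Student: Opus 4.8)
The plan is to bound $d_{\mathrm{Haus}}$ by the \emph{diameter} of the two refinement sets, thereby sidestepping the nested max-min entirely. Since a minimum never exceeds a maximum, each of the two terms in the definition of $d_{\mathrm{Haus}}$ satisfies $\max_{t_1}\min_{t_2} d(t_1,t_2) \le \max_{t_1,t_2} d(t_1,t_2)$, so it suffices to show that $d(t_1,t_2) \le |\DiffRes(T_1,T_2)| + |\Res_1(T_1,T_2)| + |\Res_2(T_1,T_2)| + |\UnRes(T_1,T_2)|$ for \emph{every} pair $t_1 \in \FullRef(T_1)$ and $t_2 \in \FullRef(T_2)$. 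Because $t_1$ and $t_2$ are fully resolved, $\Res_1(t_1,t_2) = \Res_2(t_1,t_2) = \UnRes(t_1,t_2) = \emptyset$, so by the definition of the triplet (quartet) distance we have $d(t_1,t_2) = |\DiffRes(t_1,t_2)|$, that is, simply the number of triplets (quartets) on which $t_1$ and $t_2$ disagree.

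The key fact I would establish first is a \emph{locking} property of refinement: if a triplet (quartet) $X$ is fully resolved in a phylogeny $T$, then $t|X = T|X$ for every $t \in \FullRef(T)$. This follows from the monotonicity of restriction under refinement --- $T \preceq t$ implies $T|X \preceq t|X$ --- together with the observation that a fully resolved triplet (quartet) topology admits no proper refinement, so $T|X \preceq t|X$ with both sides fully resolved forces equality.

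Armed with the locking property, I would partition all ${n \choose 3}$ triplets (respectively, ${n \choose 4}$ quartets) over $[n]$ into the five sets $\SameRes$, $\DiffRes$, $\Res_1$, $\Res_2$, $\UnRes$ of $T_1$ and $T_2$, and count each class's contribution to $|\DiffRes(t_1,t_2)|$. For $X \in \SameRes(T_1,T_2)$, the locking property gives $t_1|X = T_1|X = T_2|X = t_2|X$, so these $X$ contribute $0$. Every $X$ in one of the four remaining classes contributes at most $1$, since each disagreement on a triplet (quartet) is counted once. Summing over those four classes yields $d(t_1,t_2) = |\DiffRes(t_1,t_2)| \le |\DiffRes(T_1,T_2)| + |\Res_1(T_1,T_2)| + |\Res_2(T_1,T_2)| + |\UnRes(T_1,T_2)|$; as this holds uniformly over all pairs of full refinements, it transfers to $d_{\mathrm{Haus}}(T_1,T_2)$ by the reduction in the first paragraph.

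The only nonroutine step is the monotonicity of restriction under refinement invoked in the locking property; everything else is the case analysis above. This monotonicity is a standard structural fact about displayed subtrees (cf. the restriction operation in Section~\ref{sec:preliminaries} and \cite{SempleSteel03}), and once it is in hand the proof reduces to a direct count. Intuitively, the bound is attained in the worst case because every $X$ in $\Res_1$, $\Res_2$, or $\UnRes$ can be forced to resolve in conflicting ways in the two chosen refinements, while those in $\DiffRes$ are already locked to conflicting resolutions --- exactly the worst-case refinement described informally just before the lemma.
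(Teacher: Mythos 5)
Your proposal is correct and is in essence the argument the paper intends: the paper states this lemma without proof, justifying it only by the remark that the worst case arises when the triplets (quartets) in $\Res_1(T_1,T_2)$, $\Res_2(T_1,T_2)$, and $\UnRes(T_1,T_2)$ resolve differently in the two refinements, which is exactly your per-triplet count of at most $1$ for those classes. Your diameter reduction ($\max$-$\min \le \max$-$\max$) together with the locking property (a triplet or quartet resolved in $T$ keeps its topology in every $t \in \FullRef(T)$, so $\SameRes(T_1,T_2)$ contributes $0$ and $\DiffRes(T_1,T_2)$ contributes exactly $1$) supplies precisely the rigor the paper omitted, with no gaps.
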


It is instructive to compare
Lemmas~\ref{lem:Hausdorff_lower_bound} and
\ref{lem:Hausdorff_upper_bound} with the situation for partial
rankings.  The Hausdorff version of Kendall's tau is obtained by
viewing each partial ranking as the set of all possible full
rankings that can be obtained by refining it (that is, ordering
elements within buckets). The distance is then the Hausdorff
distance between the two sets, where the distance between two
elements is the Kendall tau score. Critchlow~\cite{Critchlow80}
has given exact bounds on this distance measure, which allow it to
be computed efficiently and to establish an equivalence with the
parametric version of Kendall's tau defined in
Section~\ref{sec:properties} \cite{FaginKumarMahdianSivakumarVee06}.
To be precise, let $L_1$ and $L_2$ be two partial rankings.
Re-using notation, let $\DiffRes(L_1,L_2)$ be the set of all pairs
that are ordered differently in $L_1$ and $L_2$, $\Res_1(L_1,L_2)$
be the set of pairs that are ordered in $L_1$ but in the same
bucket in $L_2$, and $\Res_2(L_1,L_2)$ be the set of pairs that
are ordered in $L_2$ but in the same bucket in $L_1$.  Then, it
can be shown that $d_{\textrm{Haus}}(L_1,L_2) =
|\DiffRes(L_1,L_2)| + \max\{|\Res_1(L_1,L_2)|, |\Res_2(L_1,L_2)|
\}$ (see \cite{Critchlow80,FaginKumarMahdianSivakumarVee06}).

It seems unlikely that a similar simple expression can be obtained
for Hausdorff triplet or quartet distance.  There are at least two
reasons for this. Let $L_1$ and $L_2$ be partial rankings. Then,
it is possible to resolve $L_1$ so that it disagrees with $L_2$ in
any pair in $\Res_2(L_1,L_2)$. Similarly, there is a way to
resolve $L_2$ so that it disagrees with $L_1$ in any pair in
$\Res_1(L_1,L_2)$.  We have been unable to establish an analog of this property for trees; hence, the $2 \over 3$ factor in
Lemma~\ref{lem:Hausdorff_lower_bound}. The second reason is due to
the properties of the set $\UnRes(L_1,L_2)$.  It can be shown that
is one can refine rankings $L_1$ and $L_2$ in such a way that pairs of
elements that are unresolved in both rankings are resolved the
same way in the refinements. This seems impossible to do, in general, for
trees and leads to the presence of $|\UnRes(T_1,T_2)|$ in
Lemma~\ref{lem:Hausdorff_upper_bound}.

The above observations prevent us from establishing equivalence
between $d_{\mathrm{Haus}}$ and $d^{(p)}$, although they do not
disprove equivalence either. In any event, the next result shows that when the number of
triplets (quartets) that are unresolved in both trees is suitably
small, equivalence \emph{does} hold.

\begin{theorem}
Let $\beta$ be a positive real number.   Then, for every $p \in (0,1]$, Hausdorff distance
and parametric distance are equivalent when restricted
to pairs of trees $(T_1, T_2)$ such that
$|\UnRes(T_1,T_2)| \leq \beta (|\DiffRes(T_1,T_2)| +
|\Res_1(T_1,T_2)| + |\Res_2(T_1,T_2)|).$
\end{theorem}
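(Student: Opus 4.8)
The plan is to establish the equivalence directly from the definition, using the two preceding lemmas to sandwich $d_{\mathrm{Haus}}$ between constant multiples of $d^{(p)}$. Recall that equivalence requires constants $c_1, c_2 > 0$ with $c_1 d^{(p)}(T_1,T_2) \leq d_{\mathrm{Haus}}(T_1,T_2) \leq c_2 d^{(p)}(T_1,T_2)$ for all pairs in the restricted domain. By Proposition~\ref{prop:dp_equiv}, every $d^{(p)}$ with $p \in (0,1]$ is equivalent to, say, $d^{(1/2)}$; since equivalence is transitive, it suffices to prove the claim for a single convenient value of $p$. I would therefore fix $p = 1$ to keep the parametric expression clean, writing $d^{(1)}(T_1,T_2) = D + R_1 + R_2$, where I abbreviate $D = |\DiffRes(T_1,T_2)|$, $R_1 = |\Res_1(T_1,T_2)|$, $R_2 = |\Res_2(T_1,T_2)|$, and $U = |\UnRes(T_1,T_2)|$. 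The density hypothesis then reads $U \leq \beta(D + R_1 + R_2) = \beta \, d^{(1)}(T_1,T_2)$.

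For the lower bound, Lemma~\ref{lem:Hausdorff_lower_bound} gives
\[
d_{\mathrm{Haus}}(T_1,T_2) \geq D + \tfrac{2}{3}\max\{R_1,R_2\} \geq \tfrac{2}{3}\bigl(D + \max\{R_1,R_2\}\bigr) \geq \tfrac{1}{3}\bigl(D + R_1 + R_2\bigr),
\]
where the last step uses $\max\{R_1,R_2\} \geq \tfrac{1}{2}(R_1+R_2)$. This yields $d_{\mathrm{Haus}}(T_1,T_2) \geq \tfrac{1}{3} d^{(1)}(T_1,T_2)$, so I may take $c_1 = 1/3$; crucially, this bound holds with no density assumption at all. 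For the upper bound, Lemma~\ref{lem:Hausdorff_upper_bound} gives $d_{\mathrm{Haus}}(T_1,T_2) \leq D + R_1 + R_2 + U$. Here is exactly where the density hypothesis earns its keep: substituting $U \leq \beta(D+R_1+R_2)$ gives
\[
d_{\mathrm{Haus}}(T_1,T_2) \leq (D+R_1+R_2) + \beta(D+R_1+R_2) = (1+\beta)\, d^{(1)}(T_1,T_2),
\]
so I may take $c_2 = 1 + \beta$. Combining the two bounds establishes equivalence of $d_{\mathrm{Haus}}$ and $d^{(1)}$ on the restricted domain, and transitivity via Proposition~\ref{prop:dp_equiv} extends this to every $p \in (0,1]$.

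I do not anticipate a genuine obstacle here, since the theorem is essentially a bookkeeping corollary of the two bracketing lemmas; the only point requiring care is the interaction between the restriction of the domain and the transitivity argument. Specifically, Proposition~\ref{prop:dp_equiv} asserts equivalence of $d^{(p)}$ and $d^{(q)}$ over the \emph{entire} space of tree pairs, so its equivalence constants certainly apply on any subdomain, and composing them with $c_1, c_2$ above produces valid global constants for each $p$. The mildly delicate issue is that the constants $c_1, c_2$ must be independent of $n$ (the number of taxa); but $c_1 = 1/3$ and $c_2 = 1+\beta$ depend only on the fixed density parameter $\beta$, not on $n$ or on the particular trees, so the equivalence is uniform as required. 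If one wished to track the constants explicitly for general $p$ rather than routing through $p=1$, the only change is that the parametric distance becomes $D + p(R_1+R_2)$ and the resulting constants pick up factors of $p$ and $1/p$; the structure of the argument is unchanged.
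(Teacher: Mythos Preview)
Your proposal is correct and follows essentially the same approach as the paper: reduce to a single value of $p$ via Proposition~\ref{prop:dp_equiv}, then sandwich $d_{\mathrm{Haus}}$ between constant multiples of $d^{(p)}$ using Lemmas~\ref{lem:Hausdorff_lower_bound} and~\ref{lem:Hausdorff_upper_bound}, invoking the density hypothesis only for the upper bound. The paper reduces to $p=2/3$ (claiming, a bit loosely, that Lemma~\ref{lem:Hausdorff_lower_bound} gives $d^{(2/3)} \le d_{\mathrm{Haus}}$ outright) and takes $c = 3+3\beta$ for the upper bound; your choice of $p=1$ with explicit constants $c_1 = 1/3$ and $c_2 = 1+\beta$ is, if anything, more carefully justified.
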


\begin{proof}
By Proposition~\ref{prop:dp_equiv}, it suffices to show that $d_{\mathrm{Haus}}$ is equivalent to $d^{(2/3)}$.
Lemma~\ref{lem:Hausdorff_lower_bound} shows that $d^{(2/3)}(T_1,T_2) \le d_{\mathrm{Haus}}(T_1,T_2)$.   Thus, we only need to show that, under our assumption about $|\UnRes(T_1,T_2)|$, there is some $c$ such that $d_{\mathrm{Haus}}(T_1,T_2) \le c \cdot d^{(2/3)}(T_1,T_2)$.  The reader can verify that the result follows by choosing $c = 3+3\beta$ and invoking
Lemma~\ref{lem:Hausdorff_upper_bound}.
\end{proof}

The remainder of this section is devoted to the proof of Lemma~\ref{lem:Hausdorff_lower_bound}.  The argument
proceeds in two steps.  First, we show that $T_1$ can be
refined so that it disagrees with $T_2$ in at least two thirds of the triplets (quartets) in $\Res_2(T_1,T_2)$. Next, we show the existence of an analogous refinement of $T_2$. Note that the triplets (quartets) in $\DiffRes(T_1,T_2)$ are resolved differently in any refinements of $T_1$ and $T_2$.  This gives lower bounds for both arguments in the outer $\max$ of the definition of $d_{\mathrm{Haus}}(T_1,T_2)$ (Equation~\ref{eqn:hausdorff}) and yields the lemma.

Let $v$ be a node in $T_1$.  If $T_1$ is rooted, then, as in Section~\ref{sec:properties}, let $u_1, \dots, u_d$ denote the children of $v$ in $T_1$ and $T^{(q)}_1$ denote $\textsc{Pull-Out}(T,u_q)$. Define $\TripletMap_q(v)$ to be the set of all triplets $X \in \Res_2(T_1,T_2)$ such that (i) the lca of $X$ in $T_1$ is $v$ and
(ii) $T_1|X$ is unresolved but $T^{(q)}_1|X$ is fully resolved.  Let $\TripletMap(v) = \bigcup_{q = 1}^d \TripletMap_q(v)$.  Thus, $\TripletMap(v)$ is the set of triplets associated with $v$ that are resolved in $T_2$ but not in $T_1$.

If $T_1$ is unrooted, $u_1, \dots, u_d$ denote
the neighbors of $v$ in $T_1$ and $T^{(qr)}_1$ denotes $\textsc{Pull-2-Out}(T_1,u_{qr})$, where \textsc{Pull-2-Out} is the function defined in Section~\ref{sec:properties}. Define $\TripletMap_{qr}(v)$ to be the set of all quartets $X \in
\Res_2(T_1,T_2)$ such that (i) $T_1|X$ is a fan, (ii) the paths between any two distinct pairs of taxa in $X$ meet at $v$, and (iii) $T_1|X$ is unresolved but $T^{(qr)}_1|X$ is fully resolved.  Let $\TripletMap(v) = \bigcup_{q,r \in [d], q \ne r} \TripletMap_{qr}(v)$.  Thus, $\TripletMap(v)$ is the set of quartets associated with $v$ that are resolved in $T_2$ but not in $T_1$.

Define the following two sets for the rooted case.
\begin{eqnarray}
F_q & = & \{X \in \TripletMap_q(v) : T_2|X \text{ agrees with } T^{(q)}_1|X\} \\
A_q & = & \{X \in \TripletMap_q(v) : T_2|X \text{ disagrees with
} T^{(q)}_1|X\}.
\end{eqnarray}
Define the following two sets for the unrooted case.
\begin{eqnarray}
F_{qr} & = & \{X \in \TripletMap_{qr}(v) : T_2|X \text{ agrees with } T^{(qr)}_1|X\} \\
A_{qr} & = & \{X \in \TripletMap_{qr}(v) : T_2|X \text{ disagrees with
} T^{(qr)}_1|X\}.
\end{eqnarray}

The next result is, in a sense, a counterpart to Lemma~\ref{lemma:Pull_Out_Aux}.

\begin{lemma}\label{lem:bad_resolution}
For the rooted case, there exists an index $q \in [d]$ such that $|A_q| \geq 2 |F_q|$. For the unrooted case, there
exist two indices $q,r\in [d]$, $q \ne r$, such that $|A_{qr}| \geq 2 |F_{qr}|$.
\end{lemma}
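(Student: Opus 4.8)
The plan is to mirror the double-counting argument used in the proof of Lemma~\ref{lemma:Pull_Out_Aux}, but with the roles of ``agrees'' and ``disagrees'' interchanged: there we sought a \textsc{Pull-Out} resolution supported by the fully resolved input trees, whereas here we seek a resolution of $T_1$ that \emph{disagrees} with $T_2$ on as large a fraction as possible of the triplets (quartets) in $\TripletMap(v)$. The key observation is that each member of $\TripletMap(v)$ is resolved in exactly one way by $T_2$, while the candidate \textsc{Pull-Out} (\textsc{Pull-2-Out}) operations realize all possible resolutions of that triplet (quartet); hence agreement with $T_2$ is rare and disagreement is common, yielding the desired $2$-to-$1$ imbalance on average.

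For the rooted case, I would first note that a triplet $X=\{x,y,z\}\in\TripletMap(v)$ has its three leaves in three distinct child-subtrees of $v$, say $x\in\Leaves(T_1(u_q))$, $y\in\Leaves(T_1(u_r))$, $z\in\Leaves(T_1(u_s))$ with $q,r,s$ distinct; consequently $T^{(q)}_1|X=x|yz$, $T^{(r)}_1|X=y|xz$, and $T^{(s)}_1|X=z|xy$, while $T^{(t)}_1|X$ remains unresolved for every $t\notin\{q,r,s\}$ (pulling out a subtree containing no leaf of $X$ keeps $x,y,z$ together). Since $X\in\Res_2(T_1,T_2)$, tree $T_2$ resolves $X$ in exactly one of these three ways, so $X$ lies in exactly one of $F_q,F_r,F_s$ and in exactly two of $A_q,A_r,A_s$, contributing to no $F_t$ or $A_t$ with $t\notin\{q,r,s\}$. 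Summing over all $X\in\TripletMap(v)$ therefore gives $\sum_{q=1}^d |F_q| = |\TripletMap(v)|$ and $\sum_{q=1}^d |A_q| = 2|\TripletMap(v)|$, whence $\sum_q |A_q| = 2\sum_q |F_q|$. If $|A_q| < 2|F_q|$ held for every $q$, summing would yield $\sum_q|A_q| < 2\sum_q|F_q| = \sum_q|A_q|$, a contradiction; so some $q$ satisfies $|A_q|\geq 2|F_q|$.

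The unrooted case is entirely analogous, the only new wrinkle being the symmetry of quartets already handled in Lemma~\ref{lemma:Pull_Out_Aux}. A quartet $X=\{w,x,y,z\}\in\TripletMap(v)$ has its four leaves in four distinct neighboring subtrees $T_1(u_q,v)$, $T_1(u_r,v)$, $T_1(u_s,v)$, $T_1(u_t,v)$, and each of the three possible resolutions of $X$ is produced by exactly two of the \textsc{Pull-2-Out} operations (e.g. both $T^{(qr)}_1$ and $T^{(st)}_1$ yield $wx|yz$). Since $T_2$ resolves $X$ one way, $X$ lands in exactly two of the $F_{qr}$'s and in exactly four of the $A_{qr}$'s, giving $\sum_{q\ne r}|F_{qr}| = 2|\TripletMap(v)|$ and $\sum_{q\ne r}|A_{qr}| = 4|\TripletMap(v)|$, hence $\sum_{q \ne r}|A_{qr}| = 2\sum_{q \ne r}|F_{qr}|$. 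The same averaging contradiction then forces some pair $q\ne r$ with $|A_{qr}|\geq 2|F_{qr}|$. I expect the only delicate point to be this bookkeeping of which \textsc{Pull-2-Out} resolutions agree with $T_2$ and the resulting factor of two; but since it replays the counting of Lemma~\ref{lemma:Pull_Out_Aux} almost verbatim (merely reading off $T_2$'s single resolution in place of the $k$ input votes), no genuinely new obstacle arises.
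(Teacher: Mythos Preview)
Your proposal is correct and follows essentially the same approach as the paper: both arguments double-count the contributions of each $X\in\TripletMap(v)$ to the $F$'s and $A$'s (yielding $\sum_q|F_q|=|\TripletMap(v)|$, $\sum_q|A_q|=2|\TripletMap(v)|$ in the rooted case, and the doubled counts in the unrooted case), then derive a contradiction from the assumption that the desired inequality fails for every index. The only cosmetic difference is that the paper states the contrapositive hypothesis as $|F_q|>|A_q|/2$ rather than $|A_q|<2|F_q|$.
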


\begin{proof}
We start with the rooted case.
Consider any triplet $X = \{x,y,z\}$ in $\TripletMap(v)$. Assume
that $x \in \Leaves(T_1(u_q))$, $y \in \Leaves(T_1(u_r))$, and $z \in
\Leaves(T_1(u_s))$, where $q,r,s$ must be distinct indices in $[d]$.
Thus, $X$ is in $\TripletMap_q(v)$, $\TripletMap_r(v)$, and
$\TripletMap_s(v)$.

By definition of $\TripletMap(v)$, $T_2|X$ is a fully resolved
triplet tree. Assume that $T_2|X = x|yz$. Then, $T^{(q)}_1|X$
agrees with $T_2|X$, so $X$ contributes exactly one element to
$F_q$. On the other hand, both $T^{(r)}_1|X$ and
$T^{(s)}_1|X$ disagree with $T_2|X$, so $X$ contributes
exactly one element to $A_r$ and one element to $A_s$.
Furthermore, for any $t \not\in \{q,r,s\}$, $X$
contributes nothing to $F_t$ or $A_t$,  since the triplet tree
$T^{(t)}_1|X$ is not fully resolved. Therefore, we have that
\begin{equation}~\label{eqn:AFq}
\sum_{q = 1}^d |A_q| = 2 \cdot |\TripletMap(v)| \qquad
\text{and} \qquad \sum_{q = 1}^d |F_q| = |\TripletMap(v)|.
\end{equation}

Assume that for all $q \in [d]$, $|F_q| > |A_q|/2.$ This and
\eqref{eqn:AFq} imply that
\begin{equation*}
|\TripletMap(v)| = \sum_{q=1}^d |F_q| > \frac{1}{2} \sum_{q=1}^d
|A_q| = |\TripletMap(v)|,
\end{equation*}
a contradiction.

We now consider the unrooted case.
Consider any quartet $X = \{w,x,y,z\}$ in $\TripletMap(v)$. Assume
that $w \in \Leaves(T_1(u_q,v))$, $x \in \Leaves(T_1(u_r,v))$, $y \in \Leaves(T_1(u_s,v))$, and $z \in \Leaves(T_1(u_t,v))$, where $q,r,s,t$ must be distinct indices in $[d]$.
Thus, $X$ is in $\TripletMap_{qr}(v)$, $\TripletMap_{qs}(v)$, $\TripletMap_{qt}(v)$, $\TripletMap_{rs}(v)$, $\TripletMap_{rt}(v)$ and $\TripletMap_{st}(v)$.

By definition of $\TripletMap(v)$, $T_2|X$ is a fully resolved
quartet tree. Assume that $T_2|X = wx|yz$. Then, $T^{(qr)}_1|X$ and $T^{(st)}_1|X$ agree with $T_2|X$, so $X$ contributes exactly one element to
$F_{qr}$ and $F_{st}$. On the other hand, $T^{(qs)}_1|X$, $T^{(qt)}_1|X$, $T^{(rs)}_1|X$ and
$T^{(rt)}_1|X$ disagree with $T_2|X$, so $X$ contributes
exactly one element to $A_{qs}$, $A_{qt}$, $A_{rs}$ and $A_{rt}$, respectively.
Furthermore, for any $j_1$ and $j_2 \not\in \{q,r,s,t\}$, $X$
contributes nothing to $F_{j_1 j_2}$ or $A_{j_1 j_2}$,  since the quartet tree
$T^{(j_1 j_2)}_1|X$ is not fully resolved. Therefore, we have that
\begin{equation}~\label{eqn:AFq1q2}
\sum_{\substack{q,r \in [d] \\ q \ne r}} |A_{qr}| = 4 \cdot |\TripletMap(v)| \qquad
\text{and} \qquad \sum_{\substack{q,r \in [d] \\ q \ne r}} |F_{qr}| = 2 \cdot |\TripletMap(v)|.
\end{equation}

Assume that for all $q,r \in [d]$, $|F_{qr}| > |A_{qr}|/2.$ This and
\eqref{eqn:AFq1q2} imply that
\begin{equation*}
2 \cdot |\TripletMap(v)| = \sum_{\substack{q,r \in [d] \\ q \ne r}} |F_{qr}| > \frac{1}{2} \sum_{\substack{q,r \in [d] \\ q \ne r}}
|A_{qr}| = 2 \cdot |\TripletMap(v)|,
\end{equation*}
a contradiction.
\end{proof}

\begin{proof}[Proof of Lemma~\ref{lem:Hausdorff_lower_bound}]
Define the following functions.  For any two phylogenies
$T_1,T_2$ over $\Species$, let
\begin{eqnarray}
d_{H1} (T_1,T_2) & = & \max_{t_1 \in \FullRef(T_1)} \min_{t_2 \in
\FullRef(T_2)} d(t_1,t_2), \\
d_{H2} (T_1,T_2) & = & \max_{t_2 \in \FullRef(T_2)} \min_{t_1 \in
\FullRef(T_1)} d(t_1,t_2) .
\end{eqnarray}

We show that
\begin{eqnarray}\label{eqn:dH1}
d_{H1} (T_1,T_2) & \geq &  |\DiffRes(T_1,T_2)| + \frac{2}{3}\cdot
|\Res_2(T_1,T_2)| \\%
\label{eqn:dH2} d_{H2} (T_1,T_2) &  \geq & |\DiffRes(T_1,T_2)| +
\frac{2}{3}\cdot|\Res_1(T_1,T_2)| .
\end{eqnarray}
Since $d_\text{Haus} (T_1,T_2) = \max \{ d_{H1}(T_1,T_2),
d_{H2}(T_1,T_2) \}$, this proves Lemma~\ref{lem:Hausdorff_lower_bound}.

By symmetry, it suffices to prove Inequality~\eqref{eqn:dH1}. Our
argument relies on two observations. First, note that if $T_1'$ is
a refinement of $T_1$ (but possibly not a full refinement), then,
$d_{H1} (T_1,T_2) \geq d_{H1} (T_1',T_2)$. This holds because
$\FullRef(T_1') \subseteq \FullRef(T_1)$. Second, for any two
phylogenies $T_1$ and $T_2$, $d_{H1} (T_1,T_2) \geq
|\DiffRes(T_1,T_2)|$.  This holds because for any $t_1 \in
\FullRef(T_1)$, $t_2 \in \FullRef(T_2)$, we have that
$\DiffRes(T_1,T_2) \subseteq \DiffRes(t_1,t_2)$, and (by
definition) $d(t_1,t_2) = |\DiffRes(t_1,t_2)|$.

By the preceding observations, if we prove that it is possible to construct a refinement $T_1'$ of $T_1$ such that
$|\DiffRes(T_1',T_2)| \geq |\DiffRes(T_1,T_2)| + \frac{2}{3} |\Res_2(T_1,T_2)|$, then  Inequality~\eqref{eqn:dH1} follows.  The idea is to find a refinement $T_1'$ of $T_1$ such that for at least two-thirds of the triplets or quartets $X \in \Res_2(T_1,T_2)$, we have that $T_1'|X \neq T_2|X$.  To obtain the desired refinement of $T_1$, we initially set $T_1' = T_1$ and then perform the following steps
while they apply:
\begin{enumerate}
\item
Pick an unresolved node $v$ in $T_1'$ such
that $\TripletMap'(v) \neq \emptyset$, where $\TripletMap'(v)$ is the set of triplets (quartets) associated with $v$ that are resolved in $T_2$ but not in $T_1'$. In the rooted case, let $u_1,
\dots, u_d$ be the children of $v$; in the unrooted case, let $u_1,
\dots, u_d$ be the neighbors of $v$.
\item
For rooted trees, find a $q \in [d]$ such that $|A_q| \geq 2 |F_q|$ (such a $q$ exists by Lemma~\ref{lem:bad_resolution}).  For unrooted trees, find $q,r \in [d]$ such that $|A_{qr}| \geq 2
|F_{qr}|$ (such $q,r$ exist by Lemma~\ref{lem:bad_resolution}).
\item
In the rooted case, set $T_1' = \textsc{Pull-Out}(T_1',u_q)$; in the unrooted case, set $T_1' = \textsc{Pull-2-Out}(T_1',u_q,u_r)$.
\end{enumerate}

When this algorithm terminates, $\TripletMap'(v) = \emptyset$
for every $v \in \V(T_1')$.  Thus, $\Res_2(T_1',T_2) = \emptyset$.
Furthermore, the choice of $q$ (or $q_1$ and $q_2$) in step (2) guarantees that
$|\DiffRes(T_1',T_2)| \geq |\DiffRes(T_1,T_2)| + {2 \over 3} \cdot
|\Res_2(T_1,T_2)|.$
\end{proof}

\section{Computing parametric triplet distance} \label{sec:algorithms-triplets}

In this section we show that the parametric triplet distance
(PTD), $d^{(p)}$, between two phylogenetic trees $T_1$ and $T_2$
over the same set of $n$ taxa can be computed in $O(n^2)$ time.

Before we outline our PTD algorithm, we need some notation. Let $T$ be a rooted phylogenetic tree.  Then, $R(T)$
denotes the set of all triplets that are resolved in $T$ and
$U(T)$ denotes the set of all triplets that are unresolved in $T$.

The next proposition is easily proved.

\begin{proposition} \label{prop:triplet_obs}
For any two phylogenies $T_1$, $T_2$ over the same set of taxa,
\begin{enumerate}[(i)]
\item
$|\Res_1(T_1, T_2)| + |\UnRes(T_1, T_2)| = |U(T_2)|$
\item
$|\Res_2(T_1, T_2)| + |\UnRes(T_1, T_2)| = |U(T_1)|$,
\item
$|\SameRes(T_1, T_2)| +  |\DiffRes(T_1, T_2)| + |\Res_1(T_1, T_2)|
= |R(T_1)|$. \end{enumerate}
\end{proposition}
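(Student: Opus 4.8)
The plan is to prove Proposition~\ref{prop:triplet_obs} by a direct counting argument based on the partition of the set of all triplets induced by the five sets defined in Section~\ref{sec:distances}. The key observation is that for any two phylogenies $T_1, T_2$ over the same taxon set, every triplet $X$ falls into exactly one of the five categories $\SameRes(T_1,T_2)$, $\DiffRes(T_1,T_2)$, $\Res_1(T_1,T_2)$, $\Res_2(T_1,T_2)$, or $\UnRes(T_1,T_2)$, according to whether $X$ is resolved in each of $T_1$ and $T_2$ and, if resolved in both, whether the resolutions agree. These five sets are thus pairwise disjoint and their union is the set of all $\binom{n}{3}$ triplets.

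For part (i), I would observe that $U(T_2)$ is precisely the set of triplets unresolved in $T_2$. A triplet $X$ is unresolved in $T_2$ exactly when either $T_1|X$ is resolved (so $X \in \Res_1(T_1,T_2)$) or $T_1|X$ is unresolved (so $X \in \UnRes(T_1,T_2)$); these two cases are mutually exclusive and exhaustive among triplets unresolved in $T_2$. Hence $U(T_2) = \Res_1(T_1,T_2) \,\cup\, \UnRes(T_1,T_2)$ as a disjoint union, which gives the cardinality identity. Part (ii) follows by the symmetric argument, interchanging the roles of $T_1$ and $T_2$ and of $\Res_1$ and $\Res_2$. For part (iii), I would note that $R(T_1)$ is the set of triplets resolved in $T_1$, and a triplet resolved in $T_1$ must fall into exactly one of three cases depending on $T_2$: resolved in $T_2$ and agreeing ($\SameRes$), resolved in $T_2$ and disagreeing ($\DiffRes$), or unresolved in $T_2$ ($\Res_1$). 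This partitions $R(T_1)$ into the three corresponding disjoint sets, yielding the identity.

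Since the proposition is, as the paper states, "easily proved," I do not anticipate any genuine obstacle; the entire argument is a matter of carefully tracking which of the five classes each triplet belongs to and verifying that the three sets on the right-hand side of each identity partition the relevant set ($U(T_2)$, $U(T_1)$, or $R(T_1)$ respectively). The only point requiring mild care is making the disjointness explicit so that counting elements of a union equals summing cardinalities; this is immediate from the definitions, since membership in the five classes is determined by a deterministic classification of each triplet. I would present all three parts uniformly by first stating the disjoint-union decomposition at the set level and then passing to cardinalities.
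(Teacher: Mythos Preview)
Your proposal is correct and is exactly the natural argument implied by the paper, which in fact omits the proof entirely and simply states that the proposition ``is easily proved.'' Your partition-based counting argument is the intended one, and there are no gaps.
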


By Prop.~\ref{prop:triplet_obs} and
Eqn.~\eqref{eqn:param_tree_dist}, the parametric distance between
$T_1$ and $T_2$ can be expressed as
\begin{equation}\label{eqn:param_dist_2}
d^{(p)}(T_1, T_2) = |R(T_1)| - |\SameRes(T_1, T_2)| + p \cdot
(|U(T_1)| - |U(T_2)|) + (2p - 1) \cdot |\Res_1(T_1, T_2)|.
\end{equation}

Our PTD algorithm proceeds as follows. After an initial $O(n^2)$ preprocessing step (Section~\ref{trip:precomp}), the algorithm computes $|R(T_1)|$, $|U(T_1)|$ and
$|U(T_2)|$ using a $O(n)$-time procedure (Section~\ref{trip:RU}). Next, it computes $|\SameRes(T_1, T_2)|$ and
$|\Res_1(T_1, T_2)|$. As described in Sections~\ref{trip:sect:TD}
and~\ref{sec:R1_alg}, this takes $O(n^2)$ time. Then, it uses
these values to compute $d^{(p)}(T_1, T_2)$, in $O(1)$ time, via
Equation~\eqref{eqn:param_dist_2}. To summarize, we have the following result.

\begin{theorem} \label{triplet:theorem2}
The parametric triplet distance $d^{(p)}(T_1, T_2)$ for two rooted
phylogenetic trees $T_1$ and $T_2$ over the same set of $n$ taxa
can be computed in $O(n^2)$ time.
\end{theorem}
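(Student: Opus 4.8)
The plan is to reduce the computation of $d^{(p)}(T_1,T_2)$, via Equation~\eqref{eqn:param_dist_2}, to the evaluation of the five quantities $|R(T_1)|$, $|U(T_1)|$, $|U(T_2)|$, $|\SameRes(T_1,T_2)|$, and $|\Res_1(T_1,T_2)|$, and then to bound the time needed for each. Once these are in hand, Proposition~\ref{prop:triplet_obs} guarantees that Equation~\eqref{eqn:param_dist_2} is a valid identity, so a single $O(1)$ arithmetic combination produces the answer. Thus the entire argument is a running-time analysis of the subcomputations, and correctness of the final assembly comes for free.

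The single-tree counts are the easy part. I would first observe that a triplet is unresolved in a rooted tree $T$ exactly when its three leaves lie in three distinct children-subtrees of their least common ancestor, so that each unresolved triplet is charged to a unique node, namely its lca. Consequently, if a node has children whose subtrees contain $n_1,\dots,n_d$ leaves, it accounts for $\sum_{i<j<\ell} n_i n_j n_\ell$ unresolved triplets, a third elementary symmetric polynomial that can be accumulated in $O(d)$ time during a single scan of the children. Since subtree leaf counts are obtained by one linear-time postorder traversal and $\sum_v d_v = O(n)$, this yields $|U(T_1)|$ and $|U(T_2)|$ in $O(n)$ time, whence $|R(T_1)| = \binom{n}{3} - |U(T_1)|$ follows immediately.

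The substantive step, and the one I expect to be the main obstacle, is evaluating the two cross-tree counts $|\SameRes(T_1,T_2)|$ and $|\Res_1(T_1,T_2)|$ in $O(n^2)$ time, that is, without touching all $\Theta(n^3)$ triplets individually. The guiding idea is that a resolved triplet $x|yz$ is determined by its grouped pair $\{y,z\}$ together with its outgroup $x$, and that the grouping in $T_i$ is controlled purely by an lca relation: $T_i|\{x,y,z\}=x|yz$ iff $x$ lies outside the subtree rooted at $\mathrm{lca}_{T_i}(y,z)$. I therefore plan to compute $|\SameRes|$ by summing over the $\binom{n}{2}$ grouped pairs $\{y,z\}$, where for each pair the number of common outgroups is $n - |S_1| - |S_2| + |S_1 \cap S_2|$, with $S_i$ the leaf set of the subtree rooted at $\mathrm{lca}_{T_i}(y,z)$; this charges each shared resolved triplet exactly once, to its grouped pair. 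The term $|\Res_1|$ lacks a distinguished grouped pair (a triplet unresolved in $T_2$ is a fan there), so it will be handled by an analogous but separate node-indexed summation, exploiting the identities of Proposition~\ref{prop:triplet_obs} to relate it to the more accessible single-tree and shared counts.

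Making the per-pair work $O(1)$ is where the difficulty concentrates, and it is the role of the preprocessing step. The subtree sizes $|S_1|$ and $|S_2|$ are available in $O(1)$ per pair once subtree leaf counts and a constant-time lca data structure are built, but the cross term $|S_1 \cap S_2|$ couples the two trees and is the delicate quantity. My plan is to precompute a table $I[u][w] = |\,\Leaves(T_1(u)) \cap \Leaves(T_2(w))\,|$ over all pairs consisting of a node $u$ of $T_1$ and a node $w$ of $T_2$; this table has $O(n^2)$ entries and can be filled in $O(n^2)$ time by a bottom-up dynamic program that builds each node's row from those of its children. Given $I$, every pair $\{y,z\}$ contributes in $O(1)$ by indexing $I$ at $(\mathrm{lca}_{T_1}(y,z),\mathrm{lca}_{T_2}(y,z))$, yielding the $O(n^2)$ bound for both cross-tree counts. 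The crux I would verify most carefully is that this bookkeeping counts each triplet exactly once and treats the boundary cases correctly (pairs whose lca coincides with, or is an ancestor of, another relevant node), since a careless product of intersection sizes is precisely where over- and under-counting would creep in; this is what Sections~\ref{trip:precomp}, \ref{trip:RU}, \ref{trip:sect:TD}, and~\ref{sec:R1_alg} carry out in detail.
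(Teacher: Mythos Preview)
Your overall framework matches the paper exactly: both reduce $d^{(p)}$ via Equation~\eqref{eqn:param_dist_2} to the five quantities $|R(T_1)|$, $|U(T_1)|$, $|U(T_2)|$, $|\SameRes(T_1,T_2)|$, $|\Res_1(T_1,T_2)|$, and both precompute the same $O(n^2)$ intersection table $I[u][w]=|\Leaves(T_1(u))\cap\Leaves(T_2(w))|$. Your treatment of the single-tree counts is also essentially the paper's.

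Your computation of $|\SameRes|$ takes a genuinely different route. The paper indexes the sum by pairs of \emph{internal nodes} (one per tree), charging each shared triplet to the unique pair of edges that ``strictly induce'' it, and then uses an inclusion--exclusion over children to avoid overcounting. You instead index by the $\binom{n}{2}$ \emph{leaf pairs} $\{y,z\}$, observing that the shared outgroups for $\{y,z\}$ are exactly the leaves outside both lca-subtrees, counted as $n-|S_1|-|S_2|+|S_1\cap S_2|$. This is correct (each shared triplet has a unique grouped pair, so no double counting) and arguably cleaner; the price is that you need constant-time lca queries in both trees, which the paper's edge-based scheme avoids.

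There is, however, a real gap in your handling of $|\Res_1(T_1,T_2)|$. First, the remark that a triplet in $\Res_1$ ``lacks a distinguished grouped pair'' is off: such a triplet is resolved in $T_1$, so it \emph{does} have a unique grouped pair there, and your leaf-pair scheme can be adapted directly (for each $\{y,z\}$, count $x$ outside $S_1$ but lying in a child of $\mathrm{lca}_{T_2}(y,z)$ different from the children containing $y$ and $z$; all ingredients are $O(1)$ lookups in $I$ once you also have a constant-time ``child containing a given leaf'' query). Second, and more importantly, your stated fallback of ``exploiting the identities of Proposition~\ref{prop:triplet_obs} to relate it to the more accessible single-tree and shared counts'' cannot work: with only $|\SameRes|$ and the single-tree counts in hand, the linear relations of Proposition~\ref{prop:triplet_obs} leave the system underdetermined (e.g., they fix $|\Res_1|-|\Res_2|$ but not $|\Res_1|$ itself). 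The paper resolves this by computing $|\Res_1|$ directly with a second node-indexed summation $r_1(u,v)$ and an auxiliary $\gamma$ function (Section~\ref{sec:R1_alg}); you need either that, or the leaf-pair adaptation just sketched, to close the argument.
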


In the rest of this section we use the following notation.  We
write $\rt(T)$ to denote the root node of a tree $T$.  Let $v$ be a node
in $T$. Then, $\pa(v)$ denotes the parent of $v$ in $T$ and
$\ch(v)$ is the set of children of $v$.  We write $\overline{T(v)}$ to
denote the tree obtained by deleting $T(v)$ from $T$, as well as
the edge from $v$ to its parent, if such an edge exists.

\subsection{The preprocessing step} \label{trip:precomp}

The purpose of the preprocessing step is to calculate and store the following four quantities for every pair $(u, v)$, where $u \in \V(T_1)$ and $v \in \V(T_2)$: $|\Leaves(T_1(u)) \cap
\Leaves(T_2(v))|$, $|\Leaves(T_1(u)) \cap
\Leaves(\overline{T_2(v)})|$, $|\Leaves(\overline{T_1(u)}) \cap
\Leaves(T_2(v))|$, and $|\Leaves(\overline{T_1(u)}) \cap
\Leaves(\overline{T_2(v)})|$. These values are stored in a table so that any value can be accessed in $O(1)$ time by subsequent steps of the PTD algorithm.

\begin{lemma} \label{triplet:lemma5}
The values $|\Leaves(T_1(u)) \cap
\Leaves(T_2(v))|$, $|\Leaves(T_1(u)) \cap
\Leaves(\overline{T_2(v)})|$, $|\Leaves(\overline{T_1(u)}) \cap
\Leaves(T_2(v))|$, and $|\Leaves(\overline{T_1(u)}) \cap
\Leaves(\overline{T_2(v)})|$ can be collectively computed for every pair of nodes $(u, v)$, where $u \in \V(T_1)$ and  $v \in \V(T_2)$, in $O(n^2)$ time.
\end{lemma}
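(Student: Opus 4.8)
The plan is to compute all four intersection-size quantities simultaneously via a single bottom-up dynamic-programming traversal. The key observation is that there are exactly $O(n^2)$ pairs $(u,v)$ to handle (since each tree has $O(n)$ nodes), so to achieve $O(n^2)$ total time I must spend only $O(1)$ \emph{amortized} time per pair; I cannot afford to recompute leaf-set intersections from scratch. The four quantities are not independent: writing $a(u,v) = |\Leaves(T_1(u)) \cap \Leaves(T_2(v))|$, the other three are determined by $a$ together with the subtree leaf counts $|\Leaves(T_1(u))|$ and $|\Leaves(T_2(v))|$ and the total $n$, via inclusion--exclusion. For instance, $|\Leaves(T_1(u)) \cap \Leaves(\overline{T_2(v)})| = |\Leaves(T_1(u))| - a(u,v)$, and similarly the other two follow by subtraction, with the all-complement term recovered as $n - |\Leaves(T_1(u))| - |\Leaves(T_2(v))| + a(u,v)$. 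So it suffices to compute the single table $a(u,v)$ in $O(n^2)$ time and then fill in the remaining three tables in $O(1)$ per entry.

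To compute $a(u,v)$, I would first root-to-leaf precompute $|\Leaves(T_1(u))|$ for all $u$ and $|\Leaves(T_2(v))|$ for all $v$ in linear time by a standard postorder pass. The main engine is a double recursion over subtree structure. For a fixed $u \in \V(T_1)$ I process the nodes $v \in \V(T_2)$ in postorder: if $v$ is a leaf labeled $\ell$, then $a(u,v)$ equals $1$ if $\ell \in \Leaves(T_1(u))$ and $0$ otherwise; if $v$ is internal with children $v_1, \dots, v_m$, then
\begin{equation}
a(u,v) = \sum_{j=1}^{m} a(u,v_j),
\end{equation}
since the subtree leaf sets of the children of $v$ partition $\Leaves(T_2(v))$. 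The crucial point for the running time is that summing over children across all internal $v$ costs, for each fixed $u$, a total of $O(|\E(T_2)|) = O(n)$ work, because each node of $T_2$ contributes to exactly one parent sum. Hence for each of the $O(n)$ nodes $u$ the inner pass is $O(n)$, giving $O(n^2)$ overall.

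The one subtlety I must handle carefully is the leaf-membership test ``$\ell \in \Leaves(T_1(u))$'' used at the base case, since naively this is not constant time. The cleanest fix is to preprocess an ancestor/descendant oracle: assign each node of $T_1$ an interval label via an Euler tour (or in/out DFS numbering) so that a leaf $\ell$ lies in $\Leaves(T_1(u))$ if and only if the DFS number of $\ell$ lies in the interval associated with $u$; this label assignment takes $O(n)$ time and each membership query is then $O(1)$. With this in place, the base cases are handled in constant time and the recurrence above dominates. I expect the main obstacle to be purely bookkeeping rather than conceptual: ensuring that the accounting of child-sums is charged correctly so that the per-$u$ inner computation is genuinely $O(n)$ and not $O(n)$ times the degree, and confirming that the inclusion--exclusion derivation of the other three tables is valid (it relies on $\Leaves(T_1(u))$ and $\Leaves(\overline{T_1(u)})$ partitioning the full leaf set $[n]$, which holds whenever $u$ is not the root; the root and edge-deletion boundary cases should be checked separately). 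Once $a(u,v)$ is tabulated, the remaining three tables follow in $O(n^2)$ total time, completing the proof.
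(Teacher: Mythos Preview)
Your proposal is correct and follows essentially the same dynamic-programming approach as the paper: compute the single table $a(u,v)$ bottom-up and derive the other three by inclusion--exclusion. The only minor difference is that the paper recurses symmetrically on both trees---summing over $\ch(u)$ when $u$ is internal, and over $\ch(v)$ when $u$ is a leaf but $v$ is not---so that the base case reduces to checking whether two leaves coincide, which obviates your Euler-tour ancestor oracle.
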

\begin{proof}
We first observe that for each $u \in \V(T_1)$, the value $|\Leaves(T_1(u))|$ can be computed in $O(n)$ time by a simple post order traversal of $T_1$. The same holds for tree $T_2$.

Consider the value $|\Leaves(T_1(u)) \cap \Leaves(T_2(v))|$.  We consider three cases.
\begin{enumerate}
\item  If $u$ and $v$ are both leaf nodes then computing $|\Leaves(T_1(u)) \cap \Leaves(T_2(v))|$ is trivial.

\item  If $u$ is a leaf node, but $v$ is not a leaf node, then $$|\Leaves(T_1(u)) \cap \Leaves(T_2(v))| = \sum_{x \in \ch(v)} |\Leaves(T_1(u)) \cap \Leaves(T_2(x))|.$$

\item If $u$ is not a leaf node, then $$|\Leaves(T_1(u)) \cap \Leaves(T_2(v))| = \sum_{x \in \ch(u)} |\Leaves(T_1(x)) \cap \Leaves(T_2(v))|.$$
\end{enumerate}

We compute the value $|\Leaves(T_1(u)) \cap \Leaves(T_2(v))|$, for every pair $(u, v)$, using an interleaved post order traversal of $T_1$ and $T_2$. This traversal works as follows: For each node $u$ in a post order traversal of $T_1$, we consider each node $v$ in a post order traversal of $T_2$. This ensures that when the intersection sizes for a pair of nodes is computed, the set intersection sizes for all pairs of their children have already been computed.
The total time complexity for computing the required values in this way can be bounded as follows. 
For a pair of nodes $u$ and $v$ from $T_1$ and $T_2$ respectively, the value $|\Leaves(T_1(u)) \cap
\Leaves(T_2(v))|$ can be computed in $O(|\ch(u)| + |\ch(v)|)$ time and all the remaining three set intersection values in $O(1)$ time. Summing this over all possible pairs of edges, we get a total time of $O(\sum_{u \in \V(T_1)} \sum_{v \in \V(T_2)} |\ch(u)| + |\ch(v)|)$, which is $O(n^2)$.

Once the value $|\Leaves(T_1(u)) \cap
\Leaves(T_2(v))|$ has been computed for every pair $(u, v)$, the remaining quantities we seek can be computed using the following relations.
\begin{align*}
|\Leaves(T_1(u)) \cap \Leaves(\overline{T_2(v)})| & =  |\Leaves(T_1(u))| - |\Leaves(T_1(u)) \cap \Leaves(T_2(v))|,\\
|\Leaves(\overline{T_1(u)}) \cap \Leaves(T_2(v))| & =  |\Leaves(T_2(v))| - |\Leaves(T_1(u)) \cap \Leaves(T_2(v))|, \qquad \text{ and} \\
|\Leaves(\overline{T_1(u)}) \cap \Leaves(\overline{T_2(v)})| & = n - (|\Leaves(T_1(u))| + |\Leaves(T_2(v))| - |\Leaves(T_1(u)) \cap \Leaves(T_2(v))|).
\end{align*}
Thus, each of these values can be computed in $O(1)$ time, for a total of $O(n^2)$.
\end{proof}

We store these $O(n^2)$ values in an array indexed by $u$ and $v$, for each $u \in \V(T_1)$ and $v \in \V(T_2)$. This enables constant time insertion and look-up of any stored value, when the two relevant nodes are given.


\subsection{Computing $|R(T_1)|$, $|U(T_1)|$, and $|U(T_2)|$} \label{trip:RU}

Here we prove the following result.

\begin{lemma} \label{triplet:lemma1}
Given a rooted phylogenetic tree $T$ over $n$ leaves, the values $|R(T)|$
and $|U(T)|$ can be computed in $O(n)$ time.
\end{lemma}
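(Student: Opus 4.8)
The plan is to characterize unresolved triplets by their least common ancestor and count them node-by-node. Recall that for three leaves $\{x,y,z\}$ the restriction $T|\{x,y,z\}$ is a fan (i.e., $\{x,y,z\}$ is unresolved) precisely when no two of the leaves share a least common ancestor strictly below $\mathrm{lca}(x,y,z)$; equivalently, writing $v = \mathrm{lca}(x,y,z)$, the three leaves lie in three \emph{distinct} subtrees hanging off $v$. If instead two of them, say $x$ and $y$, lay in the same child subtree of $v$, then $\mathrm{lca}(x,y)$ would be a proper descendant of $v$ and the triplet would be resolved as $z|xy$. Thus each unresolved triplet is associated with a unique node $v$ --- its lca --- together with a choice of three distinct children of $v$, one leaf drawn from each of the corresponding subtrees.

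First I would compute, by a single post-order traversal of $T$, the value $|\Leaves(T(w))|$ for every node $w$; this takes $O(n)$ time. For an internal node $v$ with children $c_1, \dots, c_d$, set $n_i = |\Leaves(T(c_i))|$. By the characterization above, the number of unresolved triplets whose lca is exactly $v$ is the third elementary symmetric function $\sum_{i<j<k} n_i n_j n_k$, since such a triplet arises by choosing three distinct child subtrees of $v$ and one leaf from each. Summing over all nodes gives
\[
|U(T)| = \sum_{v \in \V(T)} \; \sum_{\{c_i,c_j,c_k\} \subseteq \ch(v)} n_i\,n_j\,n_k,
\]
where nodes with fewer than three children (in particular, leaves) contribute zero. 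Finally, since every triplet is either resolved or unresolved, $|R(T)| = {n \choose 3} - |U(T)|$, which is computable in $O(1)$ time once $|U(T)|$ is known.

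It remains to bound the running time, and here lies the main obstacle: a naive evaluation of $\sum_{i<j<k} n_i n_j n_k$ at a node of degree $d$ costs $\Theta(d^3)$, so a single high-degree polytomy could already exceed the linear budget. I would avoid this by computing the symmetric function incrementally. Scanning the children $c_1, \dots, c_d$ once, I maintain running values $e_1 = \sum n_i$, $e_2 = \sum_{i<j} n_i n_j$, and $e_3 = \sum_{i<j<k} n_i n_j n_k$ over the children processed so far, applying for each newly scanned $n_i$ the updates $e_3 \leftarrow e_3 + n_i \cdot e_2$, then $e_2 \leftarrow e_2 + n_i \cdot e_1$, then $e_1 \leftarrow e_1 + n_i$ (in this order, so that each update uses the previous values). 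This yields the desired count in $O(|\ch(v)|)$ time per node.

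Summing over all nodes, the total cost is $O\!\left(\sum_{v \in \V(T)} |\ch(v)|\right)$. Since each non-root node is the child of exactly one node, $\sum_{v} |\ch(v)| = |\V(T)| - 1$; and because $T$ is a phylogeny with $n$ leaves in which every internal node has at least two children, $|\V(T)| \leq 2n-1 = O(n)$. Hence the entire computation of $|U(T)|$, and therefore of $|R(T)|$, runs in $O(n)$ time, as claimed.
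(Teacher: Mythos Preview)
Your proof is correct but follows a different decomposition from the paper's. You compute $|U(T)|$ directly by associating each unresolved triplet with its lca $v$ and counting, via the third elementary symmetric function of the child-subtree sizes, the triplets whose three leaves lie in distinct children of $v$; then $|R(T)| = \binom{n}{3} - |U(T)|$. The paper proceeds dually: it computes $|R(T)|$ directly by associating each resolved triplet $ab|c$ with the unique edge $\{v,\pa(v)\}$ that \emph{strictly induces} it (i.e., $v = \mathrm{lca}(a,b)$), and uses the formula $\phi(v) = \binom{\alpha_v}{2}\beta_v - \sum_{x \in \ch(v)} \binom{\alpha_x}{2}\beta_v$, where $\alpha_v = |\Leaves(T(v))|$ and $\beta_v = n - \alpha_v$; then $|U(T)| = \binom{n}{3} - |R(T)|$. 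Both decompositions yield an $O(|\ch(v)|)$-per-node cost and hence $O(n)$ overall. Your route is arguably more self-contained for this lemma, while the paper's notion of ``strictly induced'' is not introduced for its own sake: it is reused verbatim in the subsequent computations of $|\SameRes(T_1,T_2)|$ and $|\Res_1(T_1,T_2)|$, so the paper's choice amortizes conceptual overhead across the section.
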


Thus, $|R(T_1)|$, $|U(T_1)|$ and $|U(T_2)|$ can all be computed in $O(n)$ time.

To prove Lemma~\ref{triplet:lemma1}, we need some terminology and an auxiliary result. Let $e = (v, \pa(v))$ be any internal edge in $T$.
Consider any two leaves $x, y$ from $\Leaves(T(v))$, and any leaf
$z$ from $\Leaves(\overline{T(v)})$. Then, the triplet $\{x, y,
z\}$ must appear resolved as $xy|z$ in $T$; we say that the triplet tree $xy|z$ is \emph{induced} by the edge $(v, \pa(v))$. Note that
the same resolved triplet tree may be induced by multiple edges in $T$.
We say that the triplet tree $xy|z$ is \emph{strictly induced} by the edge $\{v, \pa(v)\}$ if $xy|z$ is induced by $(v, \pa(v))$ and, additionally, $x \in \Leaves(T(v_1))$ and $y \in \Leaves(T(v_2))$ for some $v_1, v_2 \in \ch(v)$ such that $v_1 \neq v_2$. 

\begin{lemma} \label{lem:strictlyinduced}
Given a tree $T$ and a triplet $X$, if $T|X$ is fully resolved then $T|X$ is strictly induced by exactly one edge in $T$.
\end{lemma}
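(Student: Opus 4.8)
The plan is to pin down the strictly inducing edge as the one sitting immediately above the least common ancestor (lca) of the two leaves that form the cherry of $T|X$. Assume without loss of generality that $T|X$ is the resolved triplet $ab|c$, so that $\{a,b\}$ is the cherry and $c$ is the outgroup, and set $w = \mathrm{lca}_T(a,b)$. The crux of the argument is the following characterization: an internal edge $(v,\pa(v))$ strictly induces $ab|c$ if and only if $v = w$. Everything else is bookkeeping with least common ancestors, so once this is in hand the ``exactly one edge'' claim (existence plus uniqueness) follows immediately.

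For existence I would verify that $(w,\pa(w))$ is a genuine internal edge that strictly induces $ab|c$. Since $a \neq b$ are both descendants of $w = \mathrm{lca}_T(a,b)$, they lie in the subtrees rooted at two distinct children $v_1, v_2 \in \ch(w)$; in particular $w$ has at least two children, so $w$ is internal and $(w,\pa(w))$ is an internal edge. It then remains to check that $c \notin \Leaves(T(w))$: if $c$ were a descendant of $w$, then $\mathrm{lca}_T(a,b,c) = w$ and $T|X$ would be either a fan (were $c$ in a third child subtree) or one of $ac|b$, $bc|a$ (were $c$ in $a$'s or $b$'s child subtree), contradicting $T|X = ab|c$. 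Hence $c \in \Leaves(\overline{T(w)})$, and in passing $\Leaves(T(w)) \neq \Leaves(T)$ forces $w \neq \rt(T)$, so $\pa(w)$ exists. With $a \in \Leaves(T(v_1))$, $b \in \Leaves(T(v_2))$, $v_1 \neq v_2$, and $c$ outside $T(w)$, the edge $(w,\pa(w))$ strictly induces $ab|c$ by definition.

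For uniqueness I would take any internal edge $(v,\pa(v))$ that strictly induces $ab|c$ and argue $v = w$. Strict induction forces $a,b \in \Leaves(T(v))$, so $v$ is a common ancestor of $a$ and $b$, and therefore $v$ lies on the path from $w$ to the root (that is, $v$ is $w$ or a proper ancestor of $w$). If $v$ were a proper ancestor of $w$, then $w$ would lie in the subtree $T(x)$ of a single child $x \in \ch(v)$, and since $a$ and $b$ are both descendants of $w$ they would both lie in $\Leaves(T(x))$ --- contradicting the requirement in strict induction that $a$ and $b$ occupy two \emph{distinct} child subtrees of $v$. Hence $v = w$, and the strictly inducing edge is unique.

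The main obstacle is conceptual rather than technical: one must notice that mere induction of $ab|c$ is shared by every edge on the path from $w$ up to (but not including) $\mathrm{lca}_T(a,b,c)$, so without the extra clause the edge would be far from unique. It is exactly the ``different child subtrees'' condition in the definition of strict induction that is satisfied only at the bottom of this path, namely at $w$. Once this observation is isolated, both directions reduce to standard facts about least common ancestors, and the only care needed is the routine confirmation that $w$ is internal and non-root so that $(w,\pa(w))$ genuinely qualifies as an internal edge.
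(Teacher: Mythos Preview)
Your proof is correct and follows essentially the same approach as the paper: both identify the strictly inducing edge as $(w,\pa(w))$ where $w=\mathrm{lca}_T(a,b)$, and argue uniqueness via the ``distinct child subtrees'' clause. Your version is considerably more careful (explicitly checking $w\neq\rt(T)$, that $c\notin\Leaves(T(w))$, and spelling out the ancestor argument for uniqueness), whereas the paper's proof is a terse three-sentence sketch of the same idea.
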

\begin{proof}
Let $X  = \{a, b, c\}$. Without loss of generality, assume that $T|X = ab|c$. 
If $v$ denotes the lca of $a$ and $b$ in $T$, the edge $\{v, \pa(v)\}$ must induce $ab|c$. Moreover, $v$ must be the only node in $T$ for which there exist nodes $v_1, v_2 \in \ch(v)$ such that $a \in \Leaves(T(v_1))$ and $b \in \Leaves(T(v_2))$. Thus, there is exactly one edge in $T$ that strictly induces $T|X$.
\end{proof}

\begin{proof}[Proof of Lemma~\ref{triplet:lemma1}]
Since $|R(T)| + |U(T)| = {n \choose 3}$, given $|R(T)|$, the value $|U(T)|$ can be computed in $O(1)$ additional time.  Thus, we only need to show that the value of $|R(T)|$ can be computed in $O(n)$ time.

The first step is to traverse the tree $T$ in
post order to compute the values $\alpha_v = |\Leaves(T(v))|$ and
$\beta_v = n - \alpha_v$  at each node $v \in \V(T)$.  This takes $O(n)$ time.

For any $v \in \V(T) \setminus \{\rt(T)\}$, let $\phi(v)$ denote the number of triplets that are strictly induced by the edge $\{v,\pa(v)\}$ in tree $T$. Observe that any triplet that is strictly induced by an edge in $T$ must be fully resolved in $T$. Thus, Lemma~\ref{lem:strictlyinduced} implies that the sum of $\phi(v)$ over all internal nodes $v \in \V(T) \setminus \{\rt(T)\}$ yields the value $|R(T)|$.  We now show how to compute the value of $\phi(v)$.

Let $X = \{a, b, c\}$ be a triplet that is counted in $\phi(v)$. And, without loss of generality, let $T_1|X = ab|c$. It can be verified that $X$ must satisfy the following two conditions: (i) $a, b \in \Leaves(T(v))$ and $c \in \Leaves(\overline{T(v)})$, and (ii) there does not exist any $x \in \ch(v)$ such that $a, b \in \Leaves(T(x))$.
The number of triplets that satisfy condition (i) is ${\alpha_v \choose 2} \cdot \beta_v$, and the number of triplets that satisfy condition (i), but not condition (ii) is exactly $\sum_{x \in \ch(v)} {\alpha_x \choose 2} \cdot \beta_v$. Thus, $\phi(v) =
\gamma_v - \sum_{x \in \ch(v)} {\alpha_x \choose 2} \cdot
\beta_v$.

Computing $\phi(v)$ requires $O(|\ch(v)|)$ time; hence, the time complexity for computing $|R(T)|$ is $O(\sum_{v \in \V(T)}|\ch(v)|)$, which is $O(n)$.
\end{proof}

\subsection{Computing $|\SameRes(T_1,T_2)|$} \label{trip:sect:TD}

We now describe an $O(n^2)$ time algorithm to compute the size of the set $\SameRes(T_1,T_2)$ of shared triplets; that is, triplets that are fully and identically resolved in $T_1$ and $T_2$.

For any $u \in \V(T_1) \setminus (\rt(T_1) \cup \Leaves(T_1))$ and $v \in \V(T_2) \setminus (\rt(T_2)\cup \Leaves(T_2))$, let $s(u, v)$ denote the number of identical triplet trees strictly induced by edge $\{u, \pa(u)\}$ in $T_1$ and edge $\{v, \pa(v)\}$ in $T_2$.  We have the following result.



\begin{lemma} \label{triplet:lemmaMainShared}
Given $T_1$ and $T_2$, we have,
\begin{equation} \label{Lemma:eqnMain}
|\SameRes(T_1, T_2)| =  \sum_{\substack{u \in \V(T_1) \setminus (\rt(T_1) \cup \Leaves(T_1)), \\ v \in \V(T_2) \setminus (\rt(T_2) \cup \Leaves(T_2))}} s(u,v).
\end{equation}
\end{lemma}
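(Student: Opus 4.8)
The plan is to prove the identity by a double-counting argument that associates each shared triplet with the unique pair of edges --- one in $T_1$ and one in $T_2$ --- that strictly induces it. First I would unpack the right-hand side. For internal non-root nodes $u \in \V(T_1)$ and $v \in \V(T_2)$, the quantity $s(u,v)$ counts exactly those triplets $X$ for which $T_1|X$ is strictly induced by $\{u,\pa(u)\}$, $T_2|X$ is strictly induced by $\{v,\pa(v)\}$, and $T_1|X = T_2|X$. Since any strictly induced triplet tree is, by its very notation $ab|c$, fully resolved, every triplet tallied by some $s(u,v)$ is fully and identically resolved in both trees and hence lies in $\SameRes(T_1,T_2)$. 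So the double sum counts only shared triplets; what remains is to show it counts each of them exactly once.

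Next I would fix an arbitrary $X \in \SameRes(T_1,T_2)$, so that $T_1|X$ and $T_2|X$ are both fully resolved and equal. Applying Lemma~\ref{lem:strictlyinduced} to $T_1$ produces a unique edge $\{u,\pa(u)\}$ that strictly induces $T_1|X$, and applying it to $T_2$ produces a unique edge $\{v,\pa(v)\}$ that strictly induces $T_2|X$. I then need to verify that $u$ and $v$ actually range over the claimed index sets: the edge $\{u,\pa(u)\}$ exists, so $u \neq \rt(T_1)$, and strict induction forces $u$ to have two distinct children whose subtrees separate two of the three elements of $X$, so $u$ is an internal node; hence $u \in \V(T_1)\setminus(\rt(T_1)\cup\Leaves(T_1))$, and symmetrically $v \in \V(T_2)\setminus(\rt(T_2)\cup\Leaves(T_2))$. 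Consequently $X$ contributes $+1$ to the single summand $s(u,v)$ and to no other.

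Putting the two directions together, the assignment sending each $X \in \SameRes(T_1,T_2)$ to its pair of strictly inducing edges shows that every shared triplet is counted exactly once across the double sum, while no triplet outside $\SameRes(T_1,T_2)$ is ever counted; summing over all admissible pairs $(u,v)$ therefore yields $|\SameRes(T_1,T_2)|$, as claimed. I expect the only genuine obstacle --- and it is a mild one --- to be the bookkeeping in the previous paragraph confirming that the inducing endpoints land in the stated index sets; the substantive content is supplied entirely by the uniqueness half of Lemma~\ref{lem:strictlyinduced}, which is precisely what rules out double counting.
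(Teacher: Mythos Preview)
Your proposal is correct and follows essentially the same approach as the paper: both argue by double counting, invoking Lemma~\ref{lem:strictlyinduced} to pin down the unique strictly-inducing edge in each tree, checking that the endpoints are internal non-root nodes, and observing that anything counted by an $s(u,v)$ must by definition lie in $\SameRes(T_1,T_2)$. Your write-up is slightly more explicit about why $u$ and $v$ land in the stated index sets, but the substance is identical.
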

\begin{proof}
Consider any triplet $X \in \SameRes(T_1, T_2)$. 
 Since $T_1|X$ is fully resolved and $T_1|X = T_2|X$ then, by Lemma~\ref{lem:strictlyinduced}, there exists exactly one node $u \in \V(T_1)\setminus \rt(T_1)$ and one node $v \in \V(T_2)\setminus \rt(T_2)$ such that the edge $\{u, \pa(u)\}$ strictly induces $T_1|X$ in $T_1$, and edge $\{v, \pa(v)\}$ strictly induces $T_2|X$ in $T_2$. Additionally, neither $u$ nor $v$ can be leaf nodes in $T_1$ and $T_2$ respectively.
Thus, $X$ would be counted exactly once in the right-hand side of Equation~\eqref{Lemma:eqnMain} in the value $s(u, v)$. Moreover, by the definition of $s(u, v)$, any triplet tree that is counted on the right-hand side of Equation~\eqref{Lemma:eqnMain} algorithm must belong to the set $\SameRes(T_1, T_2)$. 
The Lemma follows.
\end{proof}

The following lemma shows how to compute the value of $s(u, v)$ using the values computed in the preprocessing step.


\begin{lemma} \label{lem:Shared_count}
Given any $u \in \V(T_1) \setminus (\rt(T_1) \cup \Leaves(T_1))$ and $v \in \V(T_2) \setminus (\rt(T_2)\cup \Leaves(T_2))$, $s(u,v)$ can be computed in $O(|\ch(u)|\cdot |\ch(v)|)$ time.
\end{lemma}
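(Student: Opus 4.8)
The plan is to give an explicit formula for $s(u,v)$ in terms of the four intersection quantities tabulated in the preprocessing step, and then to bound the cost of evaluating it. First I would characterize exactly which triplets are counted by $s(u,v)$. A triplet tree $ab|c$ is strictly induced by $\{u,\pa(u)\}$ in $T_1$ precisely when $a,b \in \Leaves(T_1(u))$ lie in two distinct child subtrees of $u$ and $c \in \Leaves(\overline{T_1(u)})$; the analogous statement holds for $\{v,\pa(v)\}$ in $T_2$. Hence a triplet is counted in $s(u,v)$ if and only if its ``cherry'' $\{a,b\}$ lies in $\Leaves(T_1(u)) \cap \Leaves(T_2(v))$ and is split across distinct children of $u$ in $T_1$ \emph{and} across distinct children of $v$ in $T_2$, while its outgroup $c$ lies in $\Leaves(\overline{T_1(u)}) \cap \Leaves(\overline{T_2(v)})$. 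Since both trees then resolve $\{a,b,c\}$ identically as $ab|c$, each such triplet is counted exactly once, so $s(u,v)$ factors as (number of admissible cherries) times (number of admissible outgroups).

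Next I would count each factor. The number of admissible outgroups is simply $|\Leaves(\overline{T_1(u)}) \cap \Leaves(\overline{T_2(v)})|$, which is available in $O(1)$ time from the preprocessing table. For the cherries, write $W = \Leaves(T_1(u)) \cap \Leaves(T_2(v))$ and count the pairs $\{a,b\} \subseteq W$ that are separated by the children of both $u$ and $v$ via inclusion--exclusion: the number of such pairs equals
\[\binom{|W|}{2} - \sum_{x \in \ch(u)} \binom{|\Leaves(T_1(x)) \cap \Leaves(T_2(v))|}{2} - \sum_{y \in \ch(v)} \binom{|\Leaves(T_1(u)) \cap \Leaves(T_2(y))|}{2} + \sum_{\substack{x \in \ch(u) \\ y \in \ch(v)}} \binom{|\Leaves(T_1(x)) \cap \Leaves(T_2(y))|}{2}.\]
Here the first subtraction removes pairs sharing a child subtree of $u$, the second removes pairs sharing a child subtree of $v$, and the final sum adds back the pairs sharing a child subtree in both trees; what remains is exactly the set of pairs separated in both. (I would note in passing that since $\Leaves(T_1(x)) \subseteq \Leaves(T_1(u))$ and $\Leaves(T_2(y)) \subseteq \Leaves(T_2(v))$, each intersection appearing in the sums is already one of the tabulated values, with no further intersection against $W$ needed.)

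Every quantity appearing above is one of the tabulated intersection sizes, accessible in $O(1)$ time, so I would finish with the time analysis. The term $\binom{|W|}{2}$ and the outgroup count cost $O(1)$; the two single sums cost $O(|\ch(u)|)$ and $O(|\ch(v)|)$; and the double sum costs $O(|\ch(u)| \cdot |\ch(v)|)$, which dominates. Multiplying the cherry count by the outgroup count then yields $s(u,v)$ in $O(|\ch(u)| \cdot |\ch(v)|)$ time. The main obstacle is the combinatorial bookkeeping in the second step: one must verify that the inclusion--exclusion isolates pairs separated \emph{simultaneously} in both trees (rather than merely in one or the other) and that no triplet is over- or under-counted once the cherry and outgroup factors are combined. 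The reduction to precomputed values, and hence the stated running time, follows directly once that formula is in hand.
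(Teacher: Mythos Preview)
Your proposal is correct and follows essentially the same approach as the paper: both characterize the triplets counted in $s(u,v)$ by the three conditions on the cherry $\{a,b\}$ and outgroup $c$, and both compute the count by the same inclusion--exclusion over children of $u$ and $v$ using the precomputed intersection sizes. The only cosmetic difference is that you factor out the common outgroup multiplier $|\Leaves(\overline{T_1(u)}) \cap \Leaves(\overline{T_2(v)})|$ before doing the inclusion--exclusion on cherries, whereas the paper carries it through all four terms $n_1,\dots,n_4$; the resulting formulas and running-time analysis are identical.
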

\begin{proof}
We will show that $s(u, v) = n_1(u, v) - n_2(u, v) - n_3(u, v) + n_4(u, v)$, where
\begin{align*}
n_1(u, v) & = {|\Leaves(T_1(u)) \cap \Leaves(T_2(v))| \choose 2} \cdot
|\Leaves(\overline{T_1(u)}) \cap \Leaves(\overline{T_2(v)})|,\\
n_2(u, v) & = \sum_{x \in \ch(u)} {|\Leaves(T_1(x)) \cap
\Leaves(T_2(v))| \choose 2} \cdot |\Leaves(\overline{T_1(u)}) \cap
\Leaves(\overline{T_2(v)})|,\\
n_3(u, v) & = \sum_{x \in \ch(v)} {|\Leaves(T_1(u)) \cap
\Leaves(T_2(x))| \choose 2} \cdot |\Leaves(\overline{T_1(u)}) \cap
\Leaves(\overline{T_2(v)})|, \quad \text{and}
\\n_4(u, v) & = \sum_{x \in \ch(u)} \sum_{y \in \ch(v)} {|\Leaves(T_1(x)) \cap \Leaves(T_2(y))| \choose 2} \cdot |\Leaves(\overline{T_1(u)}) \cap \Leaves(\overline{T_2(v)})|.
\end{align*}

Consider any triplet tree, $ab|c$, counted in $s(u, v)$. It can be verified that $ab|c$ must satisfy the following three conditions: (i) $a, b \in \Leaves(T_1(u)) \cap \Leaves(T_2(v))$ and $c \in \Leaves(\overline{T_1(u)}) \cap \Leaves(\overline{T_2(v)})$, (ii) there does not exist any $x \in \ch(u)$ such that $a, b \in \Leaves(T_1(x))$, and (iii) there does not exist any $x \in \ch(v)$ such that $a, b \in \Leaves(T_2(x))$.
Moreover, observe that any triplet tree $ab|c$ that satisfies these three conditions is counted in $s(u, v)$. 
Therefore, $s(u, v)$ is exactly the number of triplets trees that satisfy all three conditions (i), (ii) and (iii).

The number of triplet trees that satisfy condition (i) is given by $n_1(u, v)$. Some of the triplet trees that satisfy condition (i) may not satisfy conditions (ii) or (iii); these must not be counted in $s(u, v)$.  The value $n_2(u, v)$ is exactly the number of triplet trees that satisfy condition (i) but not condition (ii). Similarly, $n_3(u, v)$ is exactly the number of triplet trees that satisfy condition (i) but not (iii).
Thus, the second and third terms must be subtracted from the first term. However, there may be triplet trees that satisfy condition (i) but neither (ii) nor (iii), and, consequently,  get subtracted in both the second and third terms. In order to adjust for these, the value $n_4(u, v)$ counts exactly those triplet trees that satisfy condition (i) but not (ii) and (iii).
\end{proof}

A summary of our algorithm to compute $|\SameRes(T_1, T_2)|$ appears in Figure~\ref{SharedTriplets}.

\begin{figure}

\begin{algorithmic}[1]
\REQUIRE $\SameRes(T_1, T_2)$


\FOR{each internal node $u \in \V(T_1) \setminus \rt(T_1)$} \label{alg:line4}

\FOR{each internal node $v \in \V(T_2) \setminus \rt(T_2)$}

\STATE Compute $s(u, v)$.
\ENDFOR

\ENDFOR

\STATE \textbf{return} the sum of all computed $s(\cdot, \cdot)$.

\end{algorithmic}
\caption{Computing $|\SameRes(T_1,T_2)|$} \label{SharedTriplets}
\end{figure}

\begin{lemma} \label{triplet:lemma7}
Given two rooted phylogenetic trees $T_1$ and $T_2$ on the same $n$
leaves, the value $|\SameRes(T_1, T_2)|$ can be computed in
$O(n^2)$ time.
\end{lemma}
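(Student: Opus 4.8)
The plan is to assemble the three preceding lemmas into a single algorithm and then bound its running time. Lemma~\ref{triplet:lemmaMainShared} reduces the computation of $|\SameRes(T_1,T_2)|$ to summing $s(u,v)$ over all pairs of internal non-root nodes $u \in \V(T_1) \setminus (\rt(T_1) \cup \Leaves(T_1))$ and $v \in \V(T_2) \setminus (\rt(T_2) \cup \Leaves(T_2))$; Lemma~\ref{lem:Shared_count} shows how to evaluate each such $s(u,v)$; and Lemma~\ref{triplet:lemma5} supplies, in a single $O(n^2)$ preprocessing pass, all the leaf-set intersection counts that Lemma~\ref{lem:Shared_count} requires. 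The algorithm is therefore exactly the doubly-nested loop of Figure~\ref{SharedTriplets}: after preprocessing, iterate over every eligible $u$ in $T_1$ and $v$ in $T_2$, compute $s(u,v)$, and return the accumulated sum. Correctness is immediate from Lemma~\ref{triplet:lemmaMainShared}.

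First I would account for the preprocessing. By Lemma~\ref{triplet:lemma5}, the four intersection quantities are tabulated for every pair $(u,v) \in \V(T_1) \times \V(T_2)$ in $O(n^2)$ time, and each is afterward retrievable in $O(1)$ time. This is precisely what lets the formula for $s(u,v)$ in Lemma~\ref{lem:Shared_count} be evaluated term by term: the terms $n_2$, $n_3$, and $n_4$ reference intersection counts for \emph{children} of $u$ and $v$, and since those children are themselves nodes of $T_1$ and $T_2$, all the needed values already live in the preprocessing table.

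The one real calculation is the time bound for the main loop. By Lemma~\ref{lem:Shared_count}, evaluating $s(u,v)$ costs $O(|\ch(u)| \cdot |\ch(v)|)$, so the total cost of the loop is
\[
O\!\left( \sum_{u \in \V(T_1)} \sum_{v \in \V(T_2)} |\ch(u)| \cdot |\ch(v)| \right)
= O\!\left( \Bigl( \sum_{u \in \V(T_1)} |\ch(u)| \Bigr) \Bigl( \sum_{v \in \V(T_2)} |\ch(v)| \Bigr) \right).
\]
The key observation is that the double sum \emph{factors} into a product of two single sums, each of which is an edge count: for a tree $T$, the quantity $\sum_{u \in \V(T)} |\ch(u)|$ equals the number of edges of $T$, which is $O(n)$ for a tree with $n$ leaves. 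Hence the loop runs in $O(n) \cdot O(n) = O(n^2)$ time, and adding the $O(n^2)$ preprocessing cost yields the claimed $O(n^2)$ bound.

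I do not expect a genuine obstacle, since the conceptual content is carried entirely by the earlier lemmas. The only step demanding any care is the running-time argument: one must resist bounding $\sum_{u,v} |\ch(u)| \cdot |\ch(v)|$ crudely (e.g.\ by $(\max_u |\ch(u)|) \cdot n^2$, which could overshoot), and instead exploit the factorization into a product of two $O(n)$ edge-count sums.
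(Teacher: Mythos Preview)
Your proposal is correct and follows essentially the same approach as the paper: correctness via Lemma~\ref{triplet:lemmaMainShared}, per-pair cost $O(|\ch(u)|\cdot|\ch(v)|)$ via Lemma~\ref{lem:Shared_count}, and the total bound $\sum_{u,v}|\ch(u)|\cdot|\ch(v)| = O(n^2)$. If anything, you make the factorization of the double sum into $\bigl(\sum_u |\ch(u)|\bigr)\bigl(\sum_v |\ch(v)|\bigr)$ more explicit than the paper does.
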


\begin{proof}
By Lemma~\ref{triplet:lemmaMainShared}, the algorithm of Figure~\ref{SharedTriplets} computes the value $|\SameRes(T_1,T_2)|$ correctly. We now analyze its complexity. The running time of the algorithm is
dominated by the complexity of computing the value $s(u, v)$ for each pair of internal nodes $u \in \V(T_1)$ and $v \in \V(T_2)$.
 According to Lemma~\ref{lem:Shared_count}, the value $s(u, v)$ can be computed in $O(|\ch(u)| \cdot |\ch(v)|)$ time. Thus, the total time complexity of the algorithm is $O(\sum_{u \in \V(T_1)} \sum_{v \in \V(T_2)} |\ch(u)| \cdot |\ch(v)|)$, which is $O(n^2)$.
\end{proof}

\subsection{Computing $|\Res_1(T_1, T_2)|$}

\label{sec:R1_alg}

Next, we describe an $O(n^2)$-time algorithm that computes the cardinality of the set $\Res_1(T_1, T_2)$ of triplets that are resolved only in tree $T_1$. First, we need a definition. Let $X$ be a triplet that is unresolved in $T_2$. Let $v$ be the least common ancestor (lca) of $X$ in $T_2$.
We say that $X$ is \emph{associated} with $v$. Observe that node
$v$ must be internal and unresolved.  Note also that $X$ is
associated with exactly one node in $T_2$.

For any $u \in \V(T_1) \setminus (\rt(T_1) \cup \Leaves(T_1))$ and $v \in \V(T_2) \setminus \Leaves(T_1))$, let $r_1(u, v)$ denote the number of triplets $X$ such that $T_1|X$ is strictly induced by edge $\{u, \pa(u)\}$ in $T_1$, and $X$ is associated with the node $v$ in $T_2$.

The triplets counted in $r_1(u, v)$ must be resolved in $T_1$ but unresolved in $T_2$. Our algorithm computes the value $|\Res_1(T_1, T_2)|$ by
computing, for each $u \in \V(T_1) \setminus (\rt(T_1) \cup \Leaves(T_1))$ and $v \in \V(T_2) \setminus \Leaves(T_2)$, the value $r_1(u, v)$. We claim that the sum of all the computed $r_1(u,v)$'s yields the value $|\Res_1(T_1, T_2)|$.

\begin{lemma} \label{triplet:lemmaMainRes}
Given $T_1$ and $T_2$, we have,

\begin{equation} \label{Lemma:tripleteqnMainRes}
|\Res(T_1, T_2)| =  \sum_{\substack{u \in \V(T_1) \setminus (\rt(T_1) \cup \Leaves(T_1)), \\  v \in \V(T_2) \setminus \Leaves(T_2)}} r_1(u, v).
\end{equation}
\end{lemma}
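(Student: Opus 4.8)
The plan is to prove this by a uniqueness/double-counting argument that exactly mirrors the proof of Lemma~\ref{triplet:lemmaMainShared}: I will show that each triplet $X \in \Res_1(T_1,T_2)$ contributes exactly $1$ to exactly one term $r_1(u,v)$ on the right-hand side, while every triplet counted on the right-hand side lies in $\Res_1(T_1,T_2)$. (I read the left-hand side $|\Res(T_1,T_2)|$ as $|\Res_1(T_1,T_2)|$, consistent with the surrounding text and the definition of $r_1$.)

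First I would take an arbitrary $X \in \Res_1(T_1,T_2)$, so that $T_1|X$ is fully resolved while $T_2|X$ is unresolved. Applying Lemma~\ref{lem:strictlyinduced} to $T_1$, there is exactly one edge of $T_1$ that strictly induces $T_1|X$; writing it as $\{u,\pa(u)\}$ pins down a unique node $u$. Because strict induction requires two distinct children of $u$ separating the two ``close'' leaves of the triplet, $u$ must be internal and must have a parent, so $u \in \V(T_1) \setminus (\rt(T_1) \cup \Leaves(T_1))$, which matches the summation range. On the $T_2$ side, since $X$ is unresolved in $T_2$ it is associated with the node $v = \mathrm{lca}_{T_2}(X)$, which (as observed just before the lemma) is internal and unresolved, hence $v \in \V(T_2)\setminus\Leaves(T_2)$, and $X$ is associated with exactly one node of $T_2$. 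Thus $X$ satisfies both defining conditions of $r_1(u,v)$ for this single pair $(u,v)$ and for no other pair, so it is counted exactly once by the double sum.

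For the converse direction I would simply unwind the definition of $r_1$: any triplet $X$ counted in some $r_1(u,v)$ has $T_1|X$ strictly induced by an edge of $T_1$ (so $T_1|X$ is fully resolved) and $X$ associated with an unresolved node of $T_2$ (so $T_2|X$ is unresolved); hence $X \in \Res_1(T_1,T_2)$. Combining the two directions gives the claimed equality. I do not anticipate a genuine obstacle here, as the argument is a routine bijection; the only points requiring care are confirming that the stated index ranges coincide precisely with the constraints forced by ``strictly induced'' (forcing $u$ internal and non-root) and ``associated with'' (forcing $v$ internal and non-leaf), and invoking the two uniqueness facts — exactly one strictly inducing edge in $T_1$, and exactly one associated node in $T_2$ — to guarantee each $X$ is counted once and only once.
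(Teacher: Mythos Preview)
Your proposal is correct and follows essentially the same approach as the paper's own proof: invoke Lemma~\ref{lem:strictlyinduced} to pin down a unique strictly-inducing edge in $T_1$, use the uniqueness of the associated node in $T_2$, check that the resulting $(u,v)$ lies in the stated index range, and then observe from the definition of $r_1$ that anything counted on the right-hand side belongs to $\Res_1(T_1,T_2)$. If anything, your version spells out slightly more carefully why $u$ must be internal and non-root and why $v$ must be internal, which the paper leaves somewhat implicit.
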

\begin{proof}
Consider any triplet $X \in \Res_1(T_1, T_2)$. 
By Lemma~\ref{lem:strictlyinduced}, there exists exactly one node $u \in \V(T_1)\setminus \rt(T_1)$ such that the edge $\{u, \pa(u)\}$ strictly induces $T_1|X$ in $T_1$. Also observe that there must be exactly one unresolved  node $v \in \V(T_2)$ with which $X$ is associated. Additionally, neither $u$ nor $v$ can be leaf nodes in $T_1$ and $T_2$ respectively.
Thus, $X$ would be counted exactly once in the right-hand side of Equation~\eqref{Lemma:tripleteqnMainRes}; in the value $r_1(u, v)$. Moreover, by the definition of $r_1(u, v)$, any triplet that is counted in the right-hand side of Equation~\eqref{Lemma:tripleteqnMainRes} must belong to the set $\Res_1(T_1, T_2)$. 
The lemma follows.
\end{proof}


Given a path $u_1, u_2, \ldots, u_k$, where $k \geq 2$, in tree $T_1$ such that $u_k$ is an internal node and $u_1$ is an ancestor of $u_k$, let $\gamma(u_1, u_k, v)$
denote the number of triplets $X$ such that $T_1|X$ is induced by every edge $\{u_{i-1}, u_i\}$,  for $2 \leq i \leq k$, in $T_1$ and $X$ is associated with node $v$ in $T_2$.

The following lemma shows how the value of $r_1(u, v)$ can be computed by first computing certain $\gamma(\cdot, \cdot, \cdot)$ values.

\begin{lemma} \label{lem:count_gamma}
For any $u \in \V(T_1) \setminus (\rt(T_1) \cup \Leaves(T_1))$ and $v \in \V(T_2) \setminus \Leaves(T_2))$, $$r_1(u, v) = \gamma(\pa(u), u, v) - \sum_{x \in \ch(u)}\gamma(\pa(u), x, v).$$
\end{lemma}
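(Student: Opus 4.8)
The plan is to decompose the triplets counted by $\gamma(\pa(u), u, v)$ according to where the two grouped leaves of each triplet fall among the child subtrees of $u$. First I would unwind the definitions. A triplet $X = \{a,b,c\}$ with $T_1|X = ab|c$ is counted in $r_1(u,v)$ exactly when $ab|c$ is \emph{strictly} induced by $\{u,\pa(u)\}$ and $X$ is associated with $v$ in $T_2$; since the path from $\pa(u)$ to $u$ is the single edge $\{u,\pa(u)\}$, the same triplet is counted in $\gamma(\pa(u),u,v)$ exactly when $ab|c$ is merely \emph{induced} by $\{u,\pa(u)\}$ and $X$ is associated with $v$. The only difference between the two conditions is the strictness requirement that $a$ and $b$ lie in two \emph{distinct} child subtrees of $u$. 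Hence the triplets counted by $\gamma(\pa(u),u,v)$ but not by $r_1(u,v)$ are precisely those whose grouped pair $\{a,b\}$ lies in a \emph{single} child subtree $T_1(x)$ for some $x \in \ch(u)$, while still satisfying $c \in \Leaves(\overline{T_1(u)})$ and being associated with $v$.

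Next I would match these ``same child subtree'' triplets with the terms of the sum. For an internal child $x \in \ch(u)$, the path from $\pa(u)$ to $x$ is $\pa(u), u, x$, so $\gamma(\pa(u),x,v)$ counts triplets $X = \{a,b,c\}$ whose restriction $ab|c$ is induced by \emph{both} edges $\{\pa(u),u\}$ and $\{u,x\}$ and which are associated with $v$. Using the nesting $\Leaves(T_1(x)) \subseteq \Leaves(T_1(u))$ and the complementary inclusion $\Leaves(\overline{T_1(u)}) \subseteq \Leaves(\overline{T_1(x)})$, being induced by both edges is equivalent to requiring $a,b \in \Leaves(T_1(x))$ together with $c \in \Leaves(\overline{T_1(u)})$. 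Thus $\gamma(\pa(u),x,v)$ counts exactly those triplets of $\gamma(\pa(u),u,v)$ whose grouped pair lies in $T_1(x)$. If $x$ is instead a leaf child, then $\Leaves(T_1(x))$ is a singleton, so no triplet can have a two-element grouped pair inside it and the corresponding contribution is $0$; the identity therefore holds over the full sum $\sum_{x \in \ch(u)}$.

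Finally I would check that these contributions partition the excess cleanly. Because the child subtrees $\{T_1(x) : x \in \ch(u)\}$ have pairwise disjoint leaf sets, the grouped pair $\{a,b\}$ of any triplet lies in at most one of them, so no triplet is counted in two different terms $\gamma(\pa(u),x,v)$; and since $u$ is internal, every triplet whose grouped pair lies in a common child subtree lies in exactly one. Combining this with the dichotomy of the first step --- the grouped pair either lies in a single child subtree or is split across two of them --- yields the disjoint decomposition
\[
\gamma(\pa(u),u,v) = r_1(u,v) + \sum_{x \in \ch(u)} \gamma(\pa(u),x,v),
\]
which rearranges to the claimed identity. I expect the only delicate point to be the verification in the second step that being induced by the two consecutive edges $\{\pa(u),u\}$ and $\{u,x\}$ collapses exactly to ``grouped pair in $T_1(x)$, outgroup leaf in $\overline{T_1(u)}$''; the rest is bookkeeping over a disjoint union.
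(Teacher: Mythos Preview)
Your proposal is correct and follows essentially the same approach as the paper: both arguments characterize the triplets counted in $r_1(u,v)$ by the three conditions (associated with $v$; grouped pair in $\Leaves(T_1(u))$ and outgroup leaf in $\Leaves(\overline{T_1(u)})$; grouped pair not contained in a single child subtree), identify $\gamma(\pa(u),u,v)$ with the first two conditions, and identify the sum over children with the triplets failing the third. Your write-up is in fact more careful than the paper's on two points the paper glosses over: the disjointness of the child subtrees (ensuring no double counting in the sum) and the treatment of leaf children $x$, for which $\gamma(\pa(u),x,v)$ is formally undefined in the paper but naturally vanishes.
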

\begin{proof}
%
Let $X = \{a, b, c\}$ be a triplet that is counted in $r_1(u, v)$. And, without loss of generality, let $T_1|X = ab|c$. It can be verified that
 $X$ must satisfy the following three conditions: (i) $X$ must be associated with $v$ in $T_2$, (ii) $a, b \in \Leaves(T_1(u))$  and $c \in \Leaves(\overline{T_1(u)})$, and (iii) there must not exist any $x \in \ch(u)$ such that $a, b \in \Leaves(T_1(x))$. Moreover, observe that if there exists a triplet $X = \{a,b,c\}$ that satisfies these three conditions, then $X$ will be counted in $r_1(u, v)$; these three conditions are thus necessary and sufficient.

Now observe that $\gamma(\pa(u), u, v)$ counts exactly those triplets that satisfy conditions (i) and (ii), while  $\sum_{x \in \ch(u)}\gamma(\pa(u), x, v)$ counts exactly those triplets that satisfy conditions (i) and (ii), but not condition (iii). The lemma follows immediately.
\end{proof}

To compute the value of $\gamma(\cdot, \cdot, \cdot)$ efficiently we use the following lemma.

\begin{lemma} \label{triplet:lemma10}

Consider a path $u_1, u_2, \ldots, u_k$, where $k \geq 2$, in tree $T_1$ such that $u_k$ is an internal node and $u_1$ is an ancestor of $u_k$. And let $v \in \V(T_2)$ be an internal unresolved node. Then,
\begin{align*}
\gamma(u_1, u_k, v) & =  n_1(u_1, u_k, v) - n_2(u_1, u_k, v) - n_3(u_1, u_k, v) - n_4(u_1, u_k, v),
\end{align*}
where
\begin{align*}
n_1(u_1, u_k, v) & = {|\Leaves(T_2(v)) \cap \Leaves(T_1(u_k))| \choose 2} \cdot |\Leaves(T_2(v)) \cap \Leaves(\overline{T_1(u_2)})|, \\
n_2(u_1, u_k, v) & = \sum_{x \in \ch(v)} {{|\Leaves(T_2(x)) \cap \Leaves(T_1(u_k))|} \choose 2} \cdot |\Leaves(T_2(x)) \cap
\Leaves(\overline{T_1(u_2)})|, \\
n_3(u_1, u_k, v) & = \sum_{x \in \ch(v)} {{|\Leaves(T_1(u_k)) \cap
\Leaves(T_2(x))|} \choose 2} \cdot \left ( |\Leaves(T_2(v)) \cap \Leaves(\overline{T_1(u_2)})
| - |\Leaves(T_2(x)) \cap \Leaves(\overline{T_1(u_2)})| \right ),
\end{align*}
and
\begin{align*}
n_4(u_1, u_k, v) & = \sum_{x
\in \ch(v)} |\Leaves(T_2(x)) \cap \Leaves(T_1(u_k))| \cdot |\Leaves(T_2(x)) \cap
\Leaves(\overline{T_1(u_2)})| \\
& \hspace*{3cm} \cdot \big ( |\Leaves(T_2(v)) \cap \Leaves(T_1(u_k))| - |\Leaves(T_2(x)) \cap
\Leaves(T_1(u_k))| \big ).
\end{align*}
\end{lemma}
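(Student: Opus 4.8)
The plan is to give $\gamma(u_1,u_k,v)$ a combinatorial reading and then count the relevant triplets by inclusion--exclusion. First I would translate the two defining requirements on a triplet $X=\{a,b,c\}$ (say with $T_1|X = ab|c$) into concrete membership conditions. Since $u_1,\dots,u_k$ is a path with $u_1$ an ancestor of $u_k$, the subtrees nest as $\Leaves(T_1(u_2)) \supseteq \cdots \supseteq \Leaves(T_1(u_k))$; hence the requirement that $ab|c$ be induced by \emph{every} edge $\{u_{i-1},u_i\}$ collapses to the single pair of conditions $a,b \in \Leaves(T_1(u_k))$ and $c \in \Leaves(\overline{T_1(u_2)})$ (the innermost subtree pins $a,b$, the outermost complement pins $c$, and the resolution $ab|c$ is then forced, since $\Leaves(T_1(u_k))$ and $\Leaves(\overline{T_1(u_2)})$ are disjoint). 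Likewise, $X$ being associated with the unresolved node $v$ in $T_2$ means precisely that $a$, $b$, and $c$ lie in three \emph{distinct} child subtrees of $v$. Combining these, $\gamma(u_1,u_k,v)$ is the number of ways to pick an unordered pair $\{a,b\}$ from $\Leaves(T_2(v)) \cap \Leaves(T_1(u_k))$ and an element $c$ from $\Leaves(T_2(v)) \cap \Leaves(\overline{T_1(u_2)})$ so that $a$, $b$, $c$ occupy three distinct children of $v$.

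Writing $P_x = |\Leaves(T_2(x)) \cap \Leaves(T_1(u_k))|$ and $Q_x = |\Leaves(T_2(x)) \cap \Leaves(\overline{T_1(u_2)})|$ for each $x \in \ch(v)$, and using that $\Leaves(T_2(v))$ is the disjoint union of the $\Leaves(T_2(x))$ so that the totals satisfy $P := |\Leaves(T_2(v)) \cap \Leaves(T_1(u_k))| = \sum_x P_x$ and $Q := |\Leaves(T_2(v)) \cap \Leaves(\overline{T_1(u_2)})| = \sum_x Q_x$, I would first observe that $n_1 = \binom{P}{2}\cdot Q$ counts all such choices with the distinct-children constraint dropped. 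I then identify the configurations to discard, namely those in which $a,b,c$ do \emph{not} occupy three distinct children. These split into two disjoint families: (a) $a$ and $b$ lie in a common child, and (b) $a,b$ lie in distinct children but $c$ shares a child with one of them.

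For family (a), conditioning on the common child $x$ there are $\binom{P_x}{2}$ choices for $\{a,b\}$ and $Q$ choices for $c$; splitting the latter according to whether $c$ also lies in child $x$ (contributing $Q_x$) or not (contributing $Q-Q_x$) gives exactly $n_2$ and $n_3$, so family (a) has size $n_2+n_3$. For family (b), I would fix the child $x$ shared by $c$ and one element of the pair: there are $Q_x$ choices for $c$, then $P_x$ choices for the pair element sitting in $x$, and $P-P_x$ choices for the remaining element, which necessarily lies in some other child; since the two pair elements are distinguished by whether or not they lie in $x$, no pair is counted twice, and this yields exactly $n_4$. As families (a) and (b) are disjoint and together exhaust the discarded configurations, subtracting them from $n_1$ gives $\gamma(u_1,u_k,v) = n_1 - (n_2+n_3) - n_4$, as claimed.

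The main obstacle I anticipate is the bookkeeping in these reductions rather than any deep idea: verifying cleanly that ``induced by every edge of the path'' collapses to the innermost/outermost membership conditions (which rests on the nesting of the $T_1(u_i)$ and on $\Leaves(T_1(u_k))$ and $\Leaves(\overline{T_1(u_2)})$ being disjoint, so that $\{a,b\}$ and $c$ can never collide), and ensuring the inclusion--exclusion is a genuine disjoint partition of the bad configurations. In particular, the delicate point is that $n_4$ must count each ``$a,b$ in distinct children, $c$ shares a child'' triple exactly once despite $\{a,b\}$ being unordered; this is precisely what the asymmetric factorization $P_x\cdot(P-P_x)$ secures, and I would state that distinction explicitly to rule out the spurious factor of two.
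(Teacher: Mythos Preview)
Your proposal is correct and follows essentially the same approach as the paper: both start from the pool $n_1$ of triples $(\{a,b\},c)$ with $a,b\in\Leaves(T_2(v))\cap\Leaves(T_1(u_k))$ and $c\in\Leaves(T_2(v))\cap\Leaves(\overline{T_1(u_2)})$, and then subtract the three disjoint ``bad'' families counted by $n_2,n_3,n_4$. The only cosmetic difference is that the paper labels the three subtracted families as ``lca of $X$ in $T_2$ is not $v$'', ``lca is $v$ and $T_2|X=T_1|X$'', and ``lca is $v$ and $T_2|X\neq T_1|X$'', whereas you describe them directly by which of $a,b,c$ share a child of $v$; since $T_1|X=ab|c$ is forced, these descriptions coincide term-for-term with $n_2$, $n_3$, and $n_4$.
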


\begin{proof}
Consider those triplets $X$
for which $T_1|X$ is induced by every edge $(u_{i-1}, u_i)$,  for $2 \leq i \leq k$, in $T_1$, and $T_2|X$ is a subtree of $T_2(v)$. Let us call these triplets
\emph{relevant}. Any relevant triplet must have all three leaves from $\Leaves(T_2(v))$, two leaves from $\Leaves(T_1(u_k))$, and the third leaf from $\Leaves(\overline{T_1(u_2)})$. Also note that any triplet that satisfies these three conditions must be relevant.
The number of triplets that satisfy these conditions is exactly $n_1(u_1, u_k, v)$.

Any relevant triplet $X$ must belong to one of the following four
categories:
\begin{enumerate}
\item \emph{The lca of $X$ in $T_2$ is not node $v$} : This implies that, in addition to being a relevant triplet, all three leaves of $X$ must belong to the same subtree of $T_2$ rooted at a child of $v$. The number of such triplets is $n_2(u_1, u_k, v)$. 
\item \emph{The lca of $X$ in $T_2$ is node $v$, $X$ is resolved in $T_2$ and $T_1|X = T_2|X$} : A relevant triplet $X$ satisfies this criterion if and only if there exists a child $x \in \ch(v)$, such that the two leaves of this triplet that belong to $\Leaves(T_1(u_k))$ in tree $T_1$ also occur in $\Leaves(T_2(x))$, and, the third leaf (which occurs in $\Leaves(\overline{T_1(u_2)})|$ in $T_1$) occurs in $\Leaves(T_2(y))$ where $y \in \ch(v)\setminus \{x\}$. The number of such $X$ is equal to $n_3(u_1, u_k, v)$. 
\item \emph{The lca of $X$ in $T_2$ is node $v$, $X$ is resolved in $T_2$, but $T_1|X \neq T_2|X$} : A relevant triplet $X$ satisfies this criterion if and only if there exists a child $x \in \ch(v)$, such that a pair of the leaves of $X$ that occur in $\Leaves(T_1(u_k))$ and $\Leaves(\overline{T_1(u_2)})$ respectively in tree $T_1$ occur in $\Leaves(T_2(x))$ in tree $T_2$, and, the third leaf (which occurs in $\Leaves(T_2(x))$ in $T_1$) occurs in $\Leaves(T_2(y))$ where $y \in \ch(v) \setminus \{x\}$. The number of such $X$ is given by $n_4(u_1, u_k, v)$. 
\item \emph{The lca of $X$ in $T_2$ is node $v$, and $X$ is unresolved in $T_2$} :  By definition, the number of relevant triplets that satisfy this criterion is exactly $\gamma(u_1, u_k, v)$.
\end{enumerate}

We have shown that $n_2(u_1, u_k, v)$, $n_3(u_1, u_k, v)$, and $n_4(u_1, u_k, v)$ are exactly the number of relevant triplets belonging to categories 1, 2, and 3 respectively. The lemma follows.
\end{proof}

We should remark that the procedure to compute the value of $\gamma(u_1, u_k, v)$ given in the preceding proof may seem circuitous.   However, we have been unable to find a direct method with an equally good time complexity.

\begin{figure}

\begin{algorithmic}[1]
\REQUIRE $\Res_1(T_1, T_2)$


\FOR{each internal node $u \in \V(T_1) \setminus \{\rt(T_1)\}$}

\FOR{each internal unresolved node $v \in \V(T_2)$}

\STATE Compute $r_1(u, v)$.
\ENDFOR

\ENDFOR

\STATE \textbf{return} the sum of all computed $r_1(\cdot, \cdot)$.
\end{algorithmic}
\caption{Computing $|\Res_1(T_1,T_2)|$}
\label{Alg-R1}
\end{figure}

\begin{lemma} \label{triplet:lemma11}
Given two phylogenetic trees $T_1$ and $T_2$ on the same $n$
leaves, the value  $|\Res_1(T_1, T_2)|$ can be computed in
$O(n^2)$ time.
\end{lemma}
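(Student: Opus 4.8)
The plan is to analyze the algorithm of Figure~\ref{Alg-R1}, whose correctness is essentially immediate, so that the running time becomes the only real content. First I would invoke Lemma~\ref{triplet:lemmaMainRes}, which guarantees that $|\Res_1(T_1,T_2)|$ equals the sum of $r_1(u,v)$ over all internal non-root $u \in \V(T_1)$ and all internal $v \in \V(T_2)$. Since a triplet can be associated only with an internal \emph{unresolved} node of $T_2$, we have $r_1(u,v)=0$ whenever $v$ is resolved or a leaf, so restricting the inner loop to internal unresolved nodes $v$ (as in Figure~\ref{Alg-R1}) leaves the sum unchanged. Hence the algorithm returns the correct value, and it remains to bound the time.

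The key step is to show that a single value $r_1(u,v)$ can be computed in $O(|\ch(u)|\cdot|\ch(v)|)$ time once the preprocessing table of Lemma~\ref{triplet:lemma5} is available. By Lemma~\ref{lem:count_gamma}, $r_1(u,v) = \gamma(\pa(u),u,v) - \sum_{x\in\ch(u)}\gamma(\pa(u),x,v)$, so computing $r_1(u,v)$ reduces to evaluating $1 + |\ch(u)|$ terms of the form $\gamma(\pa(u),\cdot,v)$. I would observe that in every one of these terms the path from the first argument to the last passes through $u$ as its second node; thus, in the notation of Lemma~\ref{triplet:lemma10}, we always have $u_2 = u$ (and $u_k \in \{u\}\cup\ch(u)$). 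Consequently each $\gamma$ value is given by the formula of Lemma~\ref{triplet:lemma10} purely in terms of quantities of the form $|\Leaves(T_2(v))\cap\Leaves(T_1(u_k))|$, $|\Leaves(T_2(v))\cap\Leaves(\overline{T_1(u)})|$, and their analogues with $v$ replaced by a child $x \in \ch(v)$. Every such quantity is one of the four intersection families tabulated in the preprocessing step, hence available by an $O(1)$ lookup, and the only nonconstant work is the summation over $\ch(v)$ appearing in $n_2,n_3,n_4$. Therefore each $\gamma$ costs $O(|\ch(v)|)$ time, and $r_1(u,v)$ costs $O\bigl((1+|\ch(u)|)\cdot|\ch(v)|\bigr)$ time.

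Finally I would sum this bound over the two loops, with $u$ ranging over internal non-root nodes of $T_1$ and $v$ over internal unresolved nodes of $T_2$, obtaining
\[
O\!\left(\sum_{u}\sum_{v}(1+|\ch(u)|)\,|\ch(v)|\right) = O\!\left(\Bigl(\sum_{u}(1+|\ch(u)|)\Bigr)\Bigl(\sum_{v}|\ch(v)|\Bigr)\right).
\]
Since $\sum_u|\ch(u)|$ and $\sum_v|\ch(v)|$ each count edges and are $O(n)$, and each tree has $O(n)$ nodes, both factors are $O(n)$, giving $O(n^2)$; adding the $O(n^2)$ cost of the preprocessing step (Lemma~\ref{triplet:lemma5}) leaves the overall bound at $O(n^2)$. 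I expect the main obstacle to be the bookkeeping of the second step: one must verify carefully that \emph{all} the intersection sizes demanded by Lemma~\ref{triplet:lemma10} --- including those involving $\overline{T_1(u)}$ and the children of $v$ --- are genuinely among the four precomputed families, so that each $\gamma$ is truly $O(|\ch(v)|)$ rather than requiring a fresh $O(n)$-time intersection computation. It is precisely this observation, together with the fact that $u_2$ is pinned to $u$ throughout, that makes the per-pair cost collapse to $O(|\ch(u)|\cdot|\ch(v)|)$ and the total telescope to $O(n^2)$.
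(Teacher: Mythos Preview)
Your proposal is correct and follows essentially the same approach as the paper: correctness via Lemma~\ref{triplet:lemmaMainRes}, $O(|\ch(v)|)$ per $\gamma$ via Lemma~\ref{triplet:lemma10}, hence $O(|\ch(u)|\cdot|\ch(v)|)$ per $r_1(u,v)$ via Lemma~\ref{lem:count_gamma}, and then the double sum collapses to $O(n^2)$. You are in fact slightly more careful than the paper --- you justify the restriction to unresolved $v$ in Figure~\ref{Alg-R1}, pin down that $u_2=u$ in every $\gamma$ call so the needed intersections are exactly the precomputed ones, and carry the $(1+|\ch(u)|)$ factor explicitly --- but these are refinements of the same argument, not a different route.
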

\begin{proof}
Our algorithm for computing $|\Res_1(T_1, T_2)|$ appears in Figure~\ref{Alg-R1}.  The correctness of the algorithm follows from
Lemma~\ref{triplet:lemmaMainRes}. We now analyze its complexity. For any given candidate nodes $u, v$, Lemma~\ref{triplet:lemma10} shows how to compute $\gamma(\cdot, \cdot, v)$ in $O(|\ch(v)|)$ time, and consequently, by Lemma~\ref{lem:count_gamma}, the value $r_1(u, v)$ can be computed in  $O(|\ch(u)| \cdot |\ch(v)|)$ time. Thus, the total time complexity of the algorithm is $O(\sum_{u \in \V(T_1)} \sum_{v \in \V(T_2)} |\ch(u)| \cdot |\ch(v)|)$, which is $O(n^2)$.
\end{proof}

\section{An approximation algorithm for parametric quartet distance} \label{sec:PQD}


We now consider the problem of computing the parametric quartet
distance (PQD) between two unrooted trees. Our main result is an $O(n^2)$-time
2-approximate algorithm for PQD.

Our approach is similar to the one for computing the parametric
triplet distance. Observe that Proposition~\ref{prop:triplet_obs} and,
thus, Equation~\eqref{eqn:param_dist_2} hold even when the unit of
distance is quartets instead of triplets.
Christiansen et al.~\cite{ChristiansenMailundPedersenRandersStigStissing06} show
how to compute the values $|\SameRes(T_1, T_2)|$, $|R(T_1)|$,
$|U(T_1)|$, and $|U(T_2)|$ within $O(n^2)$ time. 
In Section~\ref{subsec:R1:approx} we show how to compute,
in $O(n^2)$ time, a value $y$ such that $|\Res_1(T_1, T_2)| \leq y
\leq 2|\Res_1(T_1, T_2)|$. Now, let us substitute the values of
$|R(T_1)|$, $|U(T_1)|$, $|U(T_2)|$ and $|\SameRes(T_1, T_2)|$ into
Equation~\eqref{eqn:param_dist_2}, and use the value of $y$ instead of
$|\Res_1(T_1, T_2)|$.  Assuming $p \geq 1/2$, it can be seen that
the result is a 2-approximation to $d^{(p)}(T_1, T_2)$.

To summarize, we have the following result.

\begin{theorem} \label{quartet:theorem2}
Given two unrooted phylogenetic trees $T_1$ and $T_2$ on the same
$n$ leaves, and a parameter $p \geq 1/2$, a value $x$ such that
$d^{(p)}(T_1, T_2) \leq x \leq 2 \cdot d^{(p)}(T_1, T_2)$ can be
computed in $O(n^2)$ time.
\end{theorem}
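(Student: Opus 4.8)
The plan is to build directly on Equation~\eqref{eqn:param_dist_2}, which (as noted just before the theorem) remains valid when triplets are replaced by quartets, and to exploit the fact that every term on its right-hand side except the last can be computed \emph{exactly} in $O(n^2)$ time, while the last term need only be \emph{approximated} within a factor of two. Concretely, I would group the exactly-computable part as
\[A = |R(T_1)| - |\SameRes(T_1,T_2)| + p\bigl(|U(T_1)| - |U(T_2)|\bigr),\]
so that $d^{(p)}(T_1,T_2) = A + (2p-1)\cdot|\Res_1(T_1,T_2)|$. By the result of Christiansen et al., the quantities $|R(T_1)|$, $|\SameRes(T_1,T_2)|$, $|U(T_1)|$, and $|U(T_2)|$ are all computable in $O(n^2)$ time, so $A$ is obtained exactly in $O(n^2)$ time; and by the (deferred) result of Section~\ref{subsec:R1:approx}, a value $y$ with $|\Res_1(T_1,T_2)| \le y \le 2|\Res_1(T_1,T_2)|$ is also computable in $O(n^2)$ time. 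I would then output $x = A + (2p-1)y$, which takes $O(1)$ further time, for an overall running time of $O(n^2)$.

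It remains to verify that $x$ is a 2-approximation, and this is where the hypothesis $p \ge 1/2$ is essential, since it makes the coefficient $2p-1$ nonnegative. For the lower bound, $y \ge |\Res_1(T_1,T_2)|$ together with $2p-1 \ge 0$ gives $x \ge A + (2p-1)|\Res_1(T_1,T_2)| = d^{(p)}(T_1,T_2)$. For the upper bound I would first record that $A \ge 0$: using the definition of $d^{(p)}$ one rewrites $A = d^{(p)}(T_1,T_2) - (2p-1)|\Res_1(T_1,T_2)| = |\DiffRes(T_1,T_2)| + (1-p)|\Res_1(T_1,T_2)| + p|\Res_2(T_1,T_2)|$, which is a nonnegative combination of cardinalities whenever $p \in [0,1]$. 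Then, using $y \le 2|\Res_1(T_1,T_2)|$ and $2p-1 \ge 0$,
\[x = A + (2p-1)y \le A + 2(2p-1)|\Res_1(T_1,T_2)| \le 2A + 2(2p-1)|\Res_1(T_1,T_2)| = 2\,d^{(p)}(T_1,T_2),\]
where the middle step is precisely where $A \ge 0$ is invoked. Combining the two bounds yields $d^{(p)}(T_1,T_2) \le x \le 2\,d^{(p)}(T_1,T_2)$.

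Once the two algorithmic ingredients are granted, the argument is pure bookkeeping, so the only genuine obstacle lies outside this proof: it is the deferred claim that $|\Res_1(T_1,T_2)|$ admits an $O(n^2)$-time 2-approximation (Section~\ref{subsec:R1:approx}). The reduction above is exactly what explains why approximating that \emph{single} quantity within a factor of two suffices to 2-approximate the whole distance. Finally, I would remark that at $p = 1/2$ the coefficient $2p-1$ vanishes, so $x = A$ and the algorithm in fact returns the exact value $d^{(1/2)}(T_1,T_2)$, consistent with the claim made in the introduction.
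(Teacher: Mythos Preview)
Your proposal is correct and follows essentially the same approach as the paper: compute the exact part of Equation~\eqref{eqn:param_dist_2} via Christiansen et al., plug in the 2-approximation $y$ of $|\Res_1(T_1,T_2)|$ from Section~\ref{subsec:R1:approx}, and use $p \ge 1/2$ to ensure the coefficient $2p-1$ is nonnegative. In fact you provide more detail than the paper does, which simply asserts ``it can be seen that the result is a 2-approximation''; your explicit verification that $A = |\DiffRes| + (1-p)|\Res_1| + p|\Res_2| \ge 0$ is exactly the missing step, and your closing remark about $p=1/2$ matches the paper's own observation.
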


We note that the $(2p -1) \cdot |\Res_1(T_1,T_2)|$ term in
Equation~\eqref{eqn:param_dist_2} vanishes when $p=\frac{1}{2}$.  In
this case, we do not even need to compute $|\Res_1(T_1, T_2)|$ to get the \emph{exact} value of $d^{(p)}(T_1, T_2)$.

\subsection{Computing a 2-approximate value of $|\Res_1(T_1, T_2)|$} \label{subsec:R1:approx}

For any node $u$ in $T$, let  $\adj(u)$ denote the set of nodes that are adjacent to $u$.
For the purposes of describing our algorithm, it is useful to view each (undirected) edge $\{u, v\} \in \E(T)$ as two directed edges $(u, v)$ and $(v, u)$.
Let $\overrightarrow{\E}(T)$ denote the set of directed edges in tree $T$.

To achieve the claimed time complexity, our algorithm relies on a preprocessing step which computes and stores, for each pair of directed edges $(u_1, v_1) \in \overrightarrow{\E}(T_1)$ and $(u_2, v_2) \in \overrightarrow{\E}(T_2)$, the quantity $|\Leaves(T_1(u_1, v_1)) \cap
\Leaves(T_2(u_2, v_2))|$. This can be accomplished in $O(n^2)$ by arbitrarily rooting $T_1$ and $T_2$ at any internal node and proceeding as in the preprocessing step for the triplet distance case (see Section~\ref{trip:precomp}).


Consider any two leaves $a, b$ from $\Leaves(T(u, v))$ and any two leaves
$c, d$ from $\Leaves(T(v, u))$. Then, the quartet $\{a, b,
c, d\}$ must appear resolved as $ab|cd$ in $T$; we say that the quartet tree $ab|cd$ is \emph{induced} by the edge $(u, v)$. Note that
the same resolved quartet tree may be induced by multiple edges in $T$.
Additionally, if $x \in u_1$ and $y \in u_2$ for some $u_1, u_2 \in \adj(u)\setminus \{v\}$ such that $u_1 \neq u_2$, then we say that the quartet tree $ab|cd$ is \emph{strictly induced} by the directed edge $(u, v)$.

Consider a quartet
$\{a,b,c,d\}$. Then, the corresponding quartet tree
is unresolved in $T$ if and only if there exists exactly one node
$w$ such that the paths from $w$ to $a$, $w$ to $b$, $w$ to $c$,
and $w$ to $d$ do not share any edges. We say that quartet
$\{a,b,c,d\}$ is \emph{associated} with node $w$ in $T$. Thus,
each unresolved quartet tree from $T$  is associated with exactly
one node in $T$.

For any directed edge $(u, v) \in \overrightarrow{\E}(T_1)$ and  $w \in \V(T_2) \setminus \Leaves(T_1)$, let $r_1((u, v), w)$ denote the number of quartets $X$ such that $T_1|X$ is strictly induced by the directed edge $(u, v)$ in $T_1$, and $X$ is associated with the node $w$ in $T_2$.
The quartets counted in $r_1((u, v), w)$ must be resolved in $T_1$ but unresolved in $T_2$. We have the following result.

\begin{lemma} \label{quartet:lemmaMain1}
Given $T_1$ and $T_2$, we have
$$ 2 \cdot |\Res_1(T_1, T_2)| =  \sum_{\substack{(u, v) \in \overrightarrow{\E}(T_1), \\ w \in \V(T_2) \setminus \Leaves(T_2)}} r_1((u, v), w).$$
\end{lemma}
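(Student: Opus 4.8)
The plan is to mirror the proof of the triplet analog, Lemma~\ref{triplet:lemmaMainRes}: I would count each quartet of $\Res_1(T_1,T_2)$ by pairing the directed edge of $T_1$ that strictly induces its resolved topology with the node of $T_2$ with which it is associated, and then check that this accounts for the entire right-hand side. The essential new feature, and the source of the factor of $2$ that is absent in the rooted case, is that an unrooted resolved quartet is strictly induced not by a single edge (as a resolved triplet is, by Lemma~\ref{lem:strictlyinduced}) but by exactly two directed edges. So I would fix an arbitrary $X = \{a,b,c,d\} \in \Res_1(T_1,T_2)$, so that $T_1|X$ is fully resolved, say $T_1|X = ab|cd$, while $T_2|X$ is unresolved.

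The main obstacle is the structural claim that, in $T_1$, the resolved quartet tree $ab|cd$ is strictly induced by \emph{exactly two} directed edges. To prove it I would let $p$ be the node of $T_1$ at which the paths toward $a$ and toward $b$ diverge (the point of the $a$--$b$ path nearest $c$ and $d$) and $q$ the node at which the paths toward $c$ and toward $d$ diverge; since $X$ is resolved, $p \neq q$, and both lie on the path separating $\{a,b\}$ from $\{c,d\}$. Writing $p'$ for the neighbor of $p$ on the path toward $q$ and $q'$ for the neighbor of $q$ on the path toward $p$, I would verify directly from the definition of strict induction that $(p,p')$ strictly induces $ab|cd$, because its $p$-side pair $\{a,b\}$ branches at $p$ into two distinct subtrees via neighbors other than $p'$; and symmetrically that $(q,q')$ strictly induces the same quartet tree $cd|ab = ab|cd$, its $q$-side pair $\{c,d\}$ branching at $q$. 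For uniqueness I would observe that any directed edge $(u,v)$ strictly inducing $ab|cd$ must carry one of the two pairs entirely on its $u$-side with that pair branching exactly at $u$; this forces $u \in \{p,q\}$ and then pins $v$ down as $p'$ or $q'$, so no third directed edge qualifies (when $p$ and $q$ are adjacent these are simply $(p,q)$ and $(q,p)$, still two distinct directed edges).

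With the structural claim in hand the remainder is bookkeeping. Since $X$ is unresolved in $T_2$, by the remark preceding the lemma it is associated with a unique node $w$ of $T_2$, and $w$ is internal (its degree is at least $4$), so $w \in \V(T_2)\setminus\Leaves(T_2)$; moreover both strictly inducing directed edges $(p,p')$ and $(q,q')$ have an internal tail and hence lie in $\overrightarrow{\E}(T_1)$. Thus $X$ is counted exactly twice on the right-hand side, namely in $r_1((p,p'),w)$ and in $r_1((q,q'),w)$, and nowhere else. Conversely, every quartet tallied in any $r_1((u,v),w)$ is resolved in $T_1$ (being strictly induced by a directed edge) and unresolved in $T_2$ (being associated with $w$), so it belongs to $\Res_1(T_1,T_2)$. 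Summing over all pairs therefore counts each element of $\Res_1(T_1,T_2)$ exactly twice and counts nothing else, yielding $\sum r_1((u,v),w) = 2\,|\Res_1(T_1,T_2)|$.
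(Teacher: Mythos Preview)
Your proposal is correct and follows exactly the same approach as the paper's proof: fix a quartet $X\in\Res_1(T_1,T_2)$, note it is associated with a unique internal node $w$ of $T_2$, and argue that $T_1|X$ is strictly induced by exactly two directed edges of $T_1$, so $X$ is counted twice on the right-hand side and nothing outside $\Res_1(T_1,T_2)$ is ever counted. In fact your write-up is more complete than the paper's, which simply asserts the ``exactly two directed edges'' claim without the explicit identification of $(p,p')$ and $(q,q')$ or the uniqueness argument you supply.
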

\begin{proof}
Let $X = \{a, b, c, d\}$ be any quartet in $|\Res_1(T_1, T_2)|$. Without loss of generality, assume that $T_1|X = ab|cd$, and that $X$ is associated with node $w \in V(T_2)\setminus \Leaves(T_2)$. Since $X$ appears resolved in $T_1$, $\overrightarrow{\E}(T_1)$ must have exactly two directed edges, say $(u_1, v_1)$ and $(u_2, v_2)$, which strictly induce $ab|cd$. Thus, $X$ is counted in exactly two of the $r_1(\cdot , \cdot)$'s, namely,
$r_1((u_1, v_1), w)$, and $r_1((u_2, v_2), w)$. The lemma follows.
\end{proof}

Thus, we can compute $|\Res_1(T_1, T_2)|$ by computing all the $O(n^2)$ possible $r_1((u, v), w)$'s. However, doing so seems to require at least $\Theta(n^2\cdot d)$ time, where $d$ is the degree of $T_1$. Instead, our algorithm computes a 2-approximate value of $|\Res_1(T_1, T_2)|$ in $O(n^2)$ time by relying on the next lemma. 

\begin{lemma} \label{quartet:lemmaMain2}
Given $T_1$ and $T_2$, let $T'_1$ denote the rooted tree obtained from  $T_1$ by designating any internal node in $V(T_1)$ as the root. Then,
$$|\Res_1(T_1, T_2)| \leq  \sum_{\substack{u \in \V(T'_1) \setminus (\rt(T'_1) \cup \Leaves(T'_1)), \\ w \in \V(T_2) \setminus \Leaves(T_2)}} r_1((u, \pa(u)), w) \leq 2 \cdot |\Res_1(T_1, T_2)|.$$
\end{lemma}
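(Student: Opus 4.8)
The plan is to leverage Lemma~\ref{quartet:lemmaMain1}, which tells us that summing $r_1((u,v),w)$ over \emph{all} directed edges $(u,v)\in\overrightarrow{\E}(T_1)$ yields exactly $2\cdot|\Res_1(T_1,T_2)|$, so each quartet $X\in\Res_1(T_1,T_2)$ is strictly induced by exactly two directed edges of $T_1$. The sum in the statement is the same double sum, but restricted to the directed edges of the form $(u,\pa(u))$ --- those that, after rooting $T_1$ at an internal node $\rho$ to obtain $T'_1$, point from a node to its parent. Hence the whole task reduces to showing that, for each fixed $X\in\Res_1(T_1,T_2)$, exactly one or exactly two of its two strictly-inducing directed edges have the form $(u,\pa(u))$ with $u$ an internal non-root node. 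Summing the resulting per-quartet contribution (which will lie in $\{1,2\}$) over all of $\Res_1(T_1,T_2)$ then gives both desired inequalities at once.

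First I would pin down the two strictly-inducing edges of $X$. Writing $X=ab|cd$ with $T_1|X=ab|cd$, let $m_{ab}$ be the node at which the paths to $a$ and to $b$ first meet, and let $m_{cd}$ be the analogous node for $c$ and $d$; the two strictly-inducing directed edges are $e_1=(m_{ab},w_1)$ and $e_2=(m_{cd},w_2)$, where $w_1$ (respectively $w_2$) is the neighbor of $m_{ab}$ (respectively $m_{cd}$) lying on the $m_{ab}$--$m_{cd}$ path. Both $m_{ab}$ and $m_{cd}$ are internal, and the node $w$ with which $X$ is associated in $T_2$ is internal (since $X$ is unresolved there), so these are legitimate index values for the restricted sum. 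The key observation is that $e_1$ and $e_2$ point \emph{toward each other} along the central path, and that $e_i$ appears in the restricted sum precisely when the head of $e_i$ is the parent of its tail, i.e.\ when the root $\rho$ lies on the far side of that edge.

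The core of the argument is then a short case analysis on where $\rho$ sits relative to the four pendant subtrees hanging off $m_{ab}$ (through $a$ and through $b$) and off $m_{cd}$ (through $c$ and through $d$). If $\rho$ lies in the $a$- or $b$-pendant subtree, then $\pa(m_{ab})\neq w_1$ but $\pa(m_{cd})=w_2$, so exactly $e_2$ is counted; symmetrically, if $\rho$ lies in the $c$- or $d$-pendant subtree, exactly $e_1$ is counted; and if $\rho$ lies in none of the four pendant subtrees (including the boundary cases $\rho=m_{ab}$ or $\rho=m_{cd}$, where the corresponding edge is simply dropped because its tail is the root), then both edges point toward $\rho$ and both are counted. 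The crucial point for the lower bound is that the two events ``$\rho$ is outward from $m_{ab}$'' and ``$\rho$ is outward from $m_{cd}$'' refer to disjoint regions of the tree and hence cannot occur simultaneously, so \emph{at least one} of $e_1,e_2$ is always counted; the upper bound is immediate since there are only two such edges.

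Finally I would add up the contributions: since every $X\in\Res_1(T_1,T_2)$ contributes at least $1$ and at most $2$ to the restricted sum $S$, we obtain $|\Res_1(T_1,T_2)|\le S\le 2\cdot|\Res_1(T_1,T_2)|$, which is the claim. I expect the main obstacle to be stating the case analysis cleanly --- in particular, verifying that the ``outward'' regions for $m_{ab}$ and $m_{cd}$ are genuinely disjoint, and correctly handling the degenerate configurations in which $m_{ab}$ or $m_{cd}$ coincides with the chosen root $\rho$, or in which $m_{ab}$ and $m_{cd}$ are adjacent (so that $e_1$ and $e_2$ are the two orientations of a single undirected edge). None of these boundary cases alters the conclusion that the per-quartet contribution lies in $\{1,2\}$, but each must be checked explicitly.
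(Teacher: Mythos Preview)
Your proposal is correct and follows essentially the same approach as the paper: both arguments identify, for each $X\in\Res_1(T_1,T_2)$, the two directed edges that strictly induce $T_1|X$, and then show that at least one of them is of the form $(u,\pa(u))$ after rooting. The paper's write-up of the lower bound is slightly terser---it simply observes that if $(u_1,v_1)$ fails to point to a parent (i.e.\ $u_1=\pa(v_1)$), then $u_1,v_1,v_2,u_2$ lie on a common root-to-leaf path, forcing $v_2=\pa(u_2)$---whereas you unpack this into an explicit case analysis on the location of the root relative to the pendant subtrees; for the upper bound the paper just notes that the restricted sum is a sub-sum of the full directed-edge sum from Lemma~\ref{quartet:lemmaMain1}, which equals $2|\Res_1(T_1,T_2)|$.
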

\begin{proof}
First, observe that if $u \in \Leaves(T'_1)$ and $w \in \V(T_2) \setminus \Leaves(T_2)$, then $r_1((u, \pa(u)), w) = 0$. Therefore, we must have $$\sum_{\substack{u \in \V(T'_1) \setminus (\rt(T'_1) \cup \Leaves(T'_1)), \\ w \in \V(T_2) \setminus \Leaves(T_2)}} r_1((u, \pa(u)), w) = \sum_{\substack{u \in \V(T'_1) \setminus \rt(T'_1), \\ w \in \V(T_2) \setminus \Leaves(T_2)}} r_1((u, \pa(u)), w).$$

Second, observe that $\E(T_1) = \E(T'_1)$ and, therefore, by Lemma~\ref{quartet:lemmaMain1}, we must have $$\sum_{\substack{u \in \V(T'_1) \setminus \rt(T'_1), \\ w \in \V(T_2) \setminus \Leaves(T_2)}} r_1((u, \pa(u)), w) \leq 2 \cdot |\Res_1(T_1, T_2)|.$$ This proves the second inequality in the lemma.

To complete the proof, we now prove the first inequality.
Let $X = \{a, b, c, d \}$ be any quartet in $|\Res_1(T_1, T_2)|$, and, without loss of generality, assume that $T_1|X = ab|cd$, and that $X$ is associated with node $w \in V(T_2)\setminus \Leaves(T_2)$. Since $X$ appears resolved in $T_1$, $\overrightarrow{\E}(T_1)$ must have exactly two directed edges, say $(u_1, v_1)$ and $(u_2, v_2)$, which strictly induce $ab|cd$. Consider the edge $\{u_1, v_1\} \in \E(T'_1)$. There are two possible cases: Either $v_1 = \pa(u_1)$, or  $u_1 = \pa(v_1)$. If $v_1 = \pa(u_1)$ then the quartet $X$ will be counted in the value $r_1((u_1, \pa(u_1)), w)$. Otherwise, if $u_1 = \pa(v_1)$, then $u_1, v_1, v_2, u_2$ must appear on a same root-to-leaf path in $T'_1$. Consequently, we must have  $v_2 = \pa(u_2)$ and the quartet $X$ would be counted in the value $r_1((u_2, \pa(u_1)), w)$. Thus, we must have $|\Res_1(T_1, T_2)| \leq  \sum_{u \in \V(T'_1) \setminus \rt(T'_1)} \sum_{w \in \V(T_2) \setminus \Leaves(T_2)} r_1((u, \pa(u)), w)$. The lemma follows.
\end{proof}

Thus, the idea for efficiently computing a 2-approximate value of $|\Res_1(T_1, T_2)|$ is to first root $T_1$ arbitrarily at any internal node and then compute the value $r_1((u, \pa(u)), w)$ for each non-root node $u \in V(T_1)$ and each  $w \in \V(T_2) \setminus \Leaves(T_1)$.

We now direct our attention to the problem of efficiently computing all the required values $r_1(\cdot, \cdot)$.
Given a path $u_1, u_2, \ldots, u_k$ in $T_1$, where $k \geq 2$, let $\gamma(u_1, u_k, w)$ denote the number of quartets $X$ such that $T_1|X$ is induced in $T_1$ by every edge $(u_{i-1}, u_i)$, $2 \leq i \leq k$, and $X$ is associated with node $w$ in $T_2$.

The following lemma is analogous to Lemma~\ref{lem:count_gamma}, and shows how the value $r_1(\cdot, \cdot)$ can be computed by first computing certain $\gamma(\cdot, \cdot, \cdot)$ values.

\begin{lemma} \label{quartetlem:count_gamma}
Let $(u, v) \in \E(T_1)$, and $w \in \V(T_2) \setminus \Leaves(T_2))$, then, $$r_1((u, v), w) = \gamma(u, v, w) - \sum_{x \in \adj(u) \setminus \{v\}}\gamma(x, v, w).$$
\end{lemma}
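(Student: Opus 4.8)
The plan is to mirror the rooted argument of Lemma~\ref{lem:count_gamma} and express $r_1((u,v),w)$ as the number of quartets induced by the directed edge $(u,v)$, less a correction for those that fail the strictness condition. First I would unfold the three quantities in terms of where the leaves of a quartet $X$ with $T_1|X = ab|cd$ sit relative to the edge $\{u,v\}$ and the node $u$. All three counts share the requirement that $X$ is associated with the \emph{same} node $w$ in $T_2$; since that condition only constrains the $T_2$-side structure of $X$, it is orthogonal to the partition I will set up on the $T_1$-side and can simply be carried along unchanged throughout the argument.

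Next I would pin down $\gamma(u,v,w)$: taking the path $u,v$ (so $k=2$), this counts exactly the quartets $X$ with $a,b\in\Leaves(T_1(u,v))$ and $c,d\in\Leaves(T_1(v,u))$ that are associated with $w$ in $T_2$, i.e., the quartets induced (not necessarily strictly) by $(u,v)$. Then I would identify $\gamma(x,v,w)$ for $x\in\adj(u)\setminus\{v\}$: the unique path from $x$ to $v$ is $x,u,v$, so $T_1|X$ must be induced by both $(x,u)$ and $(u,v)$. The key observation is that $\Leaves(T_1(x,u))\subseteq\Leaves(T_1(u,v))$ and $\Leaves(T_1(v,u))\subseteq\Leaves(T_1(u,x))$, so these two inducement conditions together collapse to $a,b\in\Leaves(T_1(x,u))$ and $c,d\in\Leaves(T_1(v,u))$; that is, $\gamma(x,v,w)$ counts precisely the quartets induced by $(u,v)$ whose pair $a,b$ both lie in the single subtree $T_1(x,u)$ hanging off $u$ through $x$.

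The crux is then a clean partition. Every quartet induced by $(u,v)$ has both $a$ and $b$ in $\Leaves(T_1(u,v))$. Since the leaf sets $\Leaves(T_1(x,u))$, for $x$ ranging over $\adj(u)\setminus\{v\}$, partition $\Leaves(T_1(u,v))$, the pair $\{a,b\}$ either lies entirely within one block $T_1(x,u)$ (for a unique such $x$) or is split across two distinct blocks. In the first case the strictness condition fails and $X$ is counted by exactly one term $\gamma(x,v,w)$ in the sum; in the second case the strictness condition holds and $X$ is counted by $r_1((u,v),w)$. Hence $\gamma(u,v,w) = r_1((u,v),w) + \sum_{x\in\adj(u)\setminus\{v\}}\gamma(x,v,w)$, which rearranges to the claimed identity. (A leaf neighbor $x$ contributes $0$, since a single-leaf subtree cannot supply both $a$ and $b$, so its inclusion in the sum is harmless.)

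I expect the main obstacle to be the bookkeeping with the directed-edge conventions: verifying that, for the two-edge path $x,u,v$, the inducement requirements at $(x,u)$ and at $(u,v)$ place $a,b$ consistently on the side away from $v$, and collapse to the single containment $a,b\in\Leaves(T_1(x,u))$ via the subtree inclusions noted above. Once this is checked, the partition of $\Leaves(T_1(u,v))$ by the subtrees hanging off $u$ makes the counting immediate, and the proof runs exactly parallel to the rooted case.
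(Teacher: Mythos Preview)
Your proposal is correct and follows essentially the same approach as the paper: characterize the quartets counted by $r_1((u,v),w)$ via the three conditions (association with $w$, inducement by $(u,v)$, and the strictness requirement that $a,b$ not both lie in a single subtree $T_1(x,u)$), then observe that $\gamma(u,v,w)$ captures the first two while $\sum_{x}\gamma(x,v,w)$ picks out exactly those failing the third. Your partition framing and the explicit verification of the subtree inclusions $\Leaves(T_1(x,u))\subseteq\Leaves(T_1(u,v))$ are a bit more careful than the paper's terse statement, but the argument is the same.
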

\begin{proof}
Let $X = \{a, b, c, d\}$ be a quartet that is counted in $r_1((u, v), w)$.  Without loss of generality, let $T_1|X = ab|cd$ such that $a, b \in \Leaves(T_1(u, v))$. It can be verified that
 $X$ must satisfy the following three conditions: (i) $X$ must be associated with node $w$ in $T_2$, (ii) $a, b \in \Leaves(T_1(u, v))$  and $c, d \in \Leaves(T_1(v, u))$, and (iii) there must not exist any $x \in \adj(u) \setminus \{v\}$ such that $a, b \in \Leaves(T_1(x, u))$. Moreover, observe that if there exists a quartet $X = \{a,b,c, d\}$ that satisfies these three conditions, then $X$ will be counted in $r_1((u, v), w)$; these three conditions are thus necessary and sufficient.

Now observe that $\gamma(u, v, v)$ counts exactly all those quartets that satisfy conditions (i) and (ii), while  $\sum_{x \in \ch(u)}\gamma(\pa(u), x, v)$ counts exactly all those quartets that satisfy conditions (i) and (ii), but not condition (iii). The lemma follows.
\end{proof}

To state our next results we need the following notation.
Given phylogenetic trees $T_1$ and $T_2$, consider a path $u_1,
u_2, \ldots, u_k$ where $k \geq 2$, in tree  $T_1$, and an
internal node $w \in \V(T_2)$ of degree at least 4. Let $P =
\Leaves(T_1(u_1, u_2))$, $Q = \Leaves(T_1(u_k, u_{k-1}))$ and let $x_1, \ldots, x_{|\adj(w)|}$ denote
the neighbors of $w$. 
Consider the quartets that are induced by every edge $(u_{i-1}, u_i)$, $2 \leq i \leq k$, in $T_1$: Let us call these quartets \emph{relevant}. Observe that a quartet is relevant if and only if it contains exactly two leaves from $P$ and two leaves from $Q$. Let


\begin{enumerate}

 \item $n_1(u_1, u_k, w)$ denote the number of relevant quartets $X$ for which there exists a neighbor $x$ of $w$ in tree $T_2$, such that $X$ is completely contained
in $T_2(x, w)$,
\item  $n_2(u_1, u_k, w)$ denote the number of relevant quartets $X$ for which there exist two neighbors $x, y$ of
$w$ in tree $T_2$, such that $T_2(x, w)$ contains three
leaves from $X$ and $T_2(y, w)$ contains the
other leaf,
\item $n_3(u_1, u_k, w)$ denote the number of relevant quartets $X$ for which there exist two neighbors $x, y$ of $w$ in tree
$T_2$, such that $T_2(x, w)$ contains two leaves from $X$
and $T_2(y, w)$ contains the other two leaves, and
\item $n_4(u_1, u_k, w)$ denote the number of relevant quartets $X$ for which there exist three neighbors $x, y, z$ of $w$ in tree
$T_2$, such that $T_2(x, w)$ contains two leaves from
$X$, $T_2(y, w)$ contains one leaf from $X$, and $T_2 (z, w)$ contains the remaining leaf.
\end{enumerate}
Then, we must have the following.


\begin{lemma} \label{quartet:lemma10}
\begin{equation}
\gamma(u_1, u_k, w) = {|P| \choose 2} \cdot {|Q| \choose 2} - n_1(u_1, u_k, w) - n_2(u_1, u_k, w) - n_3(u_1, u_k, w) - n_4(u_1, u_k, w).
\end{equation}
\end{lemma}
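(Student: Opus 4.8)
The plan is to obtain $\gamma(u_1, u_k, w)$ by inclusion--exclusion on relevant quartets: start from the total number of relevant quartets and subtract those that fail to be associated with $w$ in $T_2$. The starting point is the observation made just before the lemma, that a quartet is relevant precisely when it has exactly two leaves in $P$ and two leaves in $Q$; hence the number of relevant quartets is ${|P| \choose 2}\cdot{|Q| \choose 2}$, which supplies the leading term on the right-hand side.

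Next I would classify every relevant quartet $X = \{a,b,c,d\}$ according to how its four leaves fall among the neighbor-subtrees $T_2(x_i, w)$, for $i = 1, \dots, |\adj(w)|$. Since $w$ is internal (not a leaf), these subtrees partition $\Leaves(T_2)$, so each of the four leaves of $X$ lies in exactly one $T_2(x_i, w)$. The four leaves therefore induce a partition of the number $4$ into nonempty parts indexed by distinct neighbors of $w$, and the possible shapes are exactly $4$, $3+1$, $2+2$, $2+1+1$, and $1+1+1+1$. These five shapes are pairwise disjoint and exhaustive, so every relevant quartet falls into exactly one of them.

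I would then match the first four shapes to the four correction terms: by their definitions, $n_1(u_1,u_k,w)$ counts the $4$ shape (all four leaves in a single $T_2(x,w)$), $n_2(u_1,u_k,w)$ the $3+1$ shape, $n_3(u_1,u_k,w)$ the $2+2$ shape, and $n_4(u_1,u_k,w)$ the $2+1+1$ shape. The remaining shape, $1+1+1+1$, is exactly the case in which the four leaves lie in four distinct subtrees of $w$, which by definition of association holds if and only if the paths from $w$ to $a,b,c,d$ are pairwise edge-disjoint, i.e.\ if and only if $X$ is associated with $w$ in $T_2$. Restricted to relevant quartets, this count is precisely $\gamma(u_1,u_k,w)$. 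Summing over the five disjoint classes gives ${|P| \choose 2}{|Q| \choose 2} = n_1 + n_2 + n_3 + n_4 + \gamma(u_1,u_k,w)$, and rearranging yields the stated identity.

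The main obstacle is not arithmetic but the equivalence invoked in the last step --- that the four leaves lying in four distinct subtrees around $w$ is the same condition as $X$ being \emph{associated} with $w$ --- together with verifying that the five leaf-distribution shapes are genuinely exhaustive and pairwise disjoint, so that no relevant quartet is omitted or double-counted. I would also check the bookkeeping hidden inside the $n_i$ formulas, in particular that the per-neighbor summations count unordered leaf placements correctly and do not conflate ordered with unordered pairs of neighbors; but these verifications are routine once the five-way classification is in place.
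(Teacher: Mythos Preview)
Your proposal is correct and follows essentially the same approach as the paper: partition the relevant quartets into five disjoint, exhaustive classes according to how their leaves distribute among the neighbor-subtrees of $w$, identify four of these classes with $n_1,\dots,n_4$ and the fifth with $\gamma$, and solve for $\gamma$. Your write-up is more explicit than the paper's (which simply asserts the five-way partition), and your remark about checking the internal bookkeeping of the $n_i$ formulas is not needed for this lemma, since here the $n_i$ are defined directly by the configurations; the explicit summation formulas are handled in the subsequent lemmas.
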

\begin{proof}
The term ${|P| \choose 2} \cdot {|Q| \choose 2}$ is the number of
relevant quartets. Furthermore, each relevant quartet
must occur in tree $T_2$ in exactly one of the five configurations captured by the terms
$n_1(u_1, u_k, w)$, $n_2(u_1, u_k, w)$, $n_3(u_1, u_k, w)$, $n_4(u_1, u_k, w)$, and $\gamma(u_1, u_k, w)$. The lemma follows.
\end{proof}

The following four lemmas show that the values of $n_1(u_1, u_k, w)$, $n_2(u_1, u_k, w)$, $n_3(u_1, u_k, w)$, and $n_4(u_1, u_k, w)$ can be computed in $O(|\adj(w)|)$ time. The proofs of these lemmas all follow the same approach:  In each case, we show that the required value can be expressed as a sum of $O(|\adj(w)|)$ quantities, every one of which can be computed in $O(1)$ time based on the values computed in the pre-processing step.

\begin{lemma} \label{quartet:lemma:10:1}
The value $n_1(u_1, u_k, w)$ can be computed in $O(|\adj(w)|)$ time.
\end{lemma}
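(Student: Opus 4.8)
The plan is to derive a closed-form expression for $n_1(u_1, u_k, w)$ as a sum of $|\adj(w)|$ terms, each a product of two binomial coefficients whose arguments are intersection sizes already tabulated in the preprocessing step; the lemma then follows because each term is evaluated in $O(1)$ time.

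First I would record the combinatorial structure. By the paragraph preceding the lemma, a quartet $X$ is \emph{relevant} exactly when it contains two leaves from $P = \Leaves(T_1(u_1, u_2))$ and two leaves from $Q = \Leaves(T_1(u_k, u_{k-1}))$, and $n_1(u_1, u_k, w)$ counts those relevant $X$ that lie entirely within a single subtree $T_2(x, w)$ for some neighbor $x$ of $w$. Since the subtrees $\{T_2(x, w) : x \in \adj(w)\}$ partition $\Leaves(T_2)$ into pairwise leaf-disjoint sets, each such $X$ is contained in exactly one of them, so summing over neighbors introduces no double counting. Writing $p_x = |P \cap \Leaves(T_2(x, w))|$ and $q_x = |Q \cap \Leaves(T_2(x, w))|$ for the number of $P$- and $Q$-leaves falling inside $T_2(x, w)$, a relevant quartet contained in $T_2(x,w)$ is formed by choosing two of the $p_x$ available $P$-leaves and two of the $q_x$ available $Q$-leaves, so
\[
n_1(u_1, u_k, w) = \sum_{x \in \adj(w)} \binom{p_x}{2} \binom{q_x}{2}.
\]

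Next I would observe that each $p_x$ and $q_x$ is precisely one of the intersection sizes stored during preprocessing. Indeed, $P$ and $Q$ are the leaf sets of the $T_1$-subtrees determined by the directed edges $(u_1, u_2)$ and $(u_k, u_{k-1})$ of $T_1$, while $\Leaves(T_2(x, w))$ is the leaf set determined by the directed edge $(x, w)$ of $T_2$, whence
\[
p_x = |\Leaves(T_1(u_1, u_2)) \cap \Leaves(T_2(x, w))|, \qquad q_x = |\Leaves(T_1(u_k, u_{k-1})) \cap \Leaves(T_2(x, w))|.
\]
Both are entries in the table produced by the preprocessing step and are retrieved in $O(1)$ time, so each summand costs $O(1)$ and the entire sum costs $O(|\adj(w)|)$, as claimed.

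The main point requiring care is the disjointness argument justifying the independent sum over neighbors: because a quartet contained in $T_2(x, w)$ has all four of its leaves in that single subtree, and distinct neighbors yield leaf-disjoint subtrees, each relevant quartet of this type belongs to exactly one $T_2(x,w)$. The only remaining bookkeeping is to orient $P$ and $Q$ along the correct directed edges of $T_1$ (namely $(u_1,u_2)$ and $(u_k,u_{k-1})$, pointing away from the interior of the path) so that the looked-up intersection sizes are the intended ones; once this is fixed, the rest is routine arithmetic.
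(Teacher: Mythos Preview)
Your proposal is correct and follows essentially the same approach as the paper: you derive the identical formula $n_1(u_1,u_k,w)=\sum_{x\in\adj(w)}\binom{p_x}{2}\binom{q_x}{2}$ and justify the $O(|\adj(w)|)$ bound via $O(1)$ table lookups from the preprocessing step. Your write-up is, if anything, slightly more explicit than the paper's about the leaf-disjointness of the subtrees $T_2(x,w)$ and about which directed edges index the stored intersection sizes, but the argument is the same.
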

\begin{proof}
We will show that \begin{equation} \label{quartet:eqn2}
n_1(u_1, u_k, w) = \sum_{i=1}^{|\adj(w)|} {{|\Leaves(T_2(x_i, w)) \cap P|} \choose 2} \cdot {{|\Leaves(T_2(x_i, w)) \cap Q|} \choose 2}.
\end{equation}

The right hand side of Equation~\eqref{quartet:eqn2} counts all those quartets that are completely contained in $\Leaves(T_2(x, w))$ for some $x \in \adj(w)$ and that have two elements from $P$ and two from $Q$. These are exactly the quartets that must be counted in $n_1(u_1, u_k, w)$.
\end{proof}

\begin{lemma} \label{quartet:lemma:10:2}
The value $n_2(u_1, u_k, w)$ can be computed in $O(|\adj(w)|)$ time.
\end{lemma}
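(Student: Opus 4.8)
The plan is to mirror the proof of Lemma~\ref{quartet:lemma:10:1}: I would express $n_2(u_1, u_k, w)$ as a sum of $O(|\adj(w)|)$ terms, each evaluable in $O(1)$ time from the intersection sizes computed in the preprocessing step. First I would introduce, for each neighbor $x_i$ of $w$, the quantities $p_i = |\Leaves(T_2(x_i, w)) \cap P|$ and $q_i = |\Leaves(T_2(x_i, w)) \cap Q|$. Since $P = \Leaves(T_1(u_1, u_2))$ and $Q = \Leaves(T_1(u_k, u_{k-1}))$, each $p_i$ and $q_i$ is exactly one of the directed-edge intersection sizes $|\Leaves(T_1(\cdot,\cdot)) \cap \Leaves(T_2(x_i, w))|$ stored during preprocessing, hence available in $O(1)$ time; the totals $|P| = \sum_i p_i$ and $|Q| = \sum_i q_i$ can be formed once in $O(|\adj(w)|)$ time.

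The key claim I would establish is
$$n_2(u_1, u_k, w) = \sum_{i=1}^{|\adj(w)|} \left( {p_i \choose 2} \cdot q_i \cdot (|Q| - q_i) + p_i \cdot {q_i \choose 2} \cdot (|P| - p_i) \right).$$
To justify it, recall that a quartet counted in $n_2$ is relevant (two leaves in $P$, two in $Q$) and has exactly three of its leaves inside a single subtree $T_2(x,w)$ and the remaining leaf in a different subtree $T_2(y,w)$. Because three is strictly larger than one, the ``majority'' subtree $x$ is uniquely determined by the quartet, so summing over the index $i$ of this subtree counts each such quartet exactly once. Given that the quartet has composition $2$-from-$P$ and $2$-from-$Q$, the three leaves lying in $T_2(x_i, w)$ must follow exactly one of two mutually exclusive patterns: either two $P$-leaves and one $Q$-leaf sit in $T_2(x_i, w)$, with the lone remaining leaf being a $Q$-leaf drawn from some other subtree (contributing ${p_i \choose 2} \cdot q_i \cdot (|Q| - q_i)$), or one $P$-leaf and two $Q$-leaves sit in $T_2(x_i, w)$, with the remaining leaf being a $P$-leaf from some other subtree (contributing $p_i \cdot {q_i \choose 2} \cdot (|P| - p_i)$). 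The factors $(|Q| - q_i)$ and $(|P| - p_i)$ enforce that the singleton leaf lies outside $T_2(x_i, w)$, i.e., in a genuinely distinct neighboring subtree of $w$, as required by the definition.

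The time bound is then immediate: each summand depends only on $p_i$, $q_i$, $|P|$, and $|Q|$, so it is computed in $O(1)$ time, and there are $|\adj(w)|$ summands, giving $O(|\adj(w)|)$ overall. The one point I would verify explicitly --- and the main (though mild) obstacle --- is that no quartet is counted twice. This reduces to the two observations made above: the three-leaf subtree is unique, so the outer sum over $i$ does not overcount; and for a fixed $i$ the two patterns partition the relevant quartets according to whether the leaf omitted from $T_2(x_i,w)$ belongs to $P$ or to $Q$, so they never overlap. Everything else is routine bookkeeping entirely analogous to the $n_1$ case.
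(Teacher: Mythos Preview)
Your proposal is correct and takes essentially the same approach as the paper. The paper's formula writes the ``remaining leaf'' factor as $|\Leaves(T_2(w, x_i)) \cap Q|$ (respectively $|\Leaves(T_2(w, x_i)) \cap P|$), which is itself a preprocessed value, whereas you equivalently write it as $|Q| - q_i$ (respectively $|P| - p_i$); the decomposition into the two patterns and the uniqueness-of-majority-subtree argument are identical.
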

\begin{proof}
We will show that
\begin{align} \label{quartet:eqn3}
n_2(u_1, u_k, w) & =  \sum_{i=1}^{|\adj(w)|}{{|\Leaves(T_2(x_i, w)) \cap P|} \choose 2} \cdot {|\Leaves(T_2(x_i, w)) \cap Q|} \cdot {|\Leaves(T_2(w, x_i)) \cap Q|} \nonumber  \\ & +  \sum_{i=1}^{|\adj(w)|}{{|\Leaves(T_2(x_i, w)) \cap Q|} \choose 2} \cdot {|\Leaves(T_2(x_i, w)) \cap P|} \cdot {|\Leaves(T_2(w, x_i)) \cap P|}.
\end{align}

The quartets $X$ counted in $n_2(u_1, u_k, w)$ are exactly those for which there exist two neighbors $x, y$ of $w$ such that either (i) $X \cap \Leaves(T_2(x, w))$ contains two leaves from $P$ and one from $Q$, and $X \cap \Leaves(T_2(y, w))$ contains a leaf from $Q$ or (ii) $X \cap \Leaves(T_2(x, w))$ contains two leaves from $Q$ and one from $P$, and $X \cap \Leaves(T_2(y, w))$ contains a leaf from $P$. The first term on the right hand side of Equation~\eqref{quartet:eqn3} is exactly the number of quartets that satisfy condition (i), and the second term on the right hand side is exactly the number of quartets satisfying condition (ii).
\end{proof}

\begin{lemma} \label{quartet:lemma:10:3}
The value $n_3(u_1, u_k, w)$ can be computed in $O(|\adj(w)|)$ time.
\end{lemma}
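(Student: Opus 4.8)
The plan is to count $n_3(u_1, u_k, w)$ by first recording, for each neighbor $x_i$ of $w$, the two quantities $a_i = |\Leaves(T_2(x_i,w)) \cap P|$ and $b_i = |\Leaves(T_2(x_i,w)) \cap Q|$. Since $P = \Leaves(T_1(u_1,u_2))$ and $Q = \Leaves(T_1(u_k,u_{k-1}))$ each correspond to a directed edge of $T_1$, and $\Leaves(T_2(x_i,w))$ corresponds to the directed edge $(x_i,w)$ of $T_2$, both $a_i$ and $b_i$ can be read off in $O(1)$ time from the preprocessing table of Section~\ref{subsec:R1:approx}. Computing all of them therefore takes $O(|\adj(w)|)$ time, and the remaining task is to assemble $n_3$ from these numbers without incurring a quadratic blow-up.

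Next I would classify each relevant quartet counted in $n_3$ according to how its two $P$-leaves and two $Q$-leaves distribute across the two occupied subtrees $T_2(x,w)$ and $T_2(y,w)$, with $x \neq y$. Because a relevant quartet has exactly two $P$-leaves and two $Q$-leaves, and each of the two subtrees holds exactly two of the four leaves, only two configurations are possible: (A) one subtree holds both $P$-leaves while the other holds both $Q$-leaves; or (B) each of the two subtrees holds exactly one $P$-leaf and one $Q$-leaf. These two types are disjoint and together exhaust $n_3$.

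For type (A), such a quartet is determined by choosing the subtree $x_i$ containing both $P$-leaves (in $\binom{a_i}{2}$ ways) together with a pair of $Q$-leaves lying inside a single \emph{other} subtree. Writing $B = \sum_i \binom{b_i}{2}$ for the total number of $Q$-pairs that share a subtree, the number of $Q$-pairs confined to a common subtree other than $x_i$ is $B - \binom{b_i}{2}$, so type (A) contributes $\sum_i \binom{a_i}{2}\bigl(B - \binom{b_i}{2}\bigr)$. For type (B), letting $S = \sum_i a_i b_i$ be the total number of $(P,Q)$ pairs lying within a single subtree, the product $a_i b_i\,(S - a_i b_i)$ counts ordered choices of a subtree carrying one $(P,Q)$ pair and a distinct subtree carrying another $(P,Q)$ pair; dividing by two to undo the symmetry yields $\tfrac12 \sum_i a_i b_i\,(S - a_i b_i)$. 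Hence
\[ n_3(u_1,u_k,w) = \sum_i \binom{a_i}{2}\Bigl(B - \binom{b_i}{2}\Bigr) + \frac12 \sum_i a_i b_i\,(S - a_i b_i), \]
and, once $B$ and $S$ have been obtained by a single $O(|\adj(w)|)$-time pass, this is a sum of $O(|\adj(w)|)$ terms, each evaluable in $O(1)$ time.

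The main obstacle, and the point that distinguishes this case from $n_1$ and $n_2$, is that no single subtree plays a distinguished role in a $2$--$2$ split: the configuration is symmetric in the two occupied subtrees, so the naive count ranges over all $\binom{|\adj(w)|}{2}$ pairs and costs $O(|\adj(w)|^2)$. The key to staying within $O(|\adj(w)|)$ is to replace the pairwise sum by the global aggregates $B$ and $S$ and subtract the diagonal ($x = y$) contribution, while being careful to halve the symmetric type (B) to prevent double counting yet leave the asymmetric type (A) — in which the $P$-subtree and the $Q$-subtree play genuinely different roles — counted exactly once.
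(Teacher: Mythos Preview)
Your proposal is correct and essentially identical to the paper's proof: you split $n_3$ into the same two cases (both $P$-leaves in one subtree and both $Q$-leaves in the other, versus one of each in both subtrees), aggregate with the same global sums, and halve the symmetric case. The only cosmetic difference is that in case~(A) you iterate over the subtree holding the $P$-pair and use $B=\sum_j\binom{b_j}{2}$, whereas the paper iterates over the subtree holding the $Q$-pair and uses $\alpha=\sum_j\binom{a_j}{2}$; since $\sum_i\binom{a_i}{2}(B-\binom{b_i}{2})=\sum_i(\alpha-\binom{a_i}{2})\binom{b_i}{2}$, the two formulas coincide.
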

\begin{proof}
We will show that
\begin{align} \label{quartet:eqn4}
n_3(u_1, u_k, w)   & =   \sum_{i=1}^{|\adj(w)|}\left \{ \alpha - {{|\Leaves(T_2(x_i, w)) \cap P|} \choose 2}\right \} \cdot {{|\Leaves(T_2(x_i, w)) \cap Q|} \choose 2}  \nonumber \\ & +   \frac{1}{2} \sum_{i=1}^{|\adj(w)|} \left \{ \beta - {{|\Leaves(T_2(x_i, w)) \cap P|} \cdot {|\Leaves(T_2(x_i, w)) \cap Q|} }\right \} \cdot {|\Leaves(T_2(x_i, w)) \cap P|} \nonumber \\ &   &  \hspace*{6cm} \cdot {|\Leaves(T_2(x_i, w)) \cap Q|}.
\end{align}
 Where
 \begin{equation}
 \alpha = \sum_{i=1}^{|\adj(w)|}{{|\Leaves(T_2(x_i, w)) \cap P|} \choose 2},
 \end{equation}
 \begin{equation}
 \beta = \sum_{i=1}^{|\adj(w)|}{|\Leaves(T_2(x_i, w)) \cap P|} \cdot {|\Leaves(T_2(x_i, w)) \cap Q|}.
\end{equation}

The quartets $X$ counted in $n_3(u_1, u_k, w)$ are exactly those quartets for which there exist two neighbors $x, y$ of $w$ such that either (i) $X \cap \Leaves(T_2(x, w))$ contains two leaves from $P$, and $T_2(y, w)$ contains two leaves from $Q$, or (ii) $X \cap \Leaves(T_2(x, w))$ and $X \cap \Leaves(T_2(y, w))$ both contain one leaf each from $P$ and $Q$. The first term on the right hand side of Equation~\eqref{quartet:eqn4} is exactly the number of quartets that satisfy condition (i). The sum in the second term on the right hand side counts the quartets satisfying condition (ii) exactly twice each (due to the symmetry between $x$ and $y$ in condition (ii)). This explains the $\frac{1}{2}$ multiplicative factor.
\end{proof}

\begin{lemma}\label{quartet:lemma:10:4}
The value $n_4(u_1, u_k, w)$ can be computed in $O(|\adj(w)|)$ time.
\end{lemma}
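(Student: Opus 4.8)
The plan is to prove Lemma~\ref{quartet:lemma:10:4} in the same style as Lemmas~\ref{quartet:lemma:10:1}--\ref{quartet:lemma:10:3}: I will exhibit a closed-form expression for $n_4(u_1,u_k,w)$ as a sum indexed by the neighbors $x_1,\dots,x_{|\adj(w)|}$ of $w$, in which each summand is computable in $O(1)$ time from the preprocessed intersection counts together with a constant number of global quantities that are precomputed in a single $O(|\adj(w)|)$ pass. Once such a formula is in hand, the $O(|\adj(w)|)$ bound is immediate. Throughout, I would abbreviate the preprocessed values $P_i = |\Leaves(T_2(x_i,w)) \cap P|$ and $Q_i = |\Leaves(T_2(x_i,w)) \cap Q|$, recalling that $|P|-P_i = |\Leaves(T_2(w,x_i)) \cap P|$ and $|Q|-Q_i = |\Leaves(T_2(w,x_i)) \cap Q|$ are likewise available from preprocessing.

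First I would fix the combinatorial decomposition. A relevant quartet counted in $n_4$ has its four leaves distributed among three \emph{distinct} neighbor-subtrees of $w$ in the pattern $2+1+1$: one subtree $T_2(x,w)$ contains two of the leaves, and two further distinct subtrees contain one leaf each. Since every relevant quartet has exactly two leaves in $P$ and two in $Q$, the ``double'' subtree must hold one of three mutually exclusive compositions --- two $P$-leaves, two $Q$-leaves, or one leaf of each type. Because the split of a relevant quartet into its double subtree and its two singleton subtrees is uniquely determined, these three cases partition the quartets counted by $n_4$, so I can write $n_4(u_1,u_k,w) = A + B + C$ and count each piece separately.

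For case $A$ (the double subtree $x_i$ holds two $P$-leaves), the two remaining leaves are the two $Q$-leaves of the quartet and must lie in two distinct subtrees other than $x_i$; this gives the summand $\binom{P_i}{2}\bigl(\binom{|Q|-Q_i}{2} - \sum_{j\ne i}\binom{Q_j}{2}\bigr)$, where the subtracted term removes $Q$-leaf pairs sharing a subtree (those correspond to a $2+2$ split, i.e.\ to $n_3$). Case $B$ is obtained by exchanging the roles of $P$ and $Q$. For case $C$ the double subtree contributes $P_i \cdot Q_i$ ways, and the remaining $P$-leaf and $Q$-leaf must occupy distinct subtrees other than $x_i$, contributing $P_i Q_i\bigl((|P|-P_i)(|Q|-Q_i) - \sum_{j\ne i}P_j Q_j\bigr)$; no factor of $\tfrac12$ arises here, since the $P$- and $Q$-leaves are distinguishable and hence so are their host subtrees. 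Summing over $i$ yields the desired closed form.

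The one point that needs care --- and the only real obstacle --- is keeping the evaluation within $O(|\adj(w)|)$ time despite the inner sums $\sum_{j\ne i}(\cdot)$, which naively cost $O(|\adj(w)|)$ apiece and would total $O(|\adj(w)|^2)$. This is resolved exactly as the quantities $\alpha$ and $\beta$ were used in Lemma~\ref{quartet:lemma:10:3}: I would precompute the three global totals $\sum_j\binom{Q_j}{2}$, $\sum_j\binom{P_j}{2}$, and $\sum_j P_j Q_j$ once in $O(|\adj(w)|)$ time, and then recover each inner sum in $O(1)$ by subtracting the single $i$-th term, e.g.\ $\sum_{j\ne i}\binom{Q_j}{2} = \sum_j\binom{Q_j}{2} - \binom{Q_i}{2}$. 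The remaining bookkeeping is purely a matter of confirming that the three cases are exhaustive and disjoint and that the ``same-subtree'' subtractions correctly exclude the $2+2$ configurations belonging to $n_3$ rather than $n_4$.
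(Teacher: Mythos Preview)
Your argument is correct. Your three-case decomposition (double subtree holds $PP$, $QQ$, or $PQ$) is sound, the per-case formulas are right, and the ``subtract the $i$-th term from a precomputed total'' trick brings the evaluation down to $O(|\adj(w)|)$, exactly as in Lemma~\ref{quartet:lemma:10:3}.

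The paper takes a closely related but slightly different route. Rather than subtracting the same-subtree contributions case by case, it writes
\[
n_4(u_1,u_k,w)=\sum_i\binom{P_i}{2}\binom{|Q|-Q_i}{2}+\sum_i\binom{Q_i}{2}\binom{|P|-P_i}{2}+\sum_i P_iQ_i(|P|-P_i)(|Q|-Q_i)\;-\;2\,n_3(u_1,u_k,w),
\]
observing that the first three sums overcount by exactly the $2{+}2$ configurations already tabulated in $n_3$ (each such configuration is hit twice across the three sums, once by symmetry in each relevant case). Your expression and the paper's are algebraically equivalent: expanding your inner subtractions $\sum_{j\neq i}\binom{Q_j}{2}$, $\sum_{j\neq i}\binom{P_j}{2}$, and $\sum_{j\neq i}P_jQ_j$ via the global totals yields precisely the $-2n_3$ correction. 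The paper's version buys a shorter formula by reusing the value $n_3$ just computed in Lemma~\ref{quartet:lemma:10:3}; your version is more self-contained, at the cost of needing one extra precomputed total $\sum_j\binom{Q_j}{2}$ beyond the $\alpha$ and $\beta$ already available. Either way the time bound is the same.
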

\begin{proof}
We will show that
\begin{align} \label{quartet:eqn5}
n_4(u_1, u_k, w)  & = \sum_{i=1}^{|\adj(w)|}{{|\Leaves(T_2(x_i, w)) \cap P|} \choose 2} \cdot {{|\Leaves(T_2(w, x_i)) \cap Q|} \choose 2}   \nonumber \\ & +  \sum_{i=1}^{|\adj(w)|}{{|\Leaves(T_2(x_i, w)) \cap Q|} \choose 2} \cdot {{|\Leaves(T_2(w, x_i)) \cap P|} \choose 2}  \nonumber \\ & +  \sum_{i=1}^{|\adj(w)|}{|\Leaves(T_2(x_i, w)) \cap P|} \cdot {|\Leaves(T_2(x_i, w)) \cap Q|} \cdot {|\Leaves(T_2(w, x_i)) \cap P|} \cdot {|\Leaves(T_2(w, x§_i)) \cap Q|} \nonumber \\ & - 2 \cdot n_3(u_1, u_k, w).
\end{align}

The quartets $X$ counted in $n_4(u_1, u_k, w)$ are exactly those quartets for which there exist three neighbors $x, y, z$ of $w$ such that either (i) $X \cap \Leaves(T_2(x, w))$ contains two leaves from $P$, and $T_2(y, w)$ and $T_2(z, w)$ each contain a leaf from $Q$, or (ii) $X \cap \Leaves(T_2(x, w))$ contains two leaves from $Q$, and $X \cap \Leaves(T_2(y, w))$ and $X \cap \Leaves(T_2(z, w))$ each contain a leaf from $P$, or (iii) $X \cap \Leaves(T_2(x, w))$ contains a leaf from $P$ and a leaf from $Q$, $X \cap \Leaves(T_2(y, w))$ contains a leaf from $P$, and $X \cap \Leaves(T_2(z, w))$ contains a leaf from $Q$.

The first term on the right hand side of Equation~\eqref{quartet:eqn5} counts all the quartets that satisfy condition (i), and, in addition, all the quartets that satisfy condition (i) from the proof of Lemma~\ref{quartet:lemma:10:3}. Similarly, the second term on the right hand side counts the quartets that satisfy condition (ii), along with all the quartets that satisfy condition (i) from the proof of Lemma~\ref{quartet:lemma:10:3}. The third term on the right hand side counts those quartets that satisfy condition (iii), and also counts, exactly twice each (again due to symmetry),  those that satisfy condition (ii) from the proof of Lemma~\ref{quartet:lemma:10:3}. Thus, by adding the first three terms on the right hand side of Equation~\eqref{quartet:eqn5}, we obtain the value $n_4(u_1, u_k, w) + 2 \cdot n_3(u_1, u_k, w)$.
\end{proof}

\begin{figure}
\begin{algorithmic}[1]
\REQUIRE Approx-$\Res_1(T_1, T_2)$

\STATE Convert the unrooted tree $T_1$ into a rooted one by rooting it at any internal node.

\FOR{each internal node $u \in \V(T_1) \setminus \rt(T_1)$}

\FOR{each internal unresolved node $w \in \V(T_2)$}

\STATE Compute $r_1((u, \pa(u)), w)$.
\ENDFOR

\ENDFOR

\STATE \textbf{return} the sum of all computed $r_1(\cdot, \cdot)$.
\end{algorithmic}
\caption{Computing a 2-approximation to $|\Res_1(T_1, T_2)|$}\label{Alg-Quartet-Approx-R1}
\end{figure}

\begin{lemma} \label{quartet:lemma11}
Given two unrooted phylogenetic trees $T_1$ and $T_2$ on the same
size $n$ leaf set, a value $y$ such that  $|\Res_1(T_1, T_2)| \leq y \leq 2
\cdot |\Res_1(T_1, T_2)|$ can be computed in $O(n^2)$ time.
\end{lemma}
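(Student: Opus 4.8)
The plan is to analyze the algorithm of Figure~\ref{Alg-Quartet-Approx-R1}, which roots $T_1$ at an arbitrary internal node, then for each internal non-root node $u \in \V(T_1)$ and each internal unresolved node $w \in \V(T_2)$ computes $r_1((u,\pa(u)),w)$, and finally returns $y = \sum_{u,w} r_1((u,\pa(u)),w)$. Correctness will follow immediately from Lemma~\ref{quartet:lemmaMain2}: since $r_1((u,\pa(u)),w) = 0$ whenever $u$ is a leaf or $w$ is a resolved internal node of $T_2$ (no quartet can be associated with a degree-$3$ node), the quantity the algorithm returns equals the sum $\sum_{u \in \V(T'_1)\setminus(\rt(T'_1)\cup\Leaves(T'_1)),\, w \in \V(T_2)\setminus\Leaves(T_2)} r_1((u,\pa(u)),w)$, which Lemma~\ref{quartet:lemmaMain2} sandwiches between $|\Res_1(T_1,T_2)|$ and $2\cdot|\Res_1(T_1,T_2)|$. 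Thus the output $y$ is a valid $2$-approximation, and the entire content of the lemma reduces to the running-time analysis.

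The heart of the argument is to show that each individual value $r_1((u,\pa(u)),w)$ can be evaluated quickly. By Lemma~\ref{quartetlem:count_gamma}, $r_1((u,\pa(u)),w) = \gamma(u,\pa(u),w) - \sum_{x \in \ch(u)} \gamma(x,\pa(u),w)$, so it suffices to evaluate $1+|\ch(u)|$ quantities of the form $\gamma(\cdot,\cdot,w)$. Each such $\gamma$ is expressed by Lemma~\ref{quartet:lemma10} as ${|P| \choose 2}{|Q| \choose 2}$ minus the four counts $n_1,\dots,n_4$, and Lemmas~\ref{quartet:lemma:10:1}--\ref{quartet:lemma:10:4} show that each $n_i$ is a sum of $O(|\adj(w)|)$ terms, every one a product of binomials of directed-edge leaf-intersection sizes. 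The crucial observation is that each intersection size occurring here --- for instance $|\Leaves(T_2(x_i,w)) \cap P|$ with $P = \Leaves(T_1(u_1,u_2))$ --- is exactly one of the directed-edge intersection values tabulated in the preprocessing step (the step at the start of Section~\ref{subsec:R1:approx}), hence retrievable in $O(1)$ time. Consequently each $\gamma(\cdot,\cdot,w)$ costs $O(|\adj(w)|)$, and each $r_1((u,\pa(u)),w)$ costs $O((1+|\ch(u)|)\cdot|\adj(w)|)$.

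It then remains to bound the total running time. Adding the $O(n^2)$ preprocessing cost to the cost of the double loop gives
\[
O(n^2) + \sum_{u \in \V(T_1)} \sum_{w \in \V(T_2)} \bigl(1+|\ch(u)|\bigr)\cdot|\adj(w)| .
\]
Since the summand factors into a function of $u$ times a function of $w$, the double sum equals $\left(\sum_{u}(1+|\ch(u)|)\right)\cdot\left(\sum_{w}|\adj(w)|\right)$; the first factor is $O(n)$ because the children counts sum to the number of edges of $T_1$, and the second is $O(n)$ because the degrees sum to twice the number of edges of $T_2$. Hence the computation runs in $O(n^2)$ time, proving the lemma. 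I expect the main obstacle to be precisely this accounting: one must verify that every leaf-intersection quantity appearing in the $n_i$ formulas is genuinely a preprocessed directed-edge value, so that no single $\gamma$ evaluation exceeds $O(|\adj(w)|)$, and that the per-pair costs telescope through the factorization above rather than inflating the bound by a factor of the maximum degree of $T_1$ (which is exactly what the naive edge-by-edge evaluation suggested by Lemma~\ref{quartet:lemmaMain1} would incur).
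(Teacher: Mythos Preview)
Your proposal is correct and follows essentially the same approach as the paper: correctness via Lemma~\ref{quartet:lemmaMain2}, then the chain Lemmas~\ref{quartet:lemma:10:1}--\ref{quartet:lemma:10:4} $\Rightarrow$ Lemma~\ref{quartet:lemma10} $\Rightarrow$ Lemma~\ref{quartetlem:count_gamma} to get $r_1((u,\pa(u)),w)$ in $O((1+|\ch(u)|)\cdot|\adj(w)|)$ time, and finally the factorized double sum to obtain $O(n^2)$. Your write-up is slightly more explicit than the paper's (e.g., noting that $r_1$ vanishes when $w$ has degree~$3$, and spelling out the factorization $\sum_u(1+|\ch(u)|)\cdot\sum_w|\adj(w)|$), but the argument is the same.
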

\begin{proof}

Our algorithm to compute a 2-approximate value of  $|\Res_1(T_1, T_2)|$ is summarized in Figure~\ref{Alg-Quartet-Approx-R1}.
Lemma~\ref{quartet:lemmaMain2} immediately implies that the algorithm computes a value between $|\Res_1(T_1, T_2)|$ and $2
\cdot |\Res_1(T_1, T_2)|$.

We now analyze the time complexity of our algorithm.
By Lemmas~\ref{quartet:lemma:10:1}, \ref{quartet:lemma:10:2}, \ref{quartet:lemma:10:3}, and \ref{quartet:lemma:10:4}, the values $n_1(u_1, u_k, w)$, $n_2(u_1, u_k, w)$, $n_3(u_1, u_k, w)$, and $n_4(u_1, u_k, w)$ can all be computed within $O(|\adj(w)|)$ time. Hence, by Lemma~\ref{quartet:lemma10}, the value of any $\gamma(\cdot, \cdot, w)$ can be computed in $O(|\adj(w)|)$ time. Lemma~\ref{quartetlem:count_gamma} now implies that, for any given $(u, v) \in \overrightarrow{\E}(T_1)$ and $w \in \V(T_2) \setminus \Leaves(T_2)$, the value $r_1((u,v), w)$ can be computed within $O(|\adj(u)|\cdot |\adj(w)|)$ time.
Thus, the total time complexity of the algorithm is $O(\sum_{u \in \V(T'_1)} \sum_{w \in \V(T_2)} |\ch(u)| \cdot |\adj(w)|)$, which is $O(n^2)$.
\end{proof}

\section{Discussion}

We have defined and analyzed distance measures for rooted and unrooted phylogenies that account for partially-resolved nodes.  A number of problems remain.  First, there is the question of  determining whether there exists a polynomial-time algorithm for computing the median tree with respect to parametric triplet and quartet distances.  We conjecture that this problem is NP-hard.  Also open is the question of whether the Hausdorff distance between partially-resolved trees is NP-hard.  Finally, many (if not most) applications require the comparison of trees that do not have the same set of taxa.  It would be interesting to investigate whether any of our distance measures can be extended to this setting.



\begin{thebibliography}{10}

\bibitem{Adams86}
E.~N. Adams~III.
\newblock N-trees as nestings: Complexity, similarity, and consensus.
\newblock {\em J. Classification}, 3(2):299--317, 1986.

\bibitem{AilonCharikarNewman05}
N.~Ailon, M.~Charikar, and A.~Newman.
\newblock Aggregating inconsistent information: ranking and clustering.
\newblock In {\em {STOC '05: Proceedings of the thirty-seventh annual ACM
  symposium on Theory of Computing}}, pages 684--693, New York, NY, USA, 2005.
  ACM Press.

\bibitem{AllenSteel01}
B.~Allen and M.~Steel.
\newblock Subtree transfer operations and their induced metrics on evolutionary
  trees.
\newblock {\em Annals of Combinatorics}, 5:1--13, 2001.

\bibitem{BarthelemyMcMorris86}
J.~P. Barth\'elemy and F.~R. McMorris.
\newblock The median procedure for n-trees.
\newblock {\em Journal of Classification}, 3:329--334, 1986.

\bibitem{BartholdiToveyTrick89}
J.~J. Bartholdi, C.~A. Tovey, and M.~A. Trick.
\newblock Voting schemes for which it can be difficult to tell who won the
  election.
\newblock {\em Social Choice and Welfare}, 6:157--–165, 1989.

\bibitem{BerryJiangKearneyLiWareham99}
V.~Berry, T.~Jiang, P.~E. Kearney, M.~Li, and H.~T. Wareham.
\newblock Quartet cleaning: Improved algorithms and simulations.
\newblock In {\em Proceedings of the 7th Annual European Symposium on
  Algorithms}, volume 1643 of {\em LNCS}, pages 313–--324. Springer, 1999.

\bibitem{BinindaEmonds04}
O.~R.~P. Bininda-Emonds, editor.
\newblock {\em Phylogenetic Supertrees: Combining Information to Reveal the
  Tree of Life}, volume~4 of {\em Series on Computational Biology}.
\newblock Springer, 2004.

\bibitem{BrodalFagerbergPedersen03}
G.~S. Brodal, R.~Fagerberg, and C.~N.~S. Pedersen.
\newblock {Computing the quartet distance in time $O(n \log n)$}.
\newblock {\em Algorithmica}, 38(2):377--395, 2003.

\bibitem{Bryant97}
D.~Bryant.
\newblock {\em Building trees, hunting for trees, and comparing trees: {T}heory
  and methods in phylogenetic analysis}.
\newblock PhD thesis, Department of Mathematics, University of Canterbury, New
  Zealand, 1997.

\bibitem{Bryant03}
D.~Bryant.
\newblock A classification of consensus methods for phylogenetics.
\newblock In M.~Janowitz, F.-J. Lapointe, F.~McMorris, B.~B.~Mirkin, and
  F.~Roberts, editors, {\em Bioconsensus}, volume~61 of {\em Discrete
  Mathematics and Theoretical Computer Science}, pages 163–--185. American
  Mathematical Society, Providence, RI, 2003.

\bibitem{BryantTsangKearneyLi00}
D.~Bryant, J.~Tsang, P.~Kearney, and M.~Li.
\newblock Computing the quartet distance between evolutionary trees.
\newblock In {\em SODA '00: Proceedings of the Eleventh Annual ACM-SIAM
  Symposium on Discrete Algorithms}, pages 285--286, Philadelphia, PA, USA,
  2000. Society for Industrial and Applied Mathematics.

\bibitem{ChristiansenMailundPedersenRandersStigStissing06}
C.~Christiansen, T.~Mailund, C.~N. Pedersen, M.~Randers, and M.~S. Stissing.
\newblock Fast calculation of the quartet distance between trees of arbitrary
  degrees.
\newblock {\em Algorithms for Molecular Biology}, 1(16), 2006.

\bibitem{CottonSlaterWilkinson06}
J.~A. Cotton, C.~S. Slater, and M.~Wilkinson.
\newblock Discriminating supported and unsupported relationships in supertrees
  using triplets.
\newblock {\em Systematic Biology}, 55(2):345--350, April 2006.

\bibitem{Critchlow80}
D.~E. Critchlow.
\newblock {\em Metric Methods for Analyzing Partially Ranked Data}, volume~34
  of {\em Lecture Notes in Statist.}
\newblock Springer-Verlag, Berlin, 1980.

\bibitem{Day86}
W.~H.~E. Day.
\newblock Analysis of quartet dissimilarity measures between undirected
  phylogenetic trees.
\newblock {\em Systematic Zoology}, 35(3):325--333, Sep. 1986.

\bibitem{DiaconisGraham77}
P.~Diaconis and R.~Graham.
\newblock Spearman's footrule as a measure of disarray.
\newblock {\em J. of the Royal Statistical Society, Series B}, 39(2):262--268,
  1977.

\bibitem{DriskellAneBurleighMcMahonOMearaSanderson04}
A.~C. Driskell, C.~An\'e, J.~G. Burleigh, M.~M. McMahon, B.~C. O'Meara, and
  M.~J. Sanderson.
\newblock Prospects for building the tree of life from large sequence
  databases.
\newblock {\em Science}, 306(5699):1172--1174, November 2004.

\bibitem{DworkKumarNaorSivakumar01}
C.~Dwork, R.~Kumar, M.~Naor, and D.~Sivakumar.
\newblock Rank aggregation methods for the web.
\newblock In {\em Tenth International World Wide Web Conference}, pages
  613–--622, Hong Kong, May 2001.

\bibitem{FaginKumarMahdianSivakumarVee06}
R.~Fagin, R.~Kumar, M.~Mahdian, D.~Sivakumar, and E.~Vee.
\newblock Comparing partial rankings.
\newblock {\em SIAM J. Discrete Math.}, 20(3):628--648, 2006.

\bibitem{FaginKumarSivakumar03a}
R.~Fagin, R.~Kumar, and D.~Sivakumar.
\newblock Comparing top $k$ lists.
\newblock {\em SIAM J. Discrete Math.}, pages 134--–160, 2003.

\bibitem{FarachThorup94}
M.~Farach and M.~Thorup.
\newblock Optimal evolutionary tree comparison by sparse dynamic programming.
\newblock In {\em Proc. 35th Annual Symposium on Foundations of Computer
  Science}, pages 770--779, Piscataway, NJ, 1994. IEEE Computer Society Press.

\bibitem{Felsenstein03}
J.~Felsenstein.
\newblock {\em Inferring Phylogenies}.
\newblock Sinauer Assoc., Sunderland, Mass, 2003.

\bibitem{FindenGordon85}
C.~R. Finden and A.~D. Gordon.
\newblock Obtaining common pruned trees.
\newblock {\em J. Classification}, 2(1):225--276, 1985.

\bibitem{KannanWarnowYooseph98}
S.~Kannan, T.~Warnow, and S.~Yooseph.
\newblock Computing the local consensus of trees.
\newblock {\em SIAM J. Comput.}, 27(6):1695--1724, December 1998.

\bibitem{Kemeny59}
J.~G. Kemeny.
\newblock Mathematics without numbers.
\newblock {\em Daedalus}, 88:577--591, 1959.

\bibitem{Maddison89}
W.~P. Maddison.
\newblock Reconstructing character evolution on polytomous cladograms.
\newblock {\em Cladistics}, 5:365–--377, 1989.

\bibitem{McMorrisMeronkNeumann83}
F.~R. McMorris, D.~B. Meronk, and D.~A. Neumann.
\newblock A view of some consensus methods for trees.
\newblock In J.~Felsenstein, editor, {\em Numerical Taxonomy}, pages 122--125.
  Springer-Verlag, 1983.

\bibitem{TreeBASE}
W.~Piel, M.~Sanderson, M.~Donoghue, and M.~Walsh.
\newblock Treebase.
\newblock http://www.treebase.org.
\newblock Last accessed 2 February 2007.

\bibitem{RobinsonFoulds81}
D.~F. Robinson and L.~R. Foulds.
\newblock Comparison of phylogenetic trees.
\newblock {\em Mathematical Biosciences}, 53:131--–147, 1981.

\bibitem{SempleSteel03}
C.~Semple and M.~Steel.
\newblock {\em Phylogenetics}.
\newblock Oxford Lecture Series in Mathematics. Oxford University Press,
  Oxford, 2003.

\bibitem{SnirRao06}
S.~Snir and S.~Rao.
\newblock Using max cut to enhance rooted trees consistency.
\newblock {\em IEEE/ACM Trans. Comput. Biol. Bioinformatics}, 3(4):323--333,
  2006.

\bibitem{SteelPenny93}
M.~Steel and D.~Penny.
\newblock Distributions of tree comparison metrics --- some new results.
\newblock {\em Systematic Biology}, 42(2):126--141, 1993.

\bibitem{Steel89}
M.~A. Steel.
\newblock {\em Distributions on bicoloured evolutionary trees}.
\newblock PhD thesis, Massey University, 1989.

\bibitem{Stissing:et:al:07}
M.~Stissing, C.~N.~S. Pedersen, T.~Mailund, G.~S. Brodal, and R.~Fagerberg.
\newblock Computing the quartet distance between evolutionary trees of bounded
  degree.
\newblock In D.~Sankoff, L.~Wang, and F.~Chin, editors, {\em APBC}, volume~5 of
  {\em Advances in Bioinformatics and Computational Biology}, pages 101--110.
  Imperial College Press, 2007.

\bibitem{StockhamWangWarnow02}
C.~Stockham, L.-S. Wang, and T.~Warnow.
\newblock Statistically based postprocessing of phylogenetic analysis by
  clustering.
\newblock In {\em ISMB}, pages 285--293, 2002.

\bibitem{Vazirani01}
V.~V. Vazirani.
\newblock {\em Approximation Algorithms}.
\newblock Springer-Verlag, Berlin, 2001.

\end{thebibliography}

\end{document}